\documentclass[12pt]{article}

\setcounter{footnote}{2}

\usepackage{amssymb,amsthm,amscd,array}

\let\ssection=\section
\renewcommand{\section}{\setcounter{equation}{0}\ssection}

\setlength{\voffset}{-1.0truecm}

\textwidth=150mm  %125mm
\textheight=220mm %185mm
\parindent=8mm
\evensidemargin=0pt
\oddsidemargin=0pt
%%%\frenchspacing

\newcommand{\bbR}{\mathbb{R}}

\newcommand{\bbC}{\mathbb{C}}

\newcommand{\bone}{{\bf 1}}

\newcommand{\cA}{{\mathcal{A}}}
\newcommand{\cB}{{\mathcal{B}}}
\newcommand{\tB}{{\widetilde{B}}}

\newcommand{\const}{\mathrm{const}}

\newcommand{\cD}{{\mathcal{D}}}

\newcommand{\cH}{\mathcal{H}}
\newcommand{\cM}{{\mathcal{M}}}

\newcommand{\cE}{{\mathcal{E}}}
\newcommand{\Diff}{\mathrm{Diff}}

\newcommand{\Div}{\mathrm{Div}}

\newcommand{\cF}{{\mathcal{F}}}

\newcommand{\rg}{\mathrm{g}}
\newcommand{\brg}{\overline{\mathrm{g}}}
\newcommand{\hrg}{\widehat{\mathrm{g}}}
\newcommand{\trg}{\widetilde{\mathrm{g}}}

\newcommand{\ii}{{\mathsf{i}}}

\newcommand{\Pol}{\mathrm{Pol}}
\newcommand{\cQ}{{\mathcal{Q}}}

\newcommand{\dq}{\dot{q}}
\newcommand{\ddq}{\ddot{q}}

\newcommand{\cS}{{\mathcal{S}}}

\newcommand{\SO}{\mathrm{SO}}

\newcommand{\so}{\mathrm{o}}
\newcommand{\homega}{\widehat{\omega}}
\newcommand{\hOmega}{\widehat{\Omega}}

\newcommand{\Tr}{\mathrm{Tr}}
\newcommand{\trP}{\Tr(P)}

\newcommand{\Vect}{\mathrm{Vect}}
\newcommand{\tu}{\tilde{u}}
\newcommand{\tv}{\tilde{v}}
\newcommand{\vol}{\mathrm{vol}}

\newcommand{\half}{{\frac{1}{2}}}

\newcommand{\la}{{\langle}}
\newcommand{\ra}{{\rangle}}

\newcommand{\beq}{\begin{equation}}  
\newcommand{\eeq}{\end{equation}}

\begin{document}

\baselineskip=16pt

\def\a{\alpha}
\def\b{\beta}
\def\c{\gamma}
\def\d{\delta}
\def\g{\gamma}
\def\om{\omega}
\def\r{\rho}
\def\s{\sigma}
\def\vfi{\varphi}
\def\l{\lambda}
\def\m{\mu}
\def\implies{\Rightarrow}

\oddsidemargin .1truein
\newtheorem{thm}{Theorem}[section]
\newtheorem{lem}[thm]{Lemma}
\newtheorem{cor}[thm]{Corollary}
\newtheorem{pro}[thm]{Proposition}
\newtheorem{ex}[thm]{Example}
\newtheorem{rmk}[thm]{Remark}
\newtheorem{defi}[thm]{Definition}

%%%%%%%%%%%%%%%%%%%%%%%%%%%%%%%%%%%%%%%%%%%%%%%%%%%%%%%%%%%%%%%%%%%%%%%%%%%%%%%%%%
%%%%%%%%%%%%%%%%%%%%%%%%%%%%%%%%%%%%%%% TITLE %%%%%%%%%%%%%%%%%%%%%%%%%%%%%%%%%%%%
%%%%%%%%%%%%%%%%%%%%%%%%%%%%%%%%%%%%%%%%%%%%%%%%%%%%%%%%%%%%%%%%%%%%%%%%%%%%%%%%%%
\title{A new integrable system on the sphere\\ 
and\\ 
conformally equivariant quantization}
%%%%%%%%%%%%%%%%%%%%%%%%%%%%%%%%%%%%%%%%%%%%%%%%%%%%%%%%%%%%%%%%%%%%%%%%%%%%%%%%%%
%%%%%%%%%%%%%%%%%%%%%%%%%%%%%%%%%%%%%%%%%%%%%%%%%%%%%%%%%%%%%%%%%%%%%%%%%%%%%%%%%%
%%%%%%%%%%%%%%%%%%%%%%%%%%%%%%%%%%%%%%%%%%%%%%%%%%%%%%%%%%%%%%%%%%%%%%%%%%%%%%%%%%

\author{C. DUVAL\footnote{mailto:duval@cpt.univ-mrs.fr}\\
Centre de Physique Th\'eorique, CNRS, 
Luminy, Case 907\\ 
F-13288 Marseille Cedex 9 (France)\footnote{ 
UMR 6207 du CNRS associ\'ee aux 
Universit\'es d'Aix-Marseille I et II %\\ 
et Universit\'e du Sud Toulon-Var; %\\
Laboratoire 
affili\'e \`a la FRUMAM-FR2291}
\and
G. VALENT\footnote{mailto:valent@lpthe.jussieu.fr}\\
Laboratoire de Physique Th\'eorique et des Hautes Energies,\\
2, Place Jussieu\\
F-75251 Paris Cedex 5 (France)\footnote{
UMR 7589 du CNRS associ\'ee \`a l'Universit\'e Paris VI}
}

\date{14 September 2010}

\maketitle

\thispagestyle{empty}

\begin{abstract}
Taking full advantage of two independent projectively equivalent metrics on the ellipsoid leading to Liouville integrability of the geo\-desic flow via the well-known Jacobi-Moser system, we disclose a novel integrable system on the sphere $S^n$, namely the \textit{dual} Moser system. The latter falls, along with the Jacobi-Moser and Neumann-Uhlenbeck systems, into the category of (locally) St\"ackel systems. Moreover, it is proved that quantum integrability of both Neumann-Uhlenbeck and dual Moser systems is insured by means of the conformally equivariant quantization procedure.
\end{abstract}

%\vskip4cm
\bigskip
\noindent
Preprint: CPT-P051-2010
%\vskip1cm

\bigskip
\noindent
\textbf{Keywords:} Classical integrability, projectively equivalent metrics, St\"ackel sys\-tems, conformally equivariant quantization, quantum integrability.

\newpage
\tableofcontents

%%%%%%%%%%%%%%%%%%%%%%%%%%%%%%%%%%%%%%%%%%%%%%%%%%%%%%%%%%%%%%%%%%%%%%%%%%%%%%%%%%
%%%%%%%%%%%%%%%%%%%%%%%%%%%%%%%%%%%%%%%%%%%%%%%%%%%%%%%%%%%%%%%%%%%%%%%%%%%%%%%%%%
\section{Introduction}\label{Intro}
%%%%%%%%%%%%%%%%%%%%%%%%%%%%%%%%%%%%%%%%%%%%%%%%%%%%%%%%%%%%%%%%%%%%%%%%%%%%%%%%%%
%%%%%%%%%%%%%%%%%%%%%%%%%%%%%%%%%%%%%%%%%%%%%%%%%%%%%%%%%%%%%%%%%%%%%%%%%%%%%%%%%%

In the wake of the celebrated results of Moser \cite{Mos} concerning the classical integrability of the geodesic flow on the ellipsoid (first proved by Jacobi in the three-dimensional case), and of the Neumann-Uhlenbeck problem on the $n$-dimensional spheres, we will present a (to our knowledge) new integrable system which relies on a preferred confor\-mally flat metric on $S^n$. This integrable system is actually dual (in the sense of projective equivalence  \cite{Tab1,MT}) to the above-mentioned Jacobi-Moser system.

Among the techniques found in the literature, we will mention two different constructs that produce integrable systems together with their Poisson-commuting first integrals:
\begin{itemize}
\item
Bihamiltonian systems initiated by Magri \cite{Mag} and Benenti \cite{Ben}, and further elaborated by, e.g., Ibort, Magri, and Marmo \cite{IMM}, 
%Crampin \cite{Cra}, 
Falqui and Pedroni \cite{FP}. 
\item
Projectively equivalent systems discovered by Levi-Civita \cite{LevCiv}, and geometrical\-ly developed by Tabachnikov \cite{Tab1,Tab2}, Topalov and Matveev \cite{TM}. 
\end{itemize}
The equivalence of these two theories has been proved by Bolsinov and Matveev \cite{BM}.

We have chosen to work using Tabachnikov's approach. With the help of his general construction  \cite{Tab1,OT} providing, e.g., the Jacobi-Moser first integrals, and, adopting a ``dual'' approach, we derive the Poisson-commuting first integrals of the new geodesic flow, which we call the dual Moser system. This enables us to provide a \textit{global} expression for these new first integrals.
%whereas the Bolsinov-Matveev \cite{BM} and Topalov-Matveev \cite{TM} Theorems would be useful to get their \textit{local} expressions.

Our dual Moser system turns out to be locally St\"ackel (in ellipsoidal coordinates); it shows up as a ``mirror image'' of Jacobi-Moser relatively to Neumann-Uhlenbeck (see Table \ref{Table}).

In contradistinction to the Jacobi-Moser system, conformal flatness of the new system is a fundamental input at the classical, and at the quantum level as well.

To deal with quantum integrability of these systems, we will resort to its commonly accepted definition, namely that the quantized first integrals should still be in involution. This, of course, leaves open the choice of an adapted quantization procedure.

It has been shown \cite{DV} that St\"ackel systems, using Carter's quantization pres\-cription \cite{Car}, do remain integrable at the quantum level provided Robertson's condition holds \cite{Rob}. It is therefore worthwhile to study quantum integrabi\-lity making use of a genuine quantization theory that takes into account the conformal geometry underlying the dual Moser system: the conformally equivariant quantization \cite{DLO,DO}. One striking discovery is that the dual Moser system passes the quantum test using the conformally equivariant quantiza\-tion. We note that the same is true for the Neumann-Uhlenbeck system.

%%%%%%%%%%%%%%%%%%%%%%%%%%%%%%%%%%%%%%%%%%%%%%%%%%%%%%%%%%%%%%%%%%%%%%%%%%%%%%%%%%
\subsection{Prolegomena}
%%%%%%%%%%%%%%%%%%%%%%%%%%%%%%%%%%%%%%%%%%%%%%%%%%%%%%%%%%%%%%%%%%%%%%%%%%%%%%%%%%

Let us recall that two \textit{independent} metrics on a Riemannian manifold are said to be projectively equivalent if they have the same \textit{unparametrized} geodesics. A shown in, e.g., \cite{TM,BM} this entails that the associated geodesic flows are Liouville-integrable. We will resort to this pathbreaking result in the specific, historical, yet fundamental example of the geodesic flow of the ellipsoid.

The key point of our approach to Liouville/quantum integrability of the geodesic flow of the ellipsoid, $\cE^n$, lies in the fact that $\cE^n=\{Q\in\bbR^{n+1}\big\vert{}\sum_{\alpha=0}^{n}{(Q^{\alpha})^2/a_{\alpha}}=1\}$ with metric $\rg_1=\sum_{\alpha=0}^{n}{(dQ^{\alpha})^2}\big\vert_{\cE^n}$ admits, as a matter of fact, another independent, and projectively equivalent Riemannian metric, $\rg_2$, that we will introduce shortly. 

Put $Q^{\alpha}=q^{\alpha}\sqrt{a_{\alpha}}$ for all $\alpha=0,\ldots,n$, so that we have $q\cdot{}q=\sum_{\alpha=0}^{n}{(q^{\alpha})^2}=1$. The mapping $Q\mapsto{}q:\cE^n\to{}S^n$ is a diffeomorphism and the metric on~$S^n$, induced from the Euclidean ambient metric, reads now\footnote{To avoid clutter, we will oftentimes write $q_\alpha\equiv{}q^\alpha$, as no confusion occurs in Euclidean space.} 
\begin{equation}
\rg_1=\rg\big\vert_{S^n}
\qquad
\mathrm{where}
\qquad
\rg=\sum_{\alpha=0}^{n}{a_{\alpha} dq_{\alpha}^2}
\label{g1}
\end{equation}
where $\rg$ is, at the moment, viewed as a (flat) metric on $\bbR^{n+1}$.
The equations of the geodesics of the ellipsoid are well-known, and retain the form
\begin{equation}
\ddq^{\alpha}+\Gamma^{\alpha}_{\beta\gamma}\dq^{\beta}\dq^{\gamma}=0,
\label{geod1}
\end{equation}
for all $\alpha=0,\ldots,n$, where the Christoffel symbols of $(S^n,\rg_1)$ are given by 
\begin{equation}
\Gamma^{\alpha}_{\beta\gamma}=\frac{q^{\alpha}}{a_{\alpha}}\frac{\delta_{\beta\gamma}}{\sum{q_\lambda^2/a_{\lambda}}}
\label{Christoffel}
\end{equation}
for all $\alpha,\beta,\gamma=0,\ldots,n$. (See also Equation (\ref{XL1}) yielding the associated geodesic spray.) We note that the constraint equation 
\begin{equation}
\dq^2+q\cdot\ddq=0
\label{theConstraint}
\end{equation}
%$\sum_{\alpha=0}^n{\dq_\alpha^2+q_\alpha\ddq^\alpha}=0$ 
is indeed satisfied by (\ref{geod1}).

Following \cite{Tab1,TM}, let us define, on $S^n$, the conformally flat metric
\begin{equation}
\rg_2=\brg\big\vert_{S^n}
\qquad
\mathrm{with}
\qquad
\brg=\frac{1}{\sum{q_{\beta}^2/a_{\beta}}}\sum_{\alpha=0}^{n}{dq_{\alpha}^2} 
\label{g2}
\end{equation}
where the conformally flat metric $\brg$ is defined on $\bbR^{n+1}\!\setminus\!\{0\}$.

\goodbreak

The equations of the geodesics for the latter metric are readily found by using the expression of the Christoffel symbols of $\brg$, namely
\begin{equation}
\overline{\Gamma}_{\beta\gamma}^{\alpha}
=
\frac{1}{\sum{q_{\lambda}^2/a_{\lambda}}}\left[
\delta_{\beta\gamma}\frac{q^{\alpha}}{a_{\alpha}}
-\delta_{\beta}^{\alpha}\frac{q_{\gamma}}{a_{\gamma}}
-\delta_{\gamma}^{\alpha}\frac{q_{\beta}}{a_{\beta}}\right]
\label{Gammabar}
\end{equation}
for all $\alpha,\beta,\gamma=0\ldots,n$. One obtains the equations of the geodesic of $(\bbR^{n+1}\!\setminus\!\{0\},\brg)$, viz.,
\begin{equation}
\ddq^{\alpha}+\frac{q^{\alpha}}{a_{\alpha}}\frac{\sum{\dq_{\beta}^2}}{\sum{q_{\gamma}^2/a_{\gamma}}}=
2\dq^{\alpha}\,\frac{\sum{q_\beta\dq^{\beta}/a_{\beta}}}{\sum{q_{\gamma}^2/a_{\gamma}}}
\label{geod2}
\end{equation}
for all $\alpha=0,\ldots,n$. The latter, suitably restrained to $S^n$, is precisely Equation~(\ref{geod1}) with a different parametrization (again, the constraint (\ref{theConstraint}) is duly preserved by Equation (\ref{geod2})); the metrics $\rg_1$ and $\rg_2$ are projectively equivalent. 

We will put this fact in broader perspective within Section \ref{NewMoser}.

\begin{rmk}
{\rm
The $\Gamma_{\beta\gamma}^{\alpha}$ and $\overline{\Gamma}_{\beta\gamma}^{\alpha}$, given by (\ref{Christoffel}) and (\ref{Gammabar}) respectively, may be viewed as the components of two projectively equivalent linear connections $\nabla$ and $\overline{\nabla}$ on $\bbR^{n+1}\!\setminus\!\{0\}$. While $\overline{\nabla}$ is clearly the Levi-Civita connection of the metric $\brg$, the connection $\nabla$ is, instead, a \textit{Newton-Cartan} connection (see, e.g., \cite{Kun}). This means that $\nabla$ is a symmetric linear connection that parallel-transports a (spacelike) contravariant symmetric $2$-tensor $\gamma=\sum_{\alpha,\beta=0}^n{\gamma^{\alpha\beta}\partial_{q^\alpha}\otimes\partial_{q^\beta}}$ and a (timelike) $1$-form $\theta=\sum_{\alpha=0}^n{\theta_\alpha{}dq^\alpha}$ spanning $\ker(\gamma)$. Here, the degenerate ``metric'' is given by
\begin{equation}
\gamma^{\alpha\beta}
=
\frac{1}{a_\alpha}\delta^{\alpha\beta}
-\frac{q^\alpha{}q^\beta}{a_\alpha{}a_\beta}\frac{1}{\sum{q_{\lambda}^2/a_{\lambda}}}
\qquad
\hbox{and}
\qquad
\theta_\alpha=q_\alpha.
\label{NC}
\end{equation}
}
\end{rmk}

%%%%%%%%%%%%%%%%%%%%%%%%%%%%%%%%%%%%%%%%%%%%%%%%%%%%%%%%%%%%%%%%%%%%%%%%%%%%%%%%%%
\subsection{Main results}
%%%%%%%%%%%%%%%%%%%%%%%%%%%%%%%%%%%%%%%%%%%%%%%%%%%%%%%%%%%%%%%%%%%%%%%%%%%%%%%%%%

The main results of our article can be summarized as follows.

\begin{thm}
The geodesic flow on $(T^*S^n,\sum_\alpha{dp_\alpha\wedge{}dq^\alpha})$ above the conformally flat manifold $(S^n,\rg_2)$ is Liouville-integrable, and admits the following set of Poisson-commuting first integrals
\begin{equation}
F_\alpha
=
q_\alpha^2\sum_{\beta=0}^n{a_\beta{}p_\beta^2}
+
\sum_{\beta\neq\alpha}{\frac{(a_\alpha{}p_\alpha{}q_\beta-a_\beta{}p_\beta{}q_\alpha)^2}{a_\alpha-a_\beta}}
\label{FalphaMainResult}
\end{equation}
with $\alpha=0,\ldots,n$. We will call the system $(F_0,\ldots,F_n)$ the dual Moser system.
\end{thm}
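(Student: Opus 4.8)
The plan is to realise the whole computation on the ambient phase space $T^*\bbR^{n+1}$ equipped with the canonical bracket $\{q^\alpha,p_\beta\}=\delta^\alpha_\beta$, and then to descend to $T^*S^n$. The first observation is that $S^n$ is \emph{totally geodesic} for $\brg$: the identity $\dot q^2+q\cdot\ddot q=2\,(q\cdot\dot q)\,(\sum q_\beta\dot q^\beta/a_\beta)/(\sum q_\gamma^2/a_\gamma)$, read off from (\ref{geod2}), shows that $q\cdot\dot q$ obeys a linear homogeneous equation along the flow, so that the locus $q\cdot\dot q=0$ is invariant and $q\cdot q$ is then conserved. Consequently the geodesic Hamiltonian of $(S^n,\rg_2)$ is the restriction to the constraint surface $\Sigma=\{q\cdot q=1,\ q\cdot p=0\}$ of the ambient cometric Hamiltonian $H=\half\,(\sum_\lambda q_\lambda^2/a_\lambda)\sum_\alpha p_\alpha^2$. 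Each $F_\alpha$ in (\ref{FalphaMainResult}) is a quadratic form $\sum_{\beta\gamma}K_\alpha^{\beta\gamma}p_\beta p_\gamma$ in the momenta, so the theorem amounts to three statements: $\{H,F_\alpha\}=0$ on $\Sigma$ (each $K_\alpha$ is a Killing $2$-tensor of $\rg_2$), $\{F_\alpha,F_\beta\}=0$ on $\Sigma$, and functional independence of $n$ of the $F_\alpha$. Throughout one must check that these brackets close on $\Sigma$, i.e. that the $F_\alpha$ commute with the constraints $q\cdot q-1$ and $q\cdot p$ modulo the constraints themselves, equivalently that the computation may be carried out with the Dirac bracket.

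For existence and involution I would lean on the projective-equivalence machinery invoked in the Prolegomena rather than attack the brackets head-on. Since $\rg_1$ and $\rg_2$ are projectively equivalent, the Topalov--Matveev construction attaches to the pair a one-parameter family of functions on $T^*S^n$ that are in involution for the geodesic flow of \emph{either} metric, the two families being interchanged when the roles of $\rg_1$ and $\rg_2$ are swapped. The Jacobi-Moser integrals arise from the choice singling out $\rg_1$; taking instead the ``dual'' choice that singles out $\rg_2$ produces, a priori, a commuting family of first integrals for the $\rg_2$-flow. The substance of the theorem is then to evaluate this family explicitly and to recognise it as (\ref{FalphaMainResult}). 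Concretely, I would package the $F_\alpha$ into the generating function $R(z)=\sum_\alpha F_\alpha/(z-a_\alpha)$ and check the partial-fraction identity $R(z)=(\sum_\alpha q_\alpha^2/(z-a_\alpha))\,(\sum_\beta a_\beta p_\beta^2)+\half\sum_{\alpha\neq\beta}(a_\alpha p_\alpha q_\beta-a_\beta p_\beta q_\alpha)^2/((z-a_\alpha)(z-a_\beta))$, whose residues at $z=a_\alpha$ reproduce (\ref{FalphaMainResult}); matching this against the residues of the Topalov--Matveev family is the combinatorial heart of the identification and is what explains the antisymmetric numerators and the poles at $a_\alpha=a_\beta$. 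With this route, $\{H,F_\alpha\}=0$ and $\{F_\alpha,F_\beta\}=0$ are inherited from the general theorem.

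As a self-contained confirmation of involution I would, in parallel, pass to ellipsoidal coordinates adapted to the confocal system $\{a_\alpha\}$; there the Hamilton-Jacobi equation separates and the system takes St\"ackel form, so that St\"ackel's theorem yields $\{F_\alpha,F_\beta\}=0$ directly. Since involution is the vanishing of smooth functions on a dense open set, the local (chartwise) St\"ackel statement propagates to all of $T^*S^n$ by continuity, which also justifies the ``locally St\"ackel'' assertion of the Introduction. It remains to settle the Liouville count: involution on the $2n$-dimensional $T^*S^n$ forces at least one relation among the $n+1$ functions $F_0,\dots,F_n$, and I would verify that exactly $n$ of them are functionally independent on a dense open subset by computing the rank of the Jacobian $(\partial F_\alpha/\partial(q,p))$ at a generic point; a complete involutive system of size $n$ then makes $H$ automatically a function of the $F_\alpha$, and Liouville integrability follows. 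I expect the genuine obstacle to be the mutual involution $\{F_\alpha,F_\beta\}=0$ in Cartesian form: the poles at $a_\alpha=a_\beta$ make a direct bracket computation error-prone, which is exactly why I would route involution through either the projective-equivalence theorem or the St\"ackel separation, using the explicit formula (\ref{FalphaMainResult}) only for the independence count and for the global character of the integrals.
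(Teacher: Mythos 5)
Your plan is essentially correct, and its derivation half coincides with the paper's: your generating function $R(z)=\sum_\alpha F_\alpha/(z-a_\alpha)$ is precisely the paper's $g_t=\sum_\alpha F_\alpha/(a_\alpha+t)$ of (\ref{DefFalpha}), obtained there by running Tabachnikov's construction with the roles of $\rg_1$ and $\rg_2$ exchanged (Proposition \ref{PoissonCommutingFalpha}). Where you genuinely diverge is the proof of involution. You delegate $\{F_\alpha,F_\beta\}=0$ to the general projective-equivalence theory (or to St\"ackel separation), while the paper computes the brackets directly on the \emph{unconstrained} $(T^*\bbR^{n+1},\sum_\alpha dp_\alpha\wedge dq^\alpha)$ using the split $F_\alpha=\cA_\alpha+\cB_\alpha$, with $\cA_\alpha=q_\alpha^2 J$, $J=\sum_\beta a_\beta p_\beta^2$, and $\cB_\alpha=\sum_{\beta\neq\alpha}\cM_{\alpha\beta}^2/(a_\alpha-a_\beta)$, $\cM_{\alpha\beta}=a_\alpha p_\alpha q_\beta-a_\beta p_\beta q_\alpha$; the three identities $\{\cA_\alpha,\cA_\beta\}=-4Jq_\alpha q_\beta\cM_{\alpha\beta}$, $\{\cA_\alpha,\cB_\beta\}=4Ja_\alpha q_\alpha q_\beta\cM_{\alpha\beta}/(a_\alpha-a_\beta)$, $\{\cB_\alpha,\cB_\beta\}=0$ then cancel pairwise (Proposition \ref{ProPBFalphaFbeta}). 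This is far less error-prone than you fear --- the split tames the poles at $a_\alpha=a_\beta$ --- and it buys a stronger statement than your route can give: the $F_\alpha$ commute \emph{identically} on all of $T^*\bbR^{n+1}$, not merely on the constraint surface, whereas $\{H,F_\alpha\}$ vanishes only on it, see (\ref{PBHFalpha}), which incidentally confirms your totally-geodesic observation. Your route is more conceptual but buries comparable labor in the residue-matching identification of the Topalov--Matveev family with (\ref{FalphaMainResult}), which you leave unexecuted. On independence you and the paper converge: the paper exhibits the St\"ackel matrix and inverts it explicitly (Theorem \ref{ProJacobi}).

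One point needs correcting. You assert that one must check that the $F_\alpha$ commute with the constraints ``modulo the constraints themselves, equivalently that the computation may be carried out with the Dirac bracket.'' These two conditions are not equivalent, and the first one actually \emph{fails} here: the paper finds $\{Z_1,F_\alpha\}=-4a_\alpha p_\alpha q_\alpha(1+Z_1)$, which does not vanish on the constraint surface. What makes the reduction work (Proposition \ref{ProDiracBracket}) is solely the identity $\{Z_2,F_\alpha\}=0$ --- each $F_\alpha$ is invariant under the fiberwise dilations generated by $Z_2=p\cdot q$ --- which annihilates every correction term in the Dirac bracket (\ref{Diracbracket}) irrespective of $\{Z_1,F_\alpha\}$. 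Since your two main routes work intrinsically on $T^*S^n$, this slip is not fatal, but the ambient descent as you literally stated it would have led you to a check that is false.
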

This theorem follows directly from Propositions \ref{PoissonCommutingFalpha}, \ref{ProPBFalphaFbeta}, and \ref{ProDiracBracket}.

\goodbreak

Let us call $\cD_{\half,\half}(S^n)$ the space of differential operators on $S^n$ with arguments and values in the space of $\half$-densities of the sphere $S^n$; we, likewise, denote by $\Pol(T^*S^n)$ the space of fiberwise polynomial functions on $T^*M$. It has been proved \cite{DLO} that there exists a unique invertible linear mapping $\cQ_{\half,\half}:\Pol(T^*S^n)\to\cD_{\half,\half}(S^n)$ that (i) intertwines the action of the conformal group $\mathrm{O}(n+1,1)$ and (ii) preserves the principal symbol: we call it the \textit{conformally equivariant quantization} mapping.
\begin{thm}
Quantum integrability of the dual Moser system holds true in terms of the conformally equivariant quantization $\cQ_{\half,\half}$.
\end{thm}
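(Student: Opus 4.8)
The statement to prove is that the quantized first integrals $\cQ_{\half,\half}(F_0),\ldots,\cQ_{\half,\half}(F_n)$ pairwise commute as operators in $\cD_{\half,\half}(S^n)$. Each $F_\alpha$ in (\ref{FalphaMainResult}) is fiberwise homogeneous of degree two, so I would write $F_\alpha=K_\alpha^{\beta\gamma}p_\beta p_\gamma$ with $K_\alpha$ a symmetric contravariant $2$-tensor — a Killing tensor of $(S^n,\rg_2)$. The plan is to compute the second-order operators $\cQ_{\half,\half}(F_\alpha)$ from the explicit formula for the conformally equivariant quantization on quadratic symbols given in \cite{DLO,DO}, and then to show by direct computation that $[\cQ_{\half,\half}(F_\alpha),\cQ_{\half,\half}(F_\beta)]=0$ for all $\alpha,\beta$.

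First I would record the image of a pure quadratic symbol. Equivariance under $\Or(n+1,1)$ together with preservation of the principal symbol pins $\cQ_{\half,\half}(K_\alpha^{\beta\gamma}p_\beta p_\gamma)$ down to a formally self-adjoint second-order operator whose leading part is $K_\alpha^{\beta\gamma}\nabla_\beta\nabla_\gamma$ and whose subleading terms are the natural contractions $(\nabla_\gamma K_\alpha^{\beta\gamma})\nabla_\beta$, $\nabla_\beta\nabla_\gamma K_\alpha^{\beta\gamma}$, $\Ric_{\beta\gamma}K_\alpha^{\beta\gamma}$, and $R\,\brg_{\beta\gamma}K_\alpha^{\beta\gamma}$, with universal coefficients fixed by $n$ and the half-density weights. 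Substituting the conformally flat metric $\brg$ of (\ref{g2}), together with the tensors $K_\alpha$ read off from (\ref{FalphaMainResult}) and their divergences, makes these operators fully explicit.

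Now for the commutators. Since the $F_\alpha$ Poisson-commute, the principal (third-order) symbol of $[\cQ_{\half,\half}(F_\alpha),\cQ_{\half,\half}(F_\beta)]$ is proportional to $\{F_\alpha,F_\beta\}=0$; hence the commutator is an operator of order at most two. Moreover, because $\cQ_{\half,\half}$ intertwines the $\Or(n+1,1)$-action, this remainder is itself conformally equivariant, so it is constrained to be a natural expression built from $K_\alpha$, $K_\beta$, and the curvature of $\brg$. The task thus reduces to verifying that the surviving second-, first-, and zeroth-order parts vanish identically.

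The hard part is exactly this last verification: classical integrability controls only the leading symbol, and for a naive ordering the surviving lower-order terms — the quantum anomaly — would generically fail to cancel. The content of the theorem is that the conformally equivariant quantization supplies precisely the curvature-dependent corrections needed, and the decisive geometric input is the conformal flatness of $\brg$ (vanishing of its Weyl tensor): it collapses the natural obstruction tensors and aligns the $\Ric$- and $R$-corrections with the $\nabla K$ terms thrown off by the commutator. I would carry out this cancellation by inserting the explicit $\Ric$ and $R$ of the conformally flat metric (\ref{g2}) and the divergences $\nabla_\gamma K_\alpha^{\beta\gamma}$ of the St\"ackel tensors, checking term by term that the order-$\le 2$ remainder vanishes; as a cross-check one may compare with the Carter-quantization outcome of \cite{DV,Car,Rob}, the discrepancy being accounted for entirely by the conformal corrections. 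The very same computation, carried out with the round-sphere data in place of $\rg_2$, disposes of the Neumann-Uhlenbeck system.
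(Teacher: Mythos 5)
Your overall strategy coincides in outline with the paper's: use the explicit form of $\cQ_{\half,\half}$ on quadratic symbols, observe that Poisson commutation kills the top-order symbol of $[\cQ(F_\alpha),\cQ(F_\beta)]$, and then show the order-$\le 2$ remainder vanishes. The gap is that your proposal stops exactly where the theorem begins: "checking term by term that the remainder vanishes" \emph{is} the entire mathematical content, and you neither perform that check nor isolate the structure that makes it tractable. The paper's proof consists of three concrete steps you are missing: (i) a structure theorem for the commutator (Proposition \ref{DualMoserConfEquivQuantInt}), $[\cQ(P),\cQ(Q)]=\ii\,\cQ(A_{P,Q}+V_{P,Q})$, splitting the anomaly into a first-order part $A_{P,Q}$ built from the tensor $B_{P,Q}$ of (\ref{Btensor}) and a part $V_{P,Q}$ of (\ref{VPQ}) coming from the scalar corrections $f(P)$; (ii) the reduction to separating (ellipsoidal) coordinates, where all the Killing tensors $I_k$ are simultaneously diagonal, so that $B_{I_k,I_\ell}$ collapses to Ricci contractions and vanishes precisely when the Ricci tensor is diagonal in those coordinates (the Robertson condition, Proposition \ref{RobertsonCond}) — which requires the curvature computation of Lemma \ref{LemmaDualMoser}; (iii) an explicit evaluation of $f(I_k)$, in which the particular coefficients (\ref{c1c2c3}) of the conformally equivariant quantization produce the cancellations $c_4=c_5=c_6=0$, followed by symmetric-function identities showing $V_{I_k,I_\ell}=0$. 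None of this is automatic, and your proposal contains no substitute for it.

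Two of your guiding claims are moreover unreliable. First, conformal flatness is \emph{not} what "collapses the obstruction tensors and aligns the curvature corrections": the curvature obstruction $A_{P,Q}$ is killed by the Robertson condition alone, and that condition holds equally for the Jacobi-Moser system, for which the paper shows (Section \ref{QuantJacobiMoser}) that the conformally equivariant quantization \emph{fails} to preserve integrability — the scalar anomaly $V_{I_1,I_2}$ does not vanish there, even though Carter's prescription works. So the vanishing of $V_{I_k,I_\ell}$ for the dual Moser system is a delicate, system-specific computation, not a consequence of conformal flatness; a proof organized around your heuristic could not be completed as described. Second, you quantize with respect to $\rg_2$, i.e.\ the round conformal class; the paper's proof instead uses the St\"ackel metric $\rg_i=\trg_i/x^i$ of (\ref{MetricDualMoser}) associated with $I_1$, which is \emph{not} conformal to $\rg_2$ (the ratio of diagonal components depends on $i$), and whose conformal flatness is itself a nontrivial lemma. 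Your computation, even if carried through, would therefore concern a different quantization map, and nothing in your argument guarantees that the cancellation you need would occur for it.
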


This last result stems from Theorem \ref{ThmDO}, Propositions \ref{DualMoserConfEquivQuantInt}, and \ref{ProConfQ}.

%%%%%%%%%%%%%%%%%%%%%%%%%%%%%%%%%%%%%%%%%%%%%%%%%%%%%%%%%%%%%%%%%%%%%%%%%%%%%%%%%%
\subsection{Plan of the article}
%%%%%%%%%%%%%%%%%%%%%%%%%%%%%%%%%%%%%%%%%%%%%%%%%%%%%%%%%%%%%%%%%%%%%%%%%%%%%%%%%%
The paper is organized as follows.

Section \ref{NewMoser} gives us the opportunity to introduce three distinguished classically integrable systems on the sphere, namely, the Jacobi-Moser system, its dual counterpart, and the Neumann-Uhlenbeck system. The construction of the set of mutual\-ly Poisson-commuting first integrals is reviewed and specialized to the case of the dual Moser system. The resulting system is shown to be St\"ackel, the ellipsoidal co\-ordinates being the separating ones.

In Section \ref{QuantInt} we address the quantum integrability issue of these systems, in terms of the conformally equivariant quantization. We prove that the Neumann-Uhlenbeck and the dual Moser systems are, indeed, quantum integrable using this quantization method, well adapted to the conformal flatness of configuration space.

Section \ref{Conclusion}, provides a conclusion to the present article, and gathers some perspectives for future work.

%%%%%%%%%%%%%%%%%%%%%%%%%%%%%%%%%%%%%%%%%%%%%%%%%%%%%%%%%%%%%%%%%%%%%%%%%%%%%%%%%%
%%%%%%%%%%%%%%%%%%%%%%%%%%%%%%%%%%%%%%%%%%%%%%%%%%%%%%%%%%%%%%%%%%%%%%%%%%%%%%%%%%
\section{A novel integrable system on the sphere: the dual Moser system}
\label{NewMoser}
%%%%%%%%%%%%%%%%%%%%%%%%%%%%%%%%%%%%%%%%%%%%%%%%%%%%%%%%%%%%%%%%%%%%%%%%%%%%%%%%%%
%%%%%%%%%%%%%%%%%%%%%%%%%%%%%%%%%%%%%%%%%%%%%%%%%%%%%%%%%%%%%%%%%%%%%%%%%%%%%%%%%%

%%%%%%%%%%%%%%%%%%%%%%%%%%%%%%%%%%%%%%%%%%%%%%%%%%%%%%%%%%%%%%%%%%%%%%%%%%%%%%%%%%
\subsection{Projectively equivalent metrics and conservation laws}\label{biHam}
%%%%%%%%%%%%%%%%%%%%%%%%%%%%%%%%%%%%%%%%%%%%%%%%%%%%%%%%%%%%%%%%%%%%%%%%%%%%%%%%%%

Let us recall, almost \textit{verbatim}, Tabachnikov's construction \cite{Tab1} of a maximal set of independent Poisson-commuting first integrals for a special Liouville-integrable system, namely a bi-Hamiltonian system associated with two projectively equivalent metrics, $\rg_1$ and $\rg_2$, on a configuration manifold $M$.\footnote{The formalism can be easily extended to the case of Finsler structures \cite{Tab1}; here, we will not need such a generality.} See also \cite{MT,BM} for an alternative construction.

\goodbreak

We start with two Riemannian manifolds $(M,\rg_1)$ and $(M,\rg_2)$ of dimension~$n$. 
The tangent bundle $TM$ is endowed with two distinguished $1$-forms $\lambda_1$ and $\lambda_2$, namely $\lambda_N=\rg^*_N\theta$, where~$\theta$ is the canonical $1$-form of $T^*M$, and $\rg_N:TM\to{}T^*M$ is viewed as a bundle isomorphism. We then write, locally, $\lambda_N=\rg^N_{ij}u^idx^j$ where $\rg_N=\rg^N_{ij}(x)dx^i\otimes{}dx^j$ for $N=1,2$. Likewise, the Lagrangian functions to consider are the fiberwise quadratic polynomials $L_N=\half\rg^N_{ij}(x)u^iu^j$. Denote by $\omega_N=d\lambda_N$ the corresponding symplectic $2$-forms of $TM$, and also by $X_N=X_{L_N}$ the associated geodesic sprays. We have 
\begin{equation}
\lambda_N(X_N)=2L_N
\qquad
\hbox{and}
\qquad
\omega_N(X_N)=-dL_N
\label{lambdaandomega}
\end{equation}
for all $N=1,2$. The (maximal) integral curves of the vector fields $X_N$ on~$TM$ project onto configuration space as the geodesics of $(M,\rg_N)$.

\goodbreak

Introduce then the diffeomorphism $\phi:TM\to{}TM$ defined by $\phi(x,u)=(x,\tu)$ where $\tu=u\sqrt{L_1(x,u)/L_2(x,u)}$.\footnote{Note that $\phi$ is the identity on the zero section of $TM$.} This diffeomorphism is, indeed, designed to relate the two Lagrangians, viz.,
\begin{equation}
L_1=\phi^*L_2.
\label{H1=phistarH2}
\end{equation}
Clearly, the two metrics $\rg_1$ and $\rg_2$ have the same unparametrized geodesics (we write~$\rg_1\sim\rg_2$) iff
\begin{equation}
\phi_*(X_1)\wedge{}X_2=0
\label{phistarX1wedgeX2=0}
\end{equation}
i.e., iff the the push-forward $\phi_*(X_1)$ and $X_2$ are functionally dependent.

The method, to obtain a generating function for the conserved quantities in involution, consists then in singling out, apart from $\omega_1$, a preferred $X_1$-invariant $2$-form constructed in terms of $\omega_2$. 
\begin{pro}\label{ProLX1phistaromegaprime2}
Suppose that $\rg_1\sim\rg_2$, and define $\omega'_2=d(L_2^{-\half}\lambda_2)$, then 
\begin{equation}
L_{X_1}(\phi^*\omega'_2)=0.
\label{LX1omegaprime2=0}
\end{equation}
\end{pro}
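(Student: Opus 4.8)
The plan is to exploit exactness together with Cartan's magic formula, reducing everything to a single interior-product computation. Since $\omega'_2=d(L_2^{-\half}\lambda_2)$ is exact, its pullback $\phi^*\omega'_2=d\big(\phi^*(L_2^{-\half}\lambda_2)\big)$ is closed, so $\iota_{X_1}d(\phi^*\omega'_2)=0$ and Cartan's formula collapses to
\begin{equation}
L_{X_1}(\phi^*\omega'_2)=d\big(\iota_{X_1}(\phi^*\omega'_2)\big).
\end{equation}
Hence it suffices to prove that the $1$-form $\iota_{X_1}(\phi^*\omega'_2)$ is closed; in fact I expect to show it vanishes identically, which is stronger and immediately gives the claim.

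To compute $\iota_{X_1}(\phi^*\omega'_2)$ I would transport the contraction through the diffeomorphism via the standard identity $\iota_{X_1}(\phi^*\omega'_2)=\phi^*\big(\iota_{\phi_*X_1}\omega'_2\big)$, valid because $\phi$ is a diffeomorphism. Now the projective-equivalence hypothesis $\rg_1\sim\rg_2$ is exactly condition (\ref{phistarX1wedgeX2=0}), i.e.\ $\phi_*(X_1)$ and $X_2$ are functionally dependent, so $\phi_*X_1=f\,X_2$ for some function $f$ on $TM$ (off the zero section). Therefore $\iota_{\phi_*X_1}\omega'_2=f\,\iota_{X_2}\omega'_2$, and the whole problem reduces to establishing that $X_2$ lies in the kernel of $\omega'_2$.

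The key computational step is thus to verify $\iota_{X_2}\omega'_2=0$; this is where the weight $-\half$ in the definition of $\omega'_2$ does its work. Writing $\omega'_2=d(L_2^{-\half})\wedge\lambda_2+L_2^{-\half}\omega_2$ and contracting with $X_2$, I would use the three structural facts already available: $\lambda_2(X_2)=2L_2$ and $\iota_{X_2}\omega_2=-dL_2$ from (\ref{lambdaandomega}), together with energy conservation $X_2(L_2)=\iota_{X_2}\iota_{X_2}\omega_2=0$. The latter also forces $X_2(L_2^{-\half})=0$, so the first term contributes $-\,\lambda_2(X_2)\,d(L_2^{-\half})=-2L_2\,d(L_2^{-\half})=L_2^{-\half}dL_2$, while the second contributes $L_2^{-\half}\iota_{X_2}\omega_2=-L_2^{-\half}dL_2$; the two cancel exactly, giving $\iota_{X_2}\omega'_2=0$.

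Combining these, $\iota_{X_1}(\phi^*\omega'_2)=\phi^*(f\,\iota_{X_2}\omega'_2)=0$, and then $L_{X_1}(\phi^*\omega'_2)=d(0)=0$, as required. I do not anticipate a genuine obstacle here: the argument is a clean assembly of the Cartan identity, the diffeomorphism-equivariance of the interior product, and the proportionality $\phi_*X_1\propto X_2$. The only point demanding care is the cancellation $\iota_{X_2}\omega'_2=0$, whose success hinges on the precise power $L_2^{-\half}$; any other weight would leave a nonzero residual proportional to $dL_2$, so this computation is what pins down the rescaling and should be carried out explicitly. \cqfd
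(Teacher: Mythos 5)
Your proposal is correct and is essentially the paper's own argument: both reduce the Lie derivative to $d\bigl(\iota_{X_1}\phi^*\omega'_2\bigr)$ via closedness of the exact form $\phi^*\omega'_2$, transport the contraction through $\phi$ so that projective equivalence (\ref{phistarX1wedgeX2=0}) replaces $\phi_*X_1$ by a multiple of $X_2$, and then kill the contraction $\iota_{X_2}\omega'_2$ using $\lambda_2(X_2)=2L_2$ and $\omega_2(X_2)=-dL_2$ from (\ref{lambdaandomega}). Your write-up is in fact slightly more explicit than the paper's, since you isolate the energy-conservation step $X_2(L_2)=0$ and note that the exponent $-\half$ is exactly what makes the two terms cancel.
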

\begin{proof}
We have $L_{X_1}(\phi^*\omega'_2)=d((\phi^*\omega'_2)(X_1))=\phi^*d(\omega'_2(\phi_*X_1))=\phi^*d(h\,\omega'_2(X_2))$, for some function $h$ (see (\ref{phistarX1wedgeX2=0})). The definition of $\omega'_2$ then readily yields $L_{X_1}(\phi^*\omega'_2)=\phi^*d(h\,d(L_2^{-\half}\lambda_2)(X_2))=\phi^*d(h(-\half{}L_2^{-\frac{3}{2}}(dL_2\wedge\lambda_2)(X_2)+L_2^{-\half}\omega_2(X_2)))=0$ in view of~(\ref{lambdaandomega}).
\end{proof}
So, the sought extra $X_1$-invariant $2$-form is $\phi^*\omega'_2$. (Note that $L_{X_1}(\phi^*\omega_2)\neq0$.) It enters naturally into the definition of a ``generating function'' $f_t\in{}C^\infty(TM,\bbR)$ of first-integrals given below.
\begin{cor} 
The function
\begin{equation}
f_t=\frac{(t^{-1}\omega_1+\phi^*\omega'_2)^n}{\omega_1^n}
\label{ft}
\end{equation}
is $X_1$-invariant whenever $t\neq0$.
\end{cor}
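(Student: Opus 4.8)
The plan is to deduce the $X_1$-invariance of $f_t$ directly from the two facts already established: that $\omega_1$ is $X_1$-invariant (being $\omega_1 = d\lambda_1$ with $\omega_1(X_1) = -dL_1$, whence $L_{X_1}\omega_1 = d(\omega_1(X_1)) + \iota_{X_1}d\omega_1 = -ddL_1 = 0$ by Cartan's formula), and that $\phi^*\omega'_2$ is $X_1$-invariant by Proposition \ref{ProLX1phistaromegaprime2}. The numerator and denominator of $f_t$ are both top-degree forms on the $2n$-dimensional manifold $TM$, so their ratio is a well-defined function wherever $\omega_1^n \neq 0$, i.e.\ wherever $\omega_1$ is nondegenerate; this is exactly the locus where $\rg_1$ is Riemannian, so $f_t$ is a genuine smooth function there.

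First I would record that the Lie derivative $L_{X_1}$ acts as a derivation on the exterior algebra and commutes with the operation $\alpha \mapsto \alpha^n$ of taking wedge powers. Hence, writing $\Omega_t = t^{-1}\omega_1 + \phi^*\omega'_2$, I have
\begin{equation}
L_{X_1}(\Omega_t) = t^{-1}L_{X_1}(\omega_1) + L_{X_1}(\phi^*\omega'_2) = 0,
\label{LX1Omegat}
\end{equation}
using linearity of $L_{X_1}$ together with the two invariance statements above (here $t \neq 0$ is needed so that $t^{-1}$ makes sense). Consequently $L_{X_1}(\Omega_t^n) = n\,\Omega_t^{n-1}\wedge L_{X_1}(\Omega_t) = 0$ and, identically, $L_{X_1}(\omega_1^n) = 0$.

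Finally I would compute the Lie derivative of the quotient. Since both $\Omega_t^n$ and $\omega_1^n$ are top forms, we may write $\Omega_t^n = f_t\,\omega_1^n$ by definition of $f_t$. Applying $L_{X_1}$ and using the Leibniz rule gives
\begin{equation}
0 = L_{X_1}(\Omega_t^n) = L_{X_1}(f_t\,\omega_1^n) = (L_{X_1}f_t)\,\omega_1^n + f_t\,L_{X_1}(\omega_1^n) = (X_1 f_t)\,\omega_1^n,
\label{quotientcomputation}
\end{equation}
where $L_{X_1}f_t = X_1 f_t$ on functions and the last term vanishes by the computation above. As $\omega_1^n \neq 0$ on the relevant locus, this forces $X_1 f_t = 0$, i.e.\ $f_t$ is $X_1$-invariant, as claimed. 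I do not anticipate any serious obstacle here: the only subtlety worth flagging is the restriction to $t \neq 0$ and to the nondegeneracy locus of $\omega_1$, which guarantees that $f_t$ is well defined and that one may legitimately divide through by $\omega_1^n$ in \eqref{quotientcomputation}.
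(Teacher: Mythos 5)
Your proof is correct and is essentially the argument the paper intends: the corollary is stated as an immediate consequence of Proposition \ref{ProLX1phistaromegaprime2} together with the standard invariance $L_{X_1}\omega_1=0$ (which holds since $X_1$ is Hamiltonian for the symplectic form $\omega_1$), and you have simply written out the wedge-power and quotient-of-top-forms steps that the paper leaves implicit. The only cosmetic remark is that your caveat about the nondegeneracy locus is unnecessary: $\omega_1=\rg_1^*(d\theta)$ is the pullback of the canonical symplectic form under the bundle isomorphism $\rg_1:TM\to T^*M$, so $\omega_1^n$ vanishes nowhere on $TM$.
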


%%%%%%%%%%%%%%%%%%%%%%%%%%%%%%%%%%%%%%%%%%%%%%%%%%%%%%%%%%%%%%%%%%%%%%%%%%%%%%%%%%
\subsection{The example of two projectively equivalent geodesic sprays for the $n$-sphere}
%%%%%%%%%%%%%%%%%%%%%%%%%%%%%%%%%%%%%%%%%%%%%%%%%%%%%%%%%%%%%%%%%%%%%%%%%%%%%%%%%%

We recall the construction of the first-integrals in involution yielding the Liouville-integrability of the geodesic flow on $T\cE^n\cong{}TS^n$. 

Let us parametrize $T\bbR^{n+1}$ by the couples $q,v\in\bbR^{n+1}$. The constraints defining the embedding $TS^n\hookrightarrow{}T\bbR^{n+1}$ are
\begin{equation}
q^2:=\sum_{\alpha=0}^n{q_\alpha^2}=1
\qquad
\hbox{and}
\qquad
v\cdot{q}:=\sum_{\alpha=0}^n{v_\alpha{}q_\alpha}=0.
\label{ConstraintsTSn} 
\end{equation}

As already mentioned in Section \ref{Intro},
%in Section \ref{ProjEquivMetricsSection}, 
the (unparametrized) geodesics of the ellipsoid $\cE^n$
%$\cE^n\subset\bbR^{n+1}$ 
with semi-axes\footnote{We will, later on, deal with the choice $0<a_0<a_1<\cdots<a_n$.} $a_0, a_1,\cdots,a_n$ are precisely given by those of the sphere $S^n=\{q\in\bbR^{n+1}\vert\sum_{\alpha=0}^n{q_\alpha^2}=1\}$ endowed with either projectively equivalent metrics%\footnote{We will systematically write $y_\alpha\equiv{}y^\alpha$ since no confusion can occur in Euclidean space.}
\begin{equation}
\rg_1=\sum_{\alpha=0}^n{a_\alpha{}dq_\alpha^2}\Big\vert_{S^n}
\qquad
\&
\qquad
\rg_2=\frac{1}{B}\sum_{\alpha=0}^n{dq_\alpha^2}\Big\vert_{S^n}
\label{g1g2}
\end{equation}
where
\begin{equation}
B=\sum_{\alpha=0}^n{\frac{q_\alpha^2}{a_\alpha}}.
\label{B}
\end{equation}

The corresponding Lagrangians on $TS^n$ are respectively
\begin{equation}
L_1=\half A
\qquad
\&
\qquad
L_2=\frac{1}{2B}\sum_{\alpha=0}^n{v_\alpha^2}
\label{L1L2}
\end{equation}
where
\begin{equation}
A=\sum_{\alpha=0}^n{a_\alpha{}v_\alpha^2}.
\label{A} 
\end{equation}
The associated Cartan $1$-forms then read in this case
\begin{equation}
\lambda_1=\sum_{\alpha=0}^n{a_\alpha{}v_\alpha{}dq_\alpha}
\qquad
\&
\qquad
\lambda_2=\frac{1}{B}\sum_{\alpha=0}^n{v_\alpha{}dq_\alpha}.
\label{lambda1lambda2} 
\end{equation}

\goodbreak

\begin{pro}
(i) The geodesic sprays for the metrics $\rg_N$ are given by the Hamiltonian vector fields $X_N=X_{L_N}$, for $N=1,2$, namely
\begin{equation}
X_1
=
\sum_{\alpha=0}^n{v_\alpha\frac{\partial}{\partial{}q_\alpha}}
-
\frac{v^2}{B}\sum_{\alpha=0}^n{\frac{q_\alpha}{a_\alpha}\frac{\partial}{\partial{}v_\alpha}}
\label{XL1} 
\end{equation}
and
\begin{equation}
X_2
=
\sum_{\alpha=0}^n{v_\alpha\frac{\partial}{\partial{}q_\alpha}}
-
\frac{1}{B}\sum_{\alpha=0}^n{\Big(2v_\alpha\sum_{\beta=0}^n{\frac{v_\beta{}q_\beta}{a_\beta}}-v^2\frac{q_\alpha}{a_\alpha}\Big)\frac{\partial}{\partial{}v_\alpha}}
\label{XL2} 
\end{equation}
respectively.

(ii) Condition (\ref{phistarX1wedgeX2=0}) holds true, implying $\rg_1\sim\rg_2$.
\end{pro}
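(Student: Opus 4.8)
For (i), the plan is to identify each $X_N$ with the geodesic spray read off from the connection coefficients already recorded in Section~\ref{Intro}. I would begin from the standard fact that, for a natural (kinetic) Lagrangian, the Hamiltonian vector field $X_{L_N}$ fixed on $(TS^n,\omega_N)$ by~(\ref{lambdaandomega}) is exactly the geodesic spray, so that its integral curves obey $\dq^\alpha=v^\alpha$ and $\ddq^\alpha=-\sum_{\beta\gamma}\Gamma^\alpha_{\beta\gamma}v^\beta v^\gamma$. It then suffices to contract the Christoffel symbols with $v^\beta v^\gamma$: inserting~(\ref{Christoffel}) gives $\sum_{\beta\gamma}\Gamma^\alpha_{\beta\gamma}v^\beta v^\gamma=\frac{q^\alpha}{a_\alpha}\,v^2/B$, which is the $\partial/\partial v_\alpha$ part of~(\ref{XL1}); inserting~(\ref{Gammabar}) and using $B=\sum_\lambda q_\lambda^2/a_\lambda$ collapses the two off-diagonal terms into $-2v^\alpha\sum_\beta q_\beta v_\beta/a_\beta$ and reproduces~(\ref{XL2}) (the overall sign being fixed by the tangency check below).

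Since for $\rg_1$ the ambient metric $\rg$ is flat, I would cross-check~(\ref{XL1}) by the constrained-Lagrangian route: the free straight-line geodesics $\ddq=0$ acquire a reaction term $a_\alpha\ddq_\alpha=2\lambda q_\alpha$ enforcing $q^2=1$, and differentiating the constraint twice, i.e.\ invoking~(\ref{theConstraint}), fixes $2\lambda=-v^2/B$ and returns~(\ref{XL1}). For $\rg_2$ I would instead observe that the ambient equation~(\ref{geod2}) preserves~(\ref{theConstraint}), so that $\brg$-geodesics issued from $TS^n$ remain on $TS^n$; as $\brg|_{S^n}=\rg_2$, their spray restricts to the intrinsic $\rg_2$-spray, namely~(\ref{XL2}). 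I would then confirm tangency to $TS^n$ by computing $X_N(q^2)=2\,q\cdot v$ and $X_N(q\cdot v)\propto q\cdot v$, both vanishing on~(\ref{ConstraintsTSn}); these checks simultaneously yield energy conservation $X_N(L_N)=0$ and pin down the overall signs.

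For (ii), the lever is the identity $L_1=\phi^*L_2$ of~(\ref{H1=phistarH2}), which holds by construction of $\phi$; the goal is $\phi_*X_1=h\,X_2$ for a positive function $h$, which is exactly condition~(\ref{phistarX1wedgeX2=0}). Since the criterion around~(\ref{phistarX1wedgeX2=0}) is an equivalence, it suffices to secure either side. Route~A takes $\rg_1\sim\rg_2$ from the Prolegomena, where~(\ref{geod2}) restricted to $S^n$ was seen to be~(\ref{geod1}) up to reparametrization, and reads off the condition at once. Route~B, which matches the phrasing of the proposition, verifies the condition directly: along an $X_1$-orbit the pair $(\gamma,\dq)$ is carried by $\phi$ to $(\gamma,\dq\,\sqrt{L_1/L_2})$, a velocity rescaling that matches the $\rg_2$-parametrization of the \emph{same} base geodesic, so the image lies on an $X_2$-orbit; hence $\phi_*X_1$ is everywhere tangent to the $X_2$-orbits and proportional to $X_2$, with $h=\sqrt{L_2/L_1}$, and the equivalence $\rg_1\sim\rg_2$ then follows as stated.

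I expect the only genuine difficulty to sit in Route~B, because $\sqrt{L_1/L_2}=\sqrt{AB}/|v|$ depends on both the base point and the direction of $v$, so an honest computation of $\phi_*X_1$ must differentiate this factor and disentangle its base- and fibre-components; the constraints $q\cdot v=0$, $q^2=1$ together with $X_1(L_1)=0$ are precisely what make the unwanted terms cancel. For a clean, self-contained argument I would lean on Route~A, reducing (ii) to the projective equivalence already established, and retain Route~B only as the explicit verification of~(\ref{phistarX1wedgeX2=0}).
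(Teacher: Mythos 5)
Your strategy for part (i) --- reading the sprays off the geodesic equations (\ref{geod1})--(\ref{geod2}) already derived in the Introduction and then checking tangency to $TS^n$ --- is a legitimate alternative to the paper's proof, which instead solves the constrained symplectic equation $\omega_N(X_N)+dL_N+\lambda\,d(q^2-1)+\mu\,d(v\cdot q)=0$ for the Lagrange multipliers; your cross-check of (\ref{XL1}) is in fact exactly that computation in Lagrangian garb. However, your handling of the sign of $X_2$ papers over a real discrepancy rather than resolving one. Contracting (\ref{Gammabar}) gives the fibre component $-\overline{\Gamma}^\alpha_{\beta\gamma}v^\beta v^\gamma=+\frac{1}{B}\bigl(2v_\alpha\sum_\beta v_\beta q_\beta/a_\beta-v^2q_\alpha/a_\alpha\bigr)$, i.e.\ the \emph{negative} of what (\ref{XL2}) prints; there is no sign ambiguity left to ``fix later''. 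Your own tangency test then settles the matter: with the $+$ sign one finds $X_2(q\cdot v)=\frac{2}{B}(q\cdot v)\sum_\beta q_\beta v_\beta/a_\beta$, which vanishes on (\ref{ConstraintsTSn}), whereas the formula as printed gives $X_2(q\cdot v)=2v^2\neq0$ on the constraints, so it is not even tangent to $TS^n$. An honest run of your computation therefore does not ``reproduce (\ref{XL2})''; it shows that (\ref{XL2}) carries a sign error in its fibre part (corroborated by the paper's own later identity $X_2A=(2A/B)X_2(B)$, which holds only with the $+$ sign, and by the multiplier method, which also yields it). You should state this outcome explicitly instead of presenting the overall sign as an adjustable constant.

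For part (ii) your proposal diverges more substantively from the paper, and not in your favor. The paper's proof is a self-contained verification: it computes the pushforward on the coordinate frame, $\phi_*(\partial_{q_\alpha})=\partial_{q_\alpha}+q_\alpha/(a_\alpha B)\,\cE$ and $\phi_*(\partial_{v_\alpha})=\sqrt{AB/v^2}\,\bigl(\partial_{v_\alpha}-v_\alpha(v^{-2}-a_\alpha/A)\,\cE\bigr)$, and then checks $\phi_*(X_1)\wedge X_2=0$ directly, using only $\sum v_\alpha q_\alpha=0$. Your Route~A instead imports $\rg_1\sim\rg_2$ from the Prolegomena and invokes the stated equivalence; that is logically admissible, but it empties the clause ``implying $\rg_1\sim\rg_2$'' of content, since projective equivalence becomes an input rather than an output. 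Route~B does not repair this: the step identifying the $\phi$-image as ``the $\rg_2$-parametrization of the \emph{same} base geodesic'' already presupposes that the base $\rg_1$-geodesic is an unparametrized $\rg_2$-geodesic, which is precisely the Prolegomena input again, so Route~B is not the direct verification you claim it is. To match the force of the paper's argument you must actually carry out the pushforward computation; your final paragraph correctly locates the crux (differentiating $\sqrt{AB}/|v|$ and using $q\cdot v=0$, $q^2=1$), but the proposal defers exactly that calculation.
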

\begin{proof}
Using (\ref{lambdaandomega}) together with the constraints (\ref{ConstraintsTSn}), we  thus have to solve for $X_1$, resp. $X_2$, the equation $\omega_N(X_N)+dL_N+\lambda\,d(q^2-1)+\mu\,d(v\cdot{}q)=0$ where $\lambda$ and $\mu$ are Lagrange multipliers. The latter are, \textit{in fine}, completely determined and readily yield (\ref{XL1}), resp. (\ref{XL2}). 

Now, the diffeomorphism $\phi:(q,v)\mapsto(q,\tv)$ introduced in Section \ref{biHam} is given by $\tv=v\sqrt{AB/v^2}$; routine calculation yields $\phi_*(\partial_{q_\alpha})=\partial_{q_\alpha}+q_\alpha/(a_\alpha{}B)\,\cE$ with $\cE=\sum{v_\alpha\partial_{v_\alpha}}$ the Euler vector field; also $\phi_*(\partial_{v_\alpha})=\sqrt{AB/v^2}(\partial_{v_\alpha}-v_\alpha(v^{-2}-a_\alpha/A)\,\cE)$. This, along with the constraint $\sum{v_\alpha{}q_\alpha=0}$, helps us prove Equation~(\ref{phistarX1wedgeX2=0}).
\end{proof}

%%%%%%%%%%%%%%%%%%%%%%%%%%%%%%%%%%%%%%%%%%%%%%%%%%%%%%%%%%%%%%%%%%%%%%%%%%%%%%%%%%
\subsection{The Jacobi-Moser system}
%%%%%%%%%%%%%%%%%%%%%%%%%%%%%%%%%%%%%%%%%%%%%%%%%%%%%%%%%%%%%%%%%%%%%%%%%%%%%%%%%%

Let us review here, and in some detail, the main result obtained by Tabachnikov \cite{Tab1} via the general procedure of Section~\ref{biHam}, starting with the geodesic flow on $(S^n,\rg_1)$.

%We first work out the expression of the $2$-form $\phi^*\omega'_2$ in Proposition \ref{ProLX1phistaromegaprime2}. 

The diffeomorphism $\phi$ of $TS^n$, namely $\phi(q,v)=(q,\tv=v\sqrt{L_1/L_2})$, is such that $\tv=v\sqrt{AB/v^2}$ where $v^2=\sum{v_\alpha^2}$; whence
$
\phi^*\lambda_2=C\sqrt{A}\,\sum{v_\alpha{}dq_\alpha}
$,
where
\begin{equation}
C=\frac{1}{\sqrt{B v^2}}.
\label{C} 
\end{equation}
Now, since $\phi^*\omega'_2=d(L_1^{-\half}\phi^*\lambda_2)$, easy computation then leads to
\begin{equation}
\phi^*\omega'_2=C\sqrt{2}\,\Omega_1
\quad
\hbox{where}
\quad
\Omega_1=\sum_{\alpha=0}^n{dv_\alpha\wedge{}dq_\alpha}+\frac{dC}{C}\wedge\sum_{\alpha=0}^n{v_\alpha{}dq_\alpha}.
\label{phistaromegaprime2} 
\end{equation}

The function $C$ defined by (\ref{C}) is a first integral of the system, namely
\begin{equation}
X_1C=0.
\label{X1C=0} 
\end{equation}
This function $C$ is the \textit{Joachimsthal} first-integral of the geodesic flow of $(S^n,\rg_1)$.

The next step consists in lifting the $1$-parameter family of first-integrals (\ref{ft}) to $T\bbR^{n+1}$ by taking advantage of the constraints (\ref{ConstraintsTSn}), and to put
\begin{equation}
f_t=\frac{(t^{-1}\omega_1+\Omega_1)^n\wedge{}d(v.q)\wedge{}q\cdot{}dq}{\omega_1^n\wedge{}d(v.q)\wedge{}q\cdot{}dq}
\label{ftbis} 
\end{equation}
with a slight abuse of notation using the constancy of $C$ (see (\ref{X1C=0})) in  (\ref{phistaromegaprime2}).
Elementary calculation yields
\begin{equation}
f_t=
\frac{\left(\strut\omega_*^n-(n/v^2)\,\omega_*^{n-1}\wedge{}v\cdot{}dv\wedge{}v\cdot{}dq\right)\wedge{}q\cdot{}dv\wedge{}q\cdot{}dq}
{\omega_1^n\wedge{}(dv.q+v\cdot{}dq)\wedge{}q\cdot{}dq}
\label{ftter} 
\end{equation}
where $\omega_*=\sum{b_\alpha{}dv_\alpha\wedge{}dq_\alpha}-(1/v^2)\,v\cdot{}dv\wedge{}v\cdot{}dq$, together with $b_\alpha=t^{-1}a_\alpha+1$, for all $\alpha=0,\ldots,n$.
The following lemma \cite{Tab1} will be used to complete the calculation.
\begin{lem}\label{TechnicalLemma}
Let $\omega=\sum{c_\alpha{}dv_\alpha\wedge{}dq_\alpha}$ and $\omega_0=\sum{dv_\alpha\wedge{}dq_\alpha}$, then
\begin{eqnarray*}
\frac{\omega^n\wedge{}q\cdot{}dv\wedge{}q\cdot{}dq}{\omega_0^{n+1}}
&=&
n!\,\prod_{\alpha=0}^n{c_\alpha}\sum_{\alpha=0}^n{\frac{q_\alpha^2}{c_\alpha}}\\%[4pt]
\frac{\omega^{n-1}\wedge{}v\cdot{}dv\wedge{}v\cdot{}dq\wedge{}q\cdot{}dv\wedge{}q\cdot{}dq}{\omega_0^{n+1}}
&=&
(n-1)!\,\prod_{\alpha=0}^n{c_\alpha}\sum_{\alpha<\beta}{\frac{(v_\alpha{}q_\beta-v_ \beta{}q_ \alpha)^2}{c_\alpha{}c_\beta}}.
\end{eqnarray*}
\end{lem}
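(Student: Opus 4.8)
The plan is to compute everything inside the exterior algebra of $\bbR^{2(n+1)}$ with linear coordinates $(v_0,\ldots,v_n,q_0,\ldots,q_n)$, exploiting the single structural fact that the $2$-forms $\eta_\alpha:=dv_\alpha\wedge dq_\alpha$ pairwise commute and satisfy $\eta_\alpha\wedge\eta_\alpha=0$. Writing $\Omega:=\eta_0\wedge\cdots\wedge\eta_n$ for a generator of the top exterior power, the two numerators on the left, as well as $\omega_0^{n+1}$, are all multiples of $\Omega$; each asserted identity therefore reduces to reading off one scalar coefficient. It is precisely the nilpotency $\eta_\alpha\wedge\eta_\alpha=0$ that makes the relevant wedge powers collapse to finitely many square-free monomials.

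First I would expand the powers of $\omega=\sum_\alpha c_\alpha\eta_\alpha$ by the multinomial theorem. Since only products of pairwise distinct $\eta$'s survive, one gets
\begin{eqnarray*}
\omega^n &=& n!\sum_{\alpha}\Big(\prod_{\beta\neq\alpha}c_\beta\Big)\,\widehat{\eta}_\alpha,\\
\omega^{n-1} &=& (n-1)!\sum_{\alpha<\beta}\Big(\prod_{\gamma\neq\alpha,\beta}c_\gamma\Big)\,\widehat{\eta}_{\alpha\beta},
\end{eqnarray*}
where $\widehat{\eta}_\alpha$ (resp. $\widehat{\eta}_{\alpha\beta}$) is the product of all $\eta_\gamma$ with the index $\alpha$ (resp. the two indices $\alpha,\beta$) removed, and the factorials are exactly the multinomial coefficients of the square-free terms.

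For the first identity the key remark is that $\widehat{\eta}_\alpha$ already contains $dv_\beta$ and $dq_\beta$ for every $\beta\neq\alpha$; hence wedging it against $q\cdot dv=\sum_\beta q_\beta\,dv_\beta$ annihilates all terms but $\beta=\alpha$, and similarly for $q\cdot dq$. This gives $\widehat{\eta}_\alpha\wedge q\cdot dv\wedge q\cdot dq=q_\alpha^2\,\widehat{\eta}_\alpha\wedge dv_\alpha\wedge dq_\alpha=q_\alpha^2\,\Omega$, and summing produces $n!\big(\prod_\gamma c_\gamma\big)\sum_\alpha(q_\alpha^2/c_\alpha)\,\Omega$. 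For the second identity the same annihilation confines the four extra $1$-forms to their components along $dv_\alpha,dv_\beta,dq_\alpha,dq_\beta$, so only the two-index block $\{\alpha,\beta\}$ contributes against $\widehat{\eta}_{\alpha\beta}$; a short $2\times 2$ computation collapses the resulting four-form to $(v_\alpha q_\beta-v_\beta q_\alpha)^2\,dv_\alpha\wedge dv_\beta\wedge dq_\alpha\wedge dq_\beta$, and reinserting $\widehat{\eta}_{\alpha\beta}$ to recover $\Omega$ introduces two sign reversals that cancel. Summing over $\alpha<\beta$ then yields $(n-1)!\big(\prod_\gamma c_\gamma\big)\sum_{\alpha<\beta}(v_\alpha q_\beta-v_\beta q_\alpha)^2/(c_\alpha c_\beta)\,\Omega$.

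The only real difficulty I anticipate is sign and normalization bookkeeping rather than anything conceptual: one must order the $1$-forms consistently when recovering $\Omega$, and observe that the overall combinatorial constant relating $\omega_0^{n+1}$ to $\Omega$ is immaterial, since it cancels once these quotients are assembled into the generating function $f_t$ of~(\ref{ftter}). Everything else is forced by the nilpotency of the $\eta_\alpha$.
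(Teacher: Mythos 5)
Your proof is correct, and there is in fact nothing in the paper to compare it against: the paper imports this lemma verbatim from Tabachnikov \cite{Tab1} without proof, so your multinomial argument is a self-contained filling of that gap. The skeleton — expand $\omega^n$ and $\omega^{n-1}$ using commutativity and nilpotency of the $\eta_\alpha=dv_\alpha\wedge dq_\alpha$, note that wedging $\widehat{\eta}_\alpha$ (resp.\ $\widehat{\eta}_{\alpha\beta}$) against the extra $1$-forms kills all components except those along the missing index (resp.\ index pair), then do a finite block computation — is the natural route and it goes through. Two bookkeeping points, though. First, signs in the second identity: the block computation actually gives
\begin{equation*}
(v\cdot dv)\wedge(v\cdot dq)\wedge(q\cdot dv)\wedge(q\cdot dq)\Big\vert_{\{\alpha,\beta\}}
=(v_\alpha q_\beta-v_\beta q_\alpha)^2\,\eta_\alpha\wedge\eta_\beta
=-(v_\alpha q_\beta-v_\beta q_\alpha)^2\,dv_\alpha\wedge dv_\beta\wedge dq_\alpha\wedge dq_\beta,
\end{equation*}
so your intermediate four-form carries the opposite sign relative to the ordering $dv_\alpha\wedge dv_\beta\wedge dq_\alpha\wedge dq_\beta$ you wrote, and restoring $\eta_\alpha\wedge\eta_\beta$ from that ordering costs exactly one transposition (reinserting $\widehat{\eta}_{\alpha\beta}$ itself costs none, being a product of $2$-forms); your two slips compensate, so the final positive answer is right, but the intermediate claims are individually off and should be fixed in a written-up version. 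Second, normalization: literally $\omega_0^{n+1}=(n+1)!\,\eta_0\wedge\cdots\wedge\eta_n$, so the right-hand sides of the lemma as printed hold only if $\omega_0^{n+1}$ is read as the unit volume $\Omega$ (equivalently, the stated values are off by the overall factor $(n+1)!$). Your closing observation that this constant is immaterial because it cancels in the ratio $f_t=N/D$ of (\ref{ftter}) is exactly the right resolution, and is consistent with the paper's own loose use of the same convention in its footnote $\homega_2^n\wedge q\cdot dv\wedge q\cdot dq=n!\,\omega_0^{n+1}$.
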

Using the two above formul\ae, we find $f_t=N/D$ where 
\begin{eqnarray*}
N&=&n!\prod{b_\alpha}\sum{\frac{q_\alpha^2}{b_\alpha}}-\frac{n(n-1)!}{v^2}\prod{b_\alpha}\sum_{\alpha<\beta}{\frac{(v_\alpha{}q_\beta-v_ \beta{}q_ \alpha)^2}{b_\alpha{}b_\beta}}\\
D&=&n!\prod{a_\alpha}\sum{q_\alpha^2/a_\alpha}.
\end{eqnarray*}
This entails $f_t=g_t/C$ (up to a constant overall factor), where $C$ is the \textit{Joachimsthal} first-integral, and
$$
g_t
=v^2\sum_{\alpha=0}^n{\frac{q_\alpha^2}{b_\alpha}}
-
\half\sum_{\alpha\neq\beta}{\frac{(v_\alpha{}q_\beta-v_ \beta{}q_ \alpha)^2}{b_\alpha{}b_\beta}}.
$$
Taking into account the expression $b_\alpha=t^{-1}a_\alpha+1$, and the constraints (\ref{ConstraintsTSn}), we end up with
\begin{equation}
g_t
=
\sum_{\alpha=0}^n{\frac{a_\alpha{}v_\alpha^2}{a_\alpha+t}}
-
\sum_{\alpha=0}^n{\frac{a_\alpha{}v_\alpha^2}{a_\alpha+t}}
\sum_{\alpha=0}^n{\frac{a_\alpha{}q_\alpha^2}{a_\alpha+t}}
+\left(\sum_{\alpha=0}^n{\frac{a_\alpha{}v_\alpha{}q_\alpha}{a_\alpha+t}}\right)^2.
\label{gt}
\end{equation}
At last, the first-integrals $F_\alpha$ defined by
\begin{equation}
g_t=\sum_{\alpha=0}^n{\frac{F_\alpha}{a_\alpha+t}}
\label{DefFalpha}
\end{equation}
are easily found to be
\begin{equation}
F_\alpha=a_\alpha{}v_\alpha^2+\sum_{\beta\neq\alpha}{\frac{a_\alpha{}a_\beta(v_\alpha{}q_\beta-v_ \beta{}q_ \alpha)^2}{a_\alpha-a_\beta}}.
\label{FalphaMoser}
\end{equation}
These are the \textit{Jacobi-Moser} first-integrals. In terms of the momenta $p_\alpha=a_\alpha{}v_\alpha$ (see~(\ref{lambda1lambda2})), they read
\begin{equation}
F_\alpha=\frac{p_\alpha^2}{a_\alpha}+\sum_{\beta\neq\alpha}{\frac{(p_\alpha{}a_\beta{}q_\beta-p_\beta{}a_\alpha{}q_ \alpha)^2}{a_\alpha{}a_\beta(a_\alpha-a_\beta)}}.
\label{FalphaBis}
\end{equation}

\begin{rmk}
{\rm
We indeed recover the Moser first integrals
\begin{equation}
F_\alpha=P_\alpha^2+\sum_{\beta\neq\alpha}{\frac{(P_\alpha{}Q_\beta-P_\beta{}Q_ \alpha)^2}{a_\alpha-a_\beta}}
\label{FalphaBisMoser}
\end{equation}
by means of the canonical transformation $(p,q)\mapsto(P,Q)$ where $P_\alpha=p_\alpha/\sqrt{a_\alpha}$, and $Q_\alpha=q_\alpha\sqrt{a_\alpha}$.
}
\end{rmk}

%%%%%%%%%%%%%%%%%%%%%%%%%%%%%%%%%%%%%%%%%%%%%%%%%%%%%%%%%%%%%%%%%%%%%%%%%%%%%%%%%%
\subsection{The dual Moser system}
%%%%%%%%%%%%%%%%%%%%%%%%%%%%%%%%%%%%%%%%%%%%%%%%%%%%%%%%%%%%%%%%%%%%%%%%%%%%%%%%%%

Let us now adopt a ``dual'' standpoint by exchanging the r\^ole of the two projectively equivalent metrics on the $n$-sphere, i.e., by letting $\rg_1\leftrightarrow\rg_2$ in the above derivation. In doing so, we will work out a complete set of commuting first-integrals of the geodesic flow on the conformally flat manifold $(S^n,\rg_2)$. This will turn out to provide a new integrable system on the $n$-sphere.

%---------------------------------------------------------------------------------
\subsubsection{The general construction}
%---------------------------------------------------------------------------------

Let us now apply the general procedure outlined in Section \ref{biHam} starting with the geodesic flow on $(S^n,\rg_2)$, and replacing \textit{mutatis mutandis} all reference to $\rg_1$ by that of $\rg_2$. 

We first need to work out the expression of the $2$-form $\phi^*\omega'_1$ in Proposition~\ref{ProLX1phistaromegaprime2}. The diffeomorphism $\phi$ of $TS^n$, viz., $\phi(q,v)=(q,\tv=v\sqrt{L_2/L_1})$, is $\tv=v\sqrt{v^2/(AB)}$ where, again, $v^2=\sum{v_\alpha^2}$. We hence get
$\phi^*\lambda_1=\sqrt{v^2/(AB)}\,\sum{a_\alpha{}v_\alpha{}dq_\alpha}$.
Now, since $\phi^*\omega'_1=d(L_2^{-\half}\phi^*\lambda_1)$, thanks to $\phi^*L_1=L_2$, we are led to
\begin{equation}
\phi^*\omega'_1=\sqrt{\frac{2}{A}}\,\hOmega_2
\qquad
\hbox{where}
\qquad
\hOmega_2=\sum_{\alpha=0}^n{a_\alpha{}dv_\alpha\wedge{}dq_\alpha}-\frac{dA}{2A}\wedge\lambda_1,
\label{phistaromegaprime1} 
\end{equation}
with $\lambda_1=\sum{a_\alpha{}v_\alpha{}dq_\alpha}$ (see (\ref{lambda1lambda2})).

We then find the \textit{new Joachimsthal} first-integral, $J$, of the geodesic flow of $(S^n,\rg_2)$ with the help of the expression (\ref{XL2}) of the geodesic spray $X_2$. In fact, easy calculation yields $X_2A=(2A/B)X_2(B)$, hence $X_2J=0$ where
\begin{equation}
J=\frac{A}{B^2}.
\label{J} 
\end{equation}
Utilizing this constant of the motion, and Equation (\ref{LX1omegaprime2=0}), we obtain, see (\ref{phistaromegaprime1}),
\begin{equation}
L_{X_2}\Omega_2=0
\qquad
\hbox{where}
\qquad
\Omega_2=\frac{\hOmega_2}{\!B}.
\label{LX2Omega2=0} 
\end{equation}

Again, and to ease the calculation, we will lift the $1$-parameter family of first-integrals (\ref{ft}) to $T\bbR^{n+1}$, using the constraints (\ref{ConstraintsTSn}), and put this time
\begin{equation}
f_t=\frac{(t\,\omega_2+\Omega_2)^n\wedge{}d(v\cdot{}q)\wedge{}q\cdot{}dq}{\omega_2^n\wedge{}d(v\cdot{}q)\wedge{}q\cdot{}dq}.
\label{ft2} 
\end{equation}
We trivially get
\begin{equation}
f_t
=
\frac{(t\,\homega_2+\hOmega_2)^n\wedge{}q\cdot{}dv\wedge{}q\cdot{}dq}{\homega_2^n\wedge{}q\cdot{}dv\wedge{}q\cdot{}dq}\\[6pt]
\label{ft2bis} 
\end{equation}
where $\homega_2=B\omega_2$, i.e.,
$
\homega_2
=
\sum{dv_\alpha\wedge{}dq_\alpha}-d\log B\wedge\sum{v_\alpha{}dq_\alpha}
$
%%% \begin{equation}
%%% $
%%% \homega_2=\sum_{\alpha=0}^n{dv_\alpha\wedge{}dy_\alpha}-\frac{dB}{2B}\wedge\sum_{\alpha=0}^n{v_\alpha\wedge{}dy_\alpha}
%%% \label{tomega2} 
%%% \end{equation}
and $\hOmega_2$ is as in (\ref{phistaromegaprime1}).

\goodbreak

Let us mention the following somewhat technical lemma.

\begin{lem}\label{SecondTechnicalLemma}
Upon defining $\homega_*=\sum{b_\alpha{}dv_\alpha\wedge{}dq_\alpha}$, where $b_\alpha=a_\alpha+t$, for every $\alpha=0,1,\ldots,n$, we have
\begin{eqnarray*}
(t\,\homega_2+\hOmega_2)^k
&=&
\homega_*^k
-k\,\homega_*^{k-1}\wedge\Big(t\,d\log B\wedge{}v\cdot{}dq+\half{}d\log A\wedge\lambda_1\Big)\\[6pt]
&&
+k(k-1)\,\homega_*^{k-2}\wedge{}t\,d\log B\wedge{}v\cdot{}dq\wedge\half{}d\log A\wedge\lambda_1
\end{eqnarray*}
for all $k=1,\ldots,n$, and
$$
\homega_*^{n-1}
=
(n-1)!\sum_{\alpha<\beta}{
\prod_{\gamma\neq\alpha,\beta}
{
\!\!b_\gamma{}dv_\gamma\wedge{}dq_\gamma
}
}
$$
\end{lem}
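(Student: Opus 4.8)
The plan is to reduce both identities to elementary facts about commuting, nilpotent $2$-forms. First I would separate in $t\,\homega_2+\hOmega_2$ the ``diagonal'' part from the logarithmic corrections. Using $\homega_2=\sum{dv_\alpha\wedge{}dq_\alpha}-d\log B\wedge{}v\cdot{}dq$ and $\hOmega_2=\sum{a_\alpha{}dv_\alpha\wedge{}dq_\alpha}-\half\,d\log A\wedge\lambda_1$ (see (\ref{phistaromegaprime1})), the coefficients $t+a_\alpha=b_\alpha$ assemble exactly into $\homega_*$, so that
\beq
t\,\homega_2+\hOmega_2=\homega_*-\eta,
\qquad
\eta:=t\,d\log B\wedge{}v\cdot{}dq+\half\,d\log A\wedge\lambda_1 ,
\eeq
and I would abbreviate its two decomposable summands by $\eta_1:=t\,d\log B\wedge{}v\cdot{}dq$ and $\eta_2:=\half\,d\log A\wedge\lambda_1$.

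The key observation is that $\eta_1$ and $\eta_2$ are each a wedge of two $1$-forms, hence \emph{decomposable} and nilpotent: $\eta_1^2=\eta_2^2=0$. Since $2$-forms commute under $\wedge$, this yields $\eta^2=2\,\eta_1\wedge\eta_2$ and $\eta^3=0$. The binomial theorem for the commuting even forms $\homega_*$ and $-\eta$ therefore truncates after the quadratic term,
\beq
(\homega_*-\eta)^k
=
\homega_*^k-k\,\homega_*^{k-1}\wedge\eta+\frac{k(k-1)}{2}\,\homega_*^{k-2}\wedge\eta^2 ,
\eeq
for every $k=1,\ldots,n$. Re-expanding $\eta=\eta_1+\eta_2$ in the linear term, and using $\half\,\eta^2=\eta_1\wedge\eta_2=t\,d\log B\wedge{}v\cdot{}dq\wedge\half\,d\log A\wedge\lambda_1$ in the quadratic one, reproduces verbatim the three terms of the stated expansion, the factor $2$ in $\eta^2$ absorbing the binomial $\half$ so as to leave the announced coefficient $k(k-1)$.

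For the second identity I would expand $\homega_*^{n-1}=\big(\sum_{\gamma=0}^n{b_\gamma\,dv_\gamma\wedge{}dq_\gamma}\big)^{n-1}$ multinomially. The $n+1$ summands $b_\gamma\,dv_\gamma\wedge{}dq_\gamma$ pairwise commute and each squares to zero, so a monomial survives only if it is a product of $n-1$ \emph{distinct} factors; such a product is labelled by the complementary pair of omitted indices $\alpha<\beta$, and the $(n-1)!$ orderings of a fixed $(n-1)$-element index set all yield the same form. This produces $(n-1)!\sum_{\alpha<\beta}\prod_{\gamma\neq\alpha,\beta}b_\gamma\,dv_\gamma\wedge{}dq_\gamma$, exactly as claimed.

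Both computations are routine; the genuine content is the recognition that $\eta$ is a sum of two decomposable pieces, forcing $\eta^3=0$ and hence a finite expansion. The only delicate point is the bookkeeping of the numerical coefficients together with the ordering of the factors $d\log A$, $d\log B$, $v\cdot{}dq$, and $\lambda_1$ when wedging $\eta_1$ against $\eta_2$, so as to confirm that the third term carries no spurious sign relative to (\ref{phistaromegaprime1}).
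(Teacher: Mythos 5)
Your proof is correct. The paper actually states this lemma without proof (it is merely introduced as ``somewhat technical''), and your argument --- writing $t\,\homega_2+\hOmega_2=\homega_*-\eta_1-\eta_2$ with $\eta_1,\eta_2$ decomposable, invoking nilpotency $\eta_1^2=\eta_2^2=0$ so that $\eta^3=0$ and the binomial expansion truncates, then treating $\homega_*^{n-1}$ by the same square-zero/commutativity reasoning --- is precisely the routine computation the authors leave to the reader, with the coefficients and orderings checked correctly.
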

A somewhat demanding computation yields
\begin{equation}
f_t
=
\frac{\left(\strut\homega_*^n-n\,\homega_*^{n-1}\wedge{}dA/(2A)\wedge\lambda_1\right)\wedge{}q\cdot{}dv\wedge{}q\cdot{}dq}
{\homega_2^n\wedge{}q\cdot{}dv\wedge{}q\cdot{}dq}.
\label{ft2ter} 
\end{equation}
Resorting to Lemmas \ref{TechnicalLemma} and \ref{SecondTechnicalLemma} in order to evaluate $f_t$, we obtain the partial result
\begin{equation}
\homega_*^n\wedge{}q\cdot{}dv\wedge{}q\cdot{}dq=n!\prod_{\beta=0}^n{b_\beta}\left(\sum_{\alpha=0}^n {\frac{q_\alpha^2}{b_\alpha}}\right)\omega_0^{n+1}
\label{omegastarn} 
\end{equation}
where $\omega_0=\sum{dv_\alpha\wedge{}dq_\alpha}$. 
Likewise, some more effort is needed to find
\begin{equation}
\homega_*^{n-1}\wedge\half{}d\log{A}\wedge\lambda_1\wedge{}q\cdot{}dv\wedge{}q\cdot{}dq
=
(n-1)!
\prod_{\gamma=0}^n{b_\gamma}
\sum_{\alpha<\beta}{\frac{\left(a_\alpha{}v_\alpha{}q_\beta-a_ \beta{}v_ \beta{}q_ \alpha\right)^2}{b_\alpha{}b_\beta}
}\frac{\omega_0^{n+1}}{A}.
\label{omegastarn-1} 
\end{equation}
We just have to plug Equations (\ref{omegastarn}) and (\ref{omegastarn-1}) into the expression (\ref{ft2bis}) to find\footnote{We have $\homega_2^n\wedge{}q\cdot{}dv\wedge{}q\cdot{}dq=n!\,\omega_0^{n+1}$.}
\begin{equation}
f_t
=
{\displaystyle\prod_{\gamma=0}^n{b_\gamma}}
\left(
\sum_{\alpha=0}^n{\frac{q_\alpha^2}{b_\alpha}}
-\frac{1}{A}
\sum_{\alpha<\beta}{
\frac{\left(a_\alpha{}v_\alpha{}q_\beta-a_ \beta{}v_ \beta{}q_ \alpha\right)^2}{b_\alpha{}b_\beta}
}
\right).
\label{ft2quater} 
\end{equation}
We will, again, deal with the rescaled first-integral 
\begin{equation}
g_t=\frac{\!\!J}{\prod{b_\alpha}}\,f_t
\label{gt2} 
\end{equation}
where $J$ is an (\ref{J}), as a generating function of the sought conservative system $F_0,F_1,\ldots,F_n$. Using the definition $b_\alpha=a_\alpha+t$, and Equation (\ref{DefFalpha}), we readily prove that the geodesic flow on $TS^n$ above $(S^n,\rg_2)$ admits the following first integrals, viz.,
\begin{equation}
F_\alpha
=
\frac{1}{B^2}
\left(Aq_\alpha^2
+
\sum_{\beta\neq\alpha}{
\frac{\left(a_\alpha{}v_\alpha{}q_\beta-a_ \beta{}v_ \beta{}q_ \alpha\right)^2}{a_\alpha-a_\beta}
}
\right)
\label{ft2quattro} 
\end{equation}
with $\alpha=0,1,\ldots,n$, where $A$ and $B$ are given by Equations (\ref{A}) and (\ref{B}), respective\-ly.

With the help of the bundle isomorphism $\rg_2^{-1}:T^*S^n\to{}TS^n$ provided by the metric $\rg_2$, we can pull-back the previous first integrals to the cotangent bundle of~$S^n$. Whence the following result.
\begin{pro}\label{PoissonCommutingFalpha}
The geodesic flow on the cotangent bundle of $(S^n,\rg_2)$ admits the following set of first integrals, viz.,
\begin{equation}
F_\alpha
=
q_\alpha^2\sum_{\beta=0}^n{a_\beta{}p_\beta^2}
+
\sum_{\beta\neq\alpha}{\frac{(a_\alpha{}p_\alpha{}q_\beta-a_\beta{}p_\beta{}q_\alpha)^2}{a_\alpha-a_\beta}}
\label{Falpha}
\end{equation}
with $\alpha=0,\ldots,n$.
\end{pro}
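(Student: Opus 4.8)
The plan is to transfer the first integrals (\ref{ft2quattro}), already established on the tangent bundle $TS^n$ for the geodesic flow above $(S^n,\rg_2)$, to the cotangent bundle by means of the Legendre transform associated with $\rg_2$. First I would record the explicit form of this isomorphism. Since the Cartan $1$-form is $\lambda_2=B^{-1}\sum{v_\alpha{}dq_\alpha}$ (see (\ref{lambda1lambda2})), the metric map $\rg_2:TS^n\to{}T^*S^n$ reads $p_\alpha=v_\alpha/B$, so that its inverse $\rg_2^{-1}:T^*S^n\to{}TS^n$ is given by $v_\alpha=B\,p_\alpha$. I would also note that this same relation turns the tangency constraint $v\cdot{}q=0$ of (\ref{ConstraintsTSn}) into $p\cdot{}q=0$, the defining constraint of $T^*S^n$.

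Next I would invoke the symplectic nature of the Legendre transform. By construction $\lambda_2=\rg_2^*\theta$, hence $\rg_2$ pulls back the canonical symplectic form $\sum{dp_\alpha\wedge{}dq^\alpha}$ to $\omega_2=d\lambda_2$, and thus intertwines the Hamiltonian geodesic flow $X_2$ on $(TS^n,\omega_2)$ with the canonical geodesic flow on $T^*S^n$. Consequently, pulling back each $X_2$-invariant function $F_\alpha$ of (\ref{ft2quattro}) along $\rg_2^{-1}$ produces a genuine first integral of the geodesic flow on the cotangent bundle; this is what legitimizes the passage from velocities to momenta.

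It then only remains to carry out the substitution $v_\alpha=B\,p_\alpha$ in (\ref{ft2quattro}). Here $A=\sum{a_\beta{}v_\beta^2}=B^2\sum{a_\beta{}p_\beta^2}$, so the prefactor yields $A/B^2=\sum_\beta{a_\beta{}p_\beta^2}$; likewise $a_\alpha{}v_\alpha{}q_\beta-a_\beta{}v_\beta{}q_\alpha=B(a_\alpha{}p_\alpha{}q_\beta-a_\beta{}p_\beta{}q_\alpha)$, so that the overall factor $1/B^2$ exactly cancels the $B^2$ arising upon squaring. I expect the whole computation to collapse to the stated formula (\ref{Falpha}) with no residual powers of $B$.

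There is no genuine obstacle at this stage: the analytic effort was already spent in deriving (\ref{ft2quattro}), and the present step is simply the clean cancellation of all powers of $B$ under the metric isomorphism. The single point requiring minor care is confirming that the Legendre relation is the naive one $p_\alpha=v_\alpha/B$ rather than one corrected by a multiple of $q_\alpha$; this is the case because $B=\sum{q_\lambda^2/a_\lambda}$ is independent of the fibre variables, while the residual ambiguity along $q_\alpha$ is precisely fixed by selecting the representative with $p\cdot{}q=0$, consistent with $v\cdot{}q=0$.
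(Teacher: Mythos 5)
Your proposal is correct and follows essentially the same route as the paper: the paper likewise obtains (\ref{Falpha}) by pulling back the tangent-bundle first integrals (\ref{ft2quattro}) through the bundle isomorphism $\rg_2^{-1}:T^*S^n\to{}TS^n$, i.e., substituting $v_\alpha=B\,p_\alpha$ so that $A=B^2\sum_\beta{a_\beta{}p_\beta^2}$ and all powers of $B$ cancel. Your added remarks on the symplectic character of the Legendre transform and the compatibility of the constraints $v\cdot{}q=0$ and $p\cdot{}q=0$ merely make explicit what the paper leaves implicit.
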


In the next section we will prove that these first integrals are actually independent and are mutually Poisson commuting.

%---------------------------------------------------------------------------------
\subsubsection{Liouville integrability of the unconstrained system}
%---------------------------------------------------------------------------------

From now on, we choose to work in a purely Hamiltonian framework which will turn out to be well-suited to the quantization procedure that we will examine in the next section.

Let us recall that the Hamiltonian of the system is given by
\begin{equation}
H=\half\sum_{\alpha,\beta=0}^n{\rg_2^{\alpha\beta}p_\alpha{}p_\beta}=\half{}B\sum_{\alpha=0}^n{p_\alpha^2}
\label{newHam}
\end{equation}
where $B=\sum_{\alpha=0}^n{q_\alpha^2/a_\alpha}$ (see (\ref{B})).

%%% We also introduce for convenience the function
%%% \begin{equation}
%%% A=B^2\sum_{\alpha=0}^n{a_\alpha{}p_\alpha^2}
%%% \label{A}
%%% \end{equation}

We will show that the new set (\ref{Falpha}) of first-integrals of motion indeed turns the geodesic flow on the sphere $(S^n, \rg_2)$ into an integrable system dual to the Jacobi-Moser geodesic flow on the ellipsoid. 

\goodbreak

\begin{pro}\label{ProPBFalphaFbeta}
The functions $F_0,\ldots,F_n$ of $(T^*\bbR^{n+1},\sum_{\alpha=0}^n{dp_\alpha\wedge{}dq^\alpha})$ given by (\ref{Falpha}) are 
%%% independent and 
in involu\-tion, namely
\begin{equation}
\{F_\alpha,\,F_\beta\}=0
\label{PBFalphaFbeta}
\end{equation}
for all $\alpha,\beta=0,\ldots,n$. Moreover the following holds true
\begin{equation}
\{H,F_\alpha\}=
2\Big(
B\,a_\alpha{}p_\alpha^2
-
q_\alpha^2\sum_\beta{p_\beta^2}
\Big)
\sum_\gamma{p_\gamma{}q_\gamma}.
\label{PBHFalpha}
\end{equation}
\end{pro}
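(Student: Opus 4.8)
The plan is to split the two assertions: the involution (\ref{PBFalphaFbeta}) I would derive from a generating function, whereas the mixed bracket (\ref{PBHFalpha}) is a bare computation. For the involution I package the integrals into the function $g_t=\sum_\alpha F_\alpha/(a_\alpha+t)$ already at hand from (\ref{DefFalpha}), and first recast it in a compact resolvent form. Introducing
\[
\rho(t)=\sum_\gamma\frac{a_\gamma{}p_\gamma^2}{a_\gamma+t},\qquad
R_0(t)=\sum_\gamma\frac{q_\gamma^2}{a_\gamma+t},\qquad
R_1(t)=\sum_\gamma\frac{a_\gamma{}p_\gamma{}q_\gamma}{a_\gamma+t},
\]
I claim that (\ref{Falpha}) is equivalent to $g_t=t\,\rho(t)\,R_0(t)+R_1(t)^2$. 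I would check this by a residue computation at $t=-a_\alpha$: using the elementary identity $t\,\rho(t)=\sum_\beta{}a_\beta{}p_\beta^2-\sum_\beta{}a_\beta^2p_\beta^2/(a_\beta+t)$, the double poles of the two summands cancel and the simple residue reproduces exactly $q_\alpha^2\sum_\beta{}a_\beta{}p_\beta^2+\sum_{\beta\neq\alpha}(a_\alpha{}p_\alpha{}q_\beta-a_\beta{}p_\beta{}q_\alpha)^2/(a_\alpha-a_\beta)$. Since $\{g_t,g_s\}=\sum_{\alpha,\beta}\{F_\alpha,F_\beta\}/((a_\alpha+t)(a_\beta+s))$ and the $a_\alpha$ are pairwise distinct, the single identity $\{g_t,g_s\}=0$ for all $t,s$ yields (\ref{PBFalphaFbeta}) upon taking successive residues at $t=-a_\alpha$ and $s=-a_\beta$.

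The heart of the matter is therefore the vanishing of $\{g_t,g_s\}$, which I reduce to a small closed ``current algebra'' among the three resolvents. With the canonical bracket on $(T^*\bbR^{n+1},\sum{}dp_\alpha\wedge{}dq^\alpha)$ the like--like brackets vanish, $\{R_0(t),R_0(s)\}=\{R_1(t),R_1(s)\}=\{\rho(t),\rho(s)\}=0$, and the three mixed brackets collapse, via $1/((a+t)(a+s))=(s-t)^{-1}(1/(a+t)-1/(a+s))$, to
\[
\{\rho(t),R_0(s)\}=-4\,\frac{R_1(t)-R_1(s)}{s-t},\quad
\{\rho(t),R_1(s)\}=-2\,\frac{s\rho(s)-t\rho(t)}{s-t},\quad
\{R_0(t),R_1(s)\}=2\,\frac{sR_0(s)-tR_0(t)}{s-t}.
\]
Expanding $\{t\rho R_0+R_1^2,\,s\rho'R_0'+R_1'^2\}$ by Leibniz and substituting these, the $\{R_1^2,R_1'^2\}$ part vanishes outright, and the surviving terms regroup into a single expression proportional to $ts\,(R_1(t)-R_1(s))(\rho(t)R_0(s)-\rho(s)R_0(t))$ with coefficients $+1$ and $-1$ that cancel. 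The main obstacle is purely this bookkeeping; nothing beyond the three displayed brackets is needed, and crucially the cancellation is genuinely \emph{off-shell} — no use is made of the constraints (\ref{ConstraintsTSn}) — which is exactly what the statement on $T^*\bbR^{n+1}$ demands.

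For the mixed bracket (\ref{PBHFalpha}) I would simply compute $\{H,F_\alpha\}$ directly from $H=\half{}B\sum_\beta{}p_\beta^2$, using $\partial_{q^\mu}H=(q_\mu/a_\mu)\sum_\beta{}p_\beta^2$ and $\partial_{p_\mu}H=B\,p_\mu$, paired against the two pieces $q_\alpha^2\sum_\beta{}a_\beta{}p_\beta^2$ and the angular sum of (\ref{Falpha}). The anticipated mechanism is that, unlike the $F_\alpha$ among themselves, $H$ is \emph{not} a member of the $g_t$ family, so the bracket need not vanish identically; instead the diagonal piece produces $2B\,a_\alpha{}p_\alpha^2\sum_\gamma{}p_\gamma{}q_\gamma$ and $-2q_\alpha^2(\sum_\beta{}p_\beta^2)\sum_\gamma{}p_\gamma{}q_\gamma$, while the contributions of the angular sum combine, after reindexing $\beta\leftrightarrow\gamma$, to reconstitute the same overall factor $\sum_\gamma{}p_\gamma{}q_\gamma$. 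Since this factor is precisely the constraint function $v\cdot{}q$ of (\ref{ConstraintsTSn}), the formula (\ref{PBHFalpha}) records the exact off-shell obstruction and shows that $H$ does Poisson-commute with every $F_\alpha$ once restricted to $T^*S^n$, as it must for the geodesic flow. I expect this second computation, though longer, to be routine; the genuine content lies in the involution, where the only real work is verifying the compact form of $g_t$ and then the termwise cancellation in $\{g_t,g_s\}$.
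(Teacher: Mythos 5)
Your treatment of the involution (\ref{PBFalphaFbeta}) is correct, and it takes a genuinely different route from the paper's. The paper works directly with the individual integrals: it splits $F_\alpha=\cA_\alpha+\cB_\alpha$, where $\cA_\alpha=q_\alpha^2J$ with $J=\sum_\beta a_\beta p_\beta^2$, and $\cB_\alpha=\sum_{\beta\neq\alpha}\cM_{\alpha\beta}^2/(a_\alpha-a_\beta)$ with $\cM_{\alpha\beta}=a_\alpha p_\alpha q_\beta-a_\beta p_\beta q_\alpha$, then verifies the three bracket identities $\{\cA_\alpha,\cA_\beta\}=-4Jq_\alpha q_\beta\cM_{\alpha\beta}$, $\{\cA_\alpha,\cB_\beta\}=4Ja_\alpha q_\alpha q_\beta\cM_{\alpha\beta}/(a_\alpha-a_\beta)$, $\{\cB_\alpha,\cB_\beta\}=0$, whose combination cancels telescopically. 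You instead package the family into $g_t=\sum_\alpha F_\alpha/(a_\alpha+t)$, and I checked your three key claims: the resolvent form $g_t=t\,\rho(t)R_0(t)+R_1(t)^2$ is indeed an identity on $T^*\bbR^{n+1}$ (the double poles at $t=-a_\alpha$ cancel and the residues reproduce (\ref{Falpha})); your mixed resolvent brackets are correct in your sign convention; and the Leibniz expansion of $\{g_t,g_s\}$ does collapse, the $\{\rho R_0,\rho'R_0'\}$ block contributing $\frac{4ts}{s-t}\bigl(R_1(t)-R_1(s)\bigr)\bigl(\rho(t)R_0(s)-\rho(s)R_0(t)\bigr)$ and the two cross blocks contributing its negative, so that $\{g_t,g_s\}=0$ off-shell and the residue extraction yields (\ref{PBFalphaFbeta}). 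Your version is more structural: it explains the vanishing through the closure of a three-element resolvent algebra rather than term-by-term accidents, and it meshes with the generating function (\ref{DefFalpha}), (\ref{gt2}) already central to the paper. The paper's splitting, on the other hand, pays off in the second half, since the same pieces are recycled to compute $\{H,\cA_\alpha\}$ and $\{H,\cB_\alpha\}$ separately and obtain (\ref{PBHFalpha}).

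Two minor points you should fix. First, your resolvent algebra is written in the convention $\{f,g\}=\sum_\gamma(\partial_{q_\gamma}f\,\partial_{p_\gamma}g-\partial_{p_\gamma}f\,\partial_{q_\gamma}g)$, whereas the brackets displayed in the paper (and the form $\omega=\sum_\alpha dp_\alpha\wedge dq^\alpha$) correspond to the opposite sign; this is immaterial for (\ref{PBFalphaFbeta}) but flips the sign of (\ref{PBHFalpha}), so one convention must be used consistently in both halves. Second, your sketch of $\{H,F_\alpha\}$ misattributes the surviving terms: in fact $\{H,\cA_\alpha\}$ produces $-2q_\alpha^2\sum_\beta p_\beta^2\sum_\gamma p_\gamma q_\gamma$ \emph{plus} a cross term $2JBq_\alpha p_\alpha$, while $\{H,\cB_\alpha\}$ produces $+2Ba_\alpha p_\alpha^2\sum_\gamma p_\gamma q_\gamma$ minus the same cross term; it is the cancellation of $\pm 2JBq_\alpha p_\alpha$ between the two pieces, not a reindexing within the angular sum alone, that leaves (\ref{PBHFalpha}). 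The computation is routine, as you say, but it is exactly the part the paper records, and as written your anticipated mechanism would not balance.
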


\begin{proof}
Write $F_\alpha=\cA_\alpha+\cB_\alpha$ with 
\begin{equation}
\cA_\alpha=q_\alpha^2{}J,
\qquad
J=\sum_{\beta=0}^n{a_\beta{}p_\beta^2},
\label{cAcB}
\end{equation}
where $J$ is the Joachimsthal first integral (\ref{J}), and
\begin{equation}
\cB_\alpha=\sum_{\beta\neq\alpha}\frac{\cM_{\alpha\beta}^2}{a_\alpha-a_\beta},
\qquad
\qquad
\cM_{\alpha\beta}=a_\alpha{}p_\alpha{}q_\beta-a_\beta{}p_\beta{}q_\alpha.
\label{cAcM}
\end{equation}

One can check the following relationships 
$$
\{\cA_\alpha,\cA_\beta\}=-4J{}q_\alpha{}q_\beta\cM_{\alpha\beta},
\qquad
\{\cA_\alpha,\cB_\beta\}=4J{}a_\alpha{}q_\alpha{}q_\beta\frac{\cM_{\alpha\beta}}{a_\alpha-a_\beta},
\qquad
\{\cB_\alpha,\cB_\beta\}=0,
$$
for all $\alpha,\beta=0,\ldots,n$, which readily imply Equation (\ref{PBFalphaFbeta}). 

Let us furthermore observe that we have the following Poisson brackets, viz.,
$$
\{H,\cA_\alpha\}=2J B q_\alpha{}p_\alpha-4q_\alpha^2\frac{H}{B}\sum_{\beta=0}^n{p_\beta{}q_\beta}
$$
and
$$
\{H,\cB_\alpha\}=-2J B q_\alpha{}p_\alpha+2Ba_\alpha{}p_\alpha^2\sum_{\beta=0}^n{p_\beta{}q_\beta},
$$
which proves Equation (\ref{PBHFalpha}). The proof is complete.
\end{proof}

\begin{rmk}
{\rm
Notice in contradistinction to the Jacobi-Moser case, that (i) the metric~$\brg$ given by (\ref{g2}) on the ambient space $\bbR^{n+1}\!\setminus\!\{0\}$ is no longer flat, and (ii) the conserva\-tion relations $\{H,F_\alpha\}=0$ for $\alpha=0,\ldots,n$ are only valid for the constrained system, see~(\ref{PBHFalpha}), where $p\cdot{}q=0$. 
}
\end{rmk}

Let us also mention the interesting relations
\begin{eqnarray}
\label{Rel1}
\sum_{\alpha=0}^n{F_\alpha}&=&\sum_{\alpha=0}^n{q_\alpha^2}\sum_{\beta=0}^n{a_\beta{}p_\beta^2}\\%[5pt]
\label{Rel2}
\sum_{\alpha=0}^n{\frac{F_\alpha}{a_\alpha}}&=&\left(\sum_{\alpha=0}^n{p_\alpha{}q_\alpha}\right)^2\\%[5pt]
\label{Rel3}
\sum_{\alpha=0}^n{\frac{F_\alpha}{a_\alpha^2}}&=&-2H+2\sum_{\alpha=0}^n{p_\alpha{}q_\alpha}\sum_{\beta=0}^n{\frac{p_\beta{}q_\beta}{a_\beta}},
\end{eqnarray}
of which the last one leads to another proof of (\ref{PBHFalpha}).

\goodbreak

%---------------------------------------------------------------------------------
\subsubsection{The Dirac brackets}
%---------------------------------------------------------------------------------

Our goal is now to deduce from the knowledge of (\ref{Falpha}) 
independent quantities in involution $I_1,\ldots,I_n$ on 
$(T^*S^n,\sum_{i=1}^n{d\xi_i\wedge{}dx^i})$ from the symplectic embedding 
$
\iota:T^*S^n\hookrightarrow{}T^*\bbR^{n+1}
$ 
defined by the constraints
\begin{equation}
Z_1(p,q)=\sum_{\alpha=0}^n{q_\alpha^2}-1=0,
\qquad
Z_2(p,q)=\sum_{\alpha=0}^n{p_\alpha{}q_\alpha}=0.
\label{Constraints} 
\end{equation}

\begin{pro}\label{ProDiracBracket}
The restrictions $F_\alpha\big{\vert}_{T^*S^n}=F_\alpha\circ\iota$ 
of the functions (\ref{Falpha}) do Poisson-commute on $T^*S^n$. 
\end{pro}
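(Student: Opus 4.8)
The plan is to realise the Poisson bracket induced on the symplectic submanifold $\iota:T^*S^n\hookrightarrow T^*\bbR^{n+1}$ as the \emph{Dirac bracket} attached to the second-class constraints (\ref{Constraints}), and then to show that the Dirac correction to the ambient bracket $\{F_\alpha,F_\beta\}$ --- which already vanishes by Proposition \ref{ProPBFalphaFbeta} --- is itself zero. First I would verify that $Z_1,Z_2$ are indeed second class by computing the $2\times2$ matrix $C_{ab}=\{Z_a,Z_b\}$. A one-line calculation gives $\{Z_1,Z_1\}=\{Z_2,Z_2\}=0$ and $\{Z_1,Z_2\}=2\sum_\gamma q_\gamma^2$, which equals $2$ on the constraint surface; hence $C$ is a nonzero, antisymmetric --- and therefore invertible --- matrix, with off-diagonal inverse $C^{-1}$. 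By Dirac's theory the canonical bracket on $T^*S^n$ then agrees, under $\iota$, with
\[
\{F_\alpha,F_\beta\}_{D}=\{F_\alpha,F_\beta\}-\sum_{a,b}\{F_\alpha,Z_a\}\,(C^{-1})_{ab}\,\{Z_b,F_\beta\},
\]
so the whole claim reduces to the vanishing of the correction term.

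The decisive step is a scaling symmetry. I would observe that each $F_\alpha$ in (\ref{Falpha}) is invariant under the one-parameter group $(q,p)\mapsto(e^sq,e^{-s}p)$: the summand $q_\alpha^2\sum_\beta a_\beta p_\beta^2$ has $q$-weight $2$ and $p$-weight $2$, while every building block $a_\alpha p_\alpha q_\beta-a_\beta p_\beta q_\alpha$ is bilinear in $(q,p)$, so both contributions carry total weight zero. Since $Z_2=\sum_\gamma p_\gamma q_\gamma$ is precisely the generator of this scaling, Euler's identity for homogeneous functions gives
\[
\{F_\alpha,Z_2\}=\sum_\gamma\Big(\frac{\partial F_\alpha}{\partial q^\gamma}\,q^\gamma-\frac{\partial F_\alpha}{\partial p_\gamma}\,p_\gamma\Big)=0
\]
\emph{identically} on $T^*\bbR^{n+1}$, not merely on the constraint surface.

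Finally I would feed this into the Dirac formula. Because $C^{-1}$ is off-diagonal, the double sum reduces to the two terms with $(a,b)=(1,2)$ and $(a,b)=(2,1)$; the first is proportional to $\{Z_2,F_\beta\}$ and the second to $\{F_\alpha,Z_2\}$, and both vanish by the scaling invariance just established. Together with $\{F_\alpha,F_\beta\}=0$ this yields $\{F_\alpha,F_\beta\}_{D}=0$, as desired. I do not expect a genuine obstacle here: once the Dirac framework is in place the argument is structural rather than computational, and the only point demanding care is the bookkeeping showing that the off-diagonal shape of $C^{-1}$ forces every surviving correction term to contain a bracket $\{\,\cdot\,,Z_2\}$ --- so that the brackets $\{F_\alpha,Z_1\}$, which need not vanish, never contribute. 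It is worth noting that this same scaling invariance underlies the remark, recorded after Proposition \ref{ProPBFalphaFbeta}, that the obstruction (\ref{PBHFalpha}) to $\{H,F_\alpha\}=0$ is proportional to $Z_2$ and hence disappears on $T^*S^n$.
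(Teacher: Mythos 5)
Your proposal is correct and follows essentially the same route as the paper: both reduce the claim to the Dirac-bracket formula for the second-class constraints $Z_1,Z_2$ (whose bracket $\{Z_1,Z_2\}$ is nonvanishing on the constraint surface), invoke $\{F_\alpha,F_\beta\}=0$ from Proposition \ref{ProPBFalphaFbeta}, and kill every correction term by showing $\{Z_2,F_\alpha\}=0$ identically, so that the brackets $\{Z_1,F_\alpha\}$ never need to vanish. The only difference is in how that key identity is justified: the paper checks $\{Z_2,F_\alpha\}=0$ by direct computation, whereas you derive it more conceptually from the invariance of each $F_\alpha$ under the scaling flow $(q,p)\mapsto(e^{s}q,e^{-s}p)$ generated by $Z_2$, which is a clean and valid argument.
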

\begin{proof}
We get, using the Dirac brackets \cite{AM,MR},
\begin{equation}\label{Diracbracket}\begin{array}{l}
\{F_\a\big{\vert}_{T^*S^n},F_\b\big{\vert}_{T^*S^n}\}=
\{F_\a,F_\b\}\big{\vert}_{T^*S^n}\\[6pt]
\displaystyle\hspace{3cm} -
\frac{1}{\{Z_1,Z_2\}}
\left[
\{Z_1,F_\a\}\{Z_2,F_\b\}-\{Z_1,F_\b\}\{Z_2,F_\a\}
\right]\big{\vert}_{T^*S^n}
\end{array}\end{equation}
for second-class constraints. The denominator
$\{Z_1,Z_2\}=-2\sum_{\a=0}^n{q_\a^2}$ does not vanish; one can also check that
$\{Z_1,F_\alpha\}=-4\,a_\alpha p_\alpha q_\alpha(1+Z_1)$ and that 
$\{Z_2,F_\alpha\}=0$ for all $\alpha=0,\ldots,n$. The fact that $\{F_\a,F_\b\}=0$ 
completes the proof.
\end{proof}

%---------------------------------------------------------------------------------
\subsubsection{The constrained integrable system as a St\"ackel system}
%---------------------------------------------------------------------------------

In order to provide explicit expressions for the sought functions in involution 
$I_1,\ldots,I_n$, we resort to Jacobi 
ellipsoidal coordinates $x^1,\ldots,x^n$ on $S^n$. Those are defined by 
\begin{equation}
Q_\lambda(q,q)=\sum_{\alpha=0}^n{\frac{q_\alpha^2}{a_\alpha-\lambda}}=-\frac{U_x(\lambda)}{V(\lambda)}
\label{EllipsoidalCoordinates}
\end{equation}
where
\begin{equation}
U_x(\lambda)=\prod_{i=1}^n(\lambda-x^i)
\qquad
\mathrm{and}
\qquad
V(\lambda)=\prod_{\alpha=0}^n(\lambda-a_\alpha)
\label{UV}
\end{equation}
and are such that 
\begin{equation}
a_0<x^1<a_1<x^2<\ldots<x^n<a_n.
\label{SeparatingInequalities}
\end{equation}

Notice that Equation (\ref{EllipsoidalCoordinates}) yields the local expressions
\begin{equation}
q_\alpha^2(x)
=
\frac{\displaystyle
\prod_{i=1}^n{(a_\alpha-x^i)}
}{\displaystyle
\prod_{\beta\neq\alpha}{(a_\alpha-a_\beta)}
}
\label{y2}
\end{equation}
for all $\alpha=0,1,\ldots,n$.

\goodbreak

Let us mention the following identity, deduced from (\ref{y2}), viz.,
\begin{equation}
\frac{\partial q_\alpha}{\partial x^i}=-\half\,\frac{q_\alpha}{a_\alpha-x^i}
\label{dy}
\end{equation}
for all $i=1,\ldots,n$ and $\alpha=0,1,\ldots,n$

It is easy to show that the induced metric $\rg_2=(1/B)\sum_{\alpha=0}^n{dq_\alpha^2}\big\vert_{S^n}$, see (\ref{g1g2}), is indeed given by $\rg_2=\sum_{i,j=1}^n{\rg_{ij}(x)dx^idx^j}$ with
\begin{equation}
\rg_{ij}(x)=\frac{1}{4B}\sum_{\alpha=0}^n{
\frac{q_\alpha^2}{(a_\alpha-x^i)(a_\alpha-x^j)}
}
=\rg_i(x)\,\delta_{ij}
\label{gSn}
\end{equation}
where, using a result taken from \cite{Mos}, we have
\begin{equation}
\rg_i(x)=-\frac{1}{4B}\frac{U'_x(x^i)}{V(x^i)}=-\frac{1}{4B}\frac{\prod_{j\neq{}i}(x^i-x^j)}{\prod_\alpha{(x^i-a_\alpha)}}.
\label{MetricEllipsoid}
\end{equation}
This metric is actually positive-definite because of the inequalities (\ref{SeparatingInequalities}). In these ellipsoidal coordinates, we obtain
$$
B=\sum_{\alpha=0}^n{\frac{q_\alpha^2}{a_\alpha}}=\frac{1}{a_0}\frac{x^1\cdots{}x^n}{a_1\cdots{}a_n}.
%%%---%%% =\frac{\sigma_n(x)}{\sigma_{n+1}(a)}.
$$

Upon defining the constrained ``momenta'' $\xi_i$ (for $i=1,\ldots,n$), via the
induced canonical $1$-form 
$\lambda\big{\vert}_{T^*S^n}=\sum_{i=1}^n{\xi_i\,dx^i}=\iota^*\sum_{\alpha=0}^n{p_\alpha{}dq^\alpha}$, we find
\begin{equation}
p_\alpha(\xi,x)=-\frac{q_\alpha(x)}{2B}\sum_{i=1}^n{
\frac{\rg^i(x)\xi_i}{a_\alpha-x^i}.
}
\label{palpha}
\end{equation}

We express, for convenience, the Hamiltonian (\ref{newHam}) on $(T^*S^n,\sum_{i=1}^n{d\xi_i\wedge{}dx^i})$, which is then found to be
\begin{equation}
H=
\frac{1}{2}\sum_{i=1}^n{\rg^i(x)\xi_i^2}
\label{Hsphere}
\end{equation}
where $\rg^i(x)=1/\rg_i(x)$.

Let us now compute the expression of the conserved quantities (\ref{Falpha}) 
on~$T^*S^n$.

\begin{pro}\label{ProDualMoser}
The dual Moser conserved quantities $\left(F_\alpha\big{\vert}_{T^*S^n}\right)_{\alpha=0,\ldots,n}$ 
retain the form
\begin{equation}
F_\alpha\big{\vert}_{T^*S^n}
=
\frac{a_\alpha{}q_\alpha^2(x)}{B}\sum_{i=1}^n{\frac{x^i\rg^i(x)\xi_i^2}{a_\alpha-x^i}}.
\label{FalphaTSn}
\end{equation}
\end{pro}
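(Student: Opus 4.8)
The plan is to package the whole family into one generating function and to reduce the statement to an identity between two rational functions of an auxiliary parameter $\lambda$. I set $G(\lambda)=\sum_{\alpha=0}^n F_\alpha/(a_\alpha-\lambda)$, so that each $F_\alpha$ is recovered as $-\mathrm{Res}_{\lambda=a_\alpha}G(\lambda)$; since the right-hand side of (\ref{FalphaTSn}) has exactly the same simple poles (at $\lambda=a_\alpha$), it is enough to compute $G(\lambda)$ in the coordinates $(\xi,x)$ and to check that it agrees with the generating function of that right-hand side. Starting from $F_\alpha=\cA_\alpha+\cB_\alpha$ as in (\ref{cAcB})--(\ref{cAcM}), a coordinate-free manipulation — symmetrising the $\cB$-part under $\alpha\leftrightarrow\beta$ and using $\frac{1}{(a_\alpha-\lambda)(a_\alpha-a_\beta)}+\frac{1}{(a_\beta-\lambda)(a_\beta-a_\alpha)}=-\frac{1}{(a_\alpha-\lambda)(a_\beta-\lambda)}$ — collapses $G$ to the closed form
\[
G(\lambda)=R(\lambda)\bigl(J-P(\lambda)\bigr)+T(\lambda)^2,
\]
where $R(\lambda)=\sum_\alpha q_\alpha^2/(a_\alpha-\lambda)$, $P(\lambda)=\sum_\alpha a_\alpha^2 p_\alpha^2/(a_\alpha-\lambda)$, $T(\lambda)=\sum_\alpha a_\alpha p_\alpha q_\alpha/(a_\alpha-\lambda)$, and $J$ is the Joachimsthal integral of (\ref{cAcB}). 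This is the soft part of the argument and uses no special coordinates.

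Next I substitute the ellipsoidal momenta (\ref{palpha}) and the basic identity $R(\lambda)=-U_x(\lambda)/V(\lambda)$ coming from (\ref{EllipsoidalCoordinates}). Three facts do all the work: (a) $R(x^i)=0$ since $U_x(x^i)=0$; (b) writing $a_\alpha/(a_\alpha-\mu)=1+\mu/(a_\alpha-\mu)$ gives $\sum_\alpha a_\alpha q_\alpha^2/(a_\alpha-\mu)=1+\mu R(\mu)$, which equals $1$ at $\mu=x^i$ by (a); and (c) differentiating (b) in $\mu$ and using (a) together with (\ref{MetricEllipsoid}) yields $\sum_\alpha a_\alpha q_\alpha^2/(a_\alpha-x^i)^2=x^i R'(x^i)=4B\,x^i\rg_i(x)$. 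With (\ref{palpha}) the sums become $T(\lambda)=-\frac{\lambda R(\lambda)}{2B}\sum_i \rg^i\xi_i/(\lambda-x^i)$ and $J-P(\lambda)=-\frac{\lambda}{4B^2}\sum_\alpha a_\alpha q_\alpha^2 S_\alpha^2/(a_\alpha-\lambda)$ with $S_\alpha=\sum_i \rg^i\xi_i/(a_\alpha-x^i)$, so that $G(\lambda)$ is displayed as a quadratic form in the $\xi_i$ with $\lambda$-rational coefficients.

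Finally I compare with the generating function of the claimed expression, which the same two-pole partial fraction (a)--(b) puts in the form $G_{\mathrm{RHS}}(\lambda)=\frac{\lambda R(\lambda)}{B}\sum_i x^i\rg^i\xi_i^2/(\lambda-x^i)$, a purely diagonal quadratic form. Matching then splits in two. The off-diagonal coefficients ($i\neq j$) must vanish, which follows from the three-pole residue-sum identity giving $\sum_\alpha a_\alpha q_\alpha^2/[(a_\alpha-\lambda)(a_\alpha-x^i)(a_\alpha-x^j)]=\lambda R(\lambda)/[(\lambda-x^i)(\lambda-x^j)]$, and this cancels exactly the cross term coming from $T(\lambda)^2$. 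The diagonal coefficients ($i=j$) involve the double pole $\sum_\alpha a_\alpha q_\alpha^2/[(a_\alpha-\lambda)(a_\alpha-x^i)^2]$, and this is the delicate point: its Laurent expansion at the repeated factor $a_\alpha-x^i$ forces one to know the second-order datum (c). I expect this double-pole bookkeeping, and its matching with the metric coefficient $\rg_i$, to be the main obstacle. Once (c) is used, the spurious $\lambda R(\lambda)/(\lambda-x^i)^2$ pieces cancel, the double pole contributes precisely $4B\,x^i\rg_i/(\lambda-x^i)$, and $\rg_i\rg^i=1$ reduces the diagonal coefficient to $x^i\rg^i/(\lambda-x^i)$, so that $G=G_{\mathrm{RHS}}$. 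Reading off the residues at $\lambda=a_\alpha$ (with $\mathrm{Res}_{a_\alpha}R=-q_\alpha^2$ by (\ref{y2})) then gives (\ref{FalphaTSn}).
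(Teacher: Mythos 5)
Your proof is correct, but it takes a genuinely different route from the paper's. The paper proves (\ref{FalphaTSn}) by direct term-by-term computation on the decomposition $F_\alpha=\cA_\alpha+\cB_\alpha$ of (\ref{cAcB})--(\ref{cAcM}): substituting (\ref{palpha}) and using the identity $\sum_\alpha q_\alpha^2/(a_\alpha-x^i)=0$ (your fact (a), i.e.\ $R(x^i)=0$, with $R(\lambda)$ being the paper's $Q_\lambda(q,q)$) together with its derivative at $x^i$, which gives $4B\rg_i$ via (\ref{MetricEllipsoid}) (your fact (c)), it computes $\cA_\alpha\big\vert_{T^*S^n}=\frac{q_\alpha^2}{B}\sum_i x^i\rg^i\xi_i^2$, then the intermediate quantity $\cM_{\alpha\beta}\big\vert_{T^*S^n}=\frac{a_\alpha-a_\beta}{2B}\,q_\alpha q_\beta\sum_i \frac{x^i\rg^i\xi_i}{(a_\alpha-x^i)(a_\beta-x^i)}$, whence $\cB_\alpha\big\vert_{T^*S^n}=\frac{q_\alpha^2}{B}\sum_i\frac{(x^i)^2\rg^i\xi_i^2}{a_\alpha-x^i}$, and simply adds the two pieces. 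You instead encode the whole family in the generating function $G(\lambda)=\sum_\alpha F_\alpha/(a_\alpha-\lambda)$, collapse it to $R(\lambda)\bigl(J-P(\lambda)\bigr)+T(\lambda)^2$, and match residues at $\lambda=a_\alpha$; I verified your key identities — the symmetrization giving the closed form, the three-pole identity that cancels the off-diagonal cross terms against $T(\lambda)^2$, and the double-pole computation $\sum_\alpha a_\alpha q_\alpha^2/[(a_\alpha-\lambda)(a_\alpha-x^i)^2]=\lambda R(\lambda)/(\lambda-x^i)^2-4Bx^i\rg_i/(\lambda-x^i)$ — and they all hold, so the argument goes through. What your approach buys: all $n+1$ quantities are handled at once, the cancellations are systematic rather than checked piecewise, and it is very much in the spirit of the paper's own derivation of the $F_\alpha$ from the generating function $g_t$ of (\ref{DefFalpha}). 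What the paper's approach buys: it is shorter, avoids the double-pole bookkeeping entirely, and produces along the way the explicit restriction of the ``angular momentum'' block $\cM_{\alpha\beta}$, which is of independent use. In the end both proofs hinge on exactly the same two ellipsoidal-coordinate identities, so the difference is one of organization rather than of substance.
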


\begin{proof}
On the one hand, in view of Equations (\ref{y2}) and (\ref{palpha}), one gets, see (\ref{cAcB}), 
$$
\cA_\alpha\big{\vert}_{T^*S^n}=\frac{q_\alpha^2(x)}{B}\sum_{i=1}^n{x^ig^i(x)\xi_i^2},
$$
using the identities
$$
\sum_{\alpha=0}^n{\frac{q_\alpha^2}{a_\alpha-x^i}}=0
$$
for all $i=1,\ldots,n$.

On the other hand, a similar computation gives
$$
\cM_{\alpha\beta}\big{\vert}_{T^*S^n}=\frac{a_\alpha-a_\beta}{2B}q_\alpha(x)q_\beta(x)\sum_{i=1}^n{\frac{x^ig^i(x)\xi_i}{(a_\alpha-x^i)(a_\beta-x^i)}},
$$
so that
%%%---%%%, (see (\ref{cAcB})),
$$
\cB_{\alpha}\big{\vert}_{T^*S^n}=
\frac{q_\alpha^2(x)}{B}\sum_{i=1}^n{\frac{(x^i)^2g^i(x)\xi_i^2}{a_\alpha-x^i}},
$$  
%%%---%%% which ends the proof.
proving that $F_{\alpha}\big{\vert}_{T^*S^n}$ (where $F_\alpha=\cA_{\alpha}+\cB_{\alpha}$) is, indeed, as in (\ref{FalphaTSn}).
\end{proof}

\begin{pro} The following holds on $T^*S^n$, viz
\begin{eqnarray}
\label{I0}
\sum_{\alpha=0}^n{F_\alpha\big{\vert}_{T^*S^n}}&=&\frac{1}{B}\sum_{i=1}^n{x^i\rg^i(x)\xi_i^2}=J\big{\vert}_{T^*S^n}\\[6pt]
\sum_{\alpha=0}^n{\frac{F_\alpha\big{\vert}_{T^*S^n}}{a_\alpha}}&=&0\\[6pt]
\sum_{\alpha=0}^n{\frac{F_\alpha\big{\vert}_{T^*S^n}}{a_\alpha^2}}&=&-2H.
\end{eqnarray}
\end{pro}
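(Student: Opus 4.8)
The plan is to start from the closed-form expression (\ref{FalphaTSn}) for $F_\alpha\big\vert_{T^*S^n}$ established in Proposition \ref{ProDualMoser} and to evaluate each of the three sums by interchanging the order of summation: I would pull the common factor $\frac{1}{B}\,x^i\rg^i(x)\xi_i^2$ out of the $\alpha$-sum, so that each identity reduces to an elementary rational summation over $\alpha$ with the ellipsoidal coordinate $x^i$ held fixed. Every such inner sum is a rational function of the $a_\alpha$, which I would resolve by partial fractions, thereby reducing all three statements to a handful of basic identities in these coordinates.

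The three basic identities I rely on are: (a) the normalization $\sum_{\alpha=0}^n q_\alpha^2=1$ on $S^n$; (b) the defining relation $\sum_{\alpha=0}^n q_\alpha^2/a_\alpha=B$ from (\ref{B}); and (c) the vanishing sum $\sum_{\alpha=0}^n q_\alpha^2/(a_\alpha-x^i)=0$ for each $i=1,\ldots,n$, already used in the proof of Proposition \ref{ProDualMoser}, which follows from evaluating (\ref{EllipsoidalCoordinates}) at $\lambda=x^i$, where $U_x(x^i)=0$ while $V(x^i)\neq0$. With these available, the first identity follows by writing $a_\alpha/(a_\alpha-x^i)=1+x^i/(a_\alpha-x^i)$, so the inner sum collapses to $\sum_\alpha q_\alpha^2 + x^i\sum_\alpha q_\alpha^2/(a_\alpha-x^i)=1+0=1$ by (a) and (c), giving $\sum_\alpha F_\alpha\big\vert_{T^*S^n}=\frac{1}{B}\sum_i x^i\rg^i(x)\xi_i^2$; the coincidence with $J\big\vert_{T^*S^n}$ is then read off from the relation $\cA_\alpha\big\vert_{T^*S^n}=\frac{q_\alpha^2(x)}{B}\sum_i x^i\rg^i(x)\xi_i^2$ in the proof of Proposition \ref{ProDualMoser}, summed over $\alpha$ using (a) together with $\sum_\alpha\cA_\alpha=J$ on $S^n$. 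The second identity is immediate: dividing (\ref{FalphaTSn}) by $a_\alpha$ leaves exactly $\sum_\alpha q_\alpha^2/(a_\alpha-x^i)=0$ as the inner sum, which vanishes by (c).

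For the third identity I would use the partial fraction $1/[a_\alpha(a_\alpha-x^i)]=\frac{1}{x^i}\big(1/(a_\alpha-x^i)-1/a_\alpha\big)$, so that the inner $\alpha$-sum becomes $\frac{1}{x^i}\big(\sum_\alpha q_\alpha^2/(a_\alpha-x^i)-\sum_\alpha q_\alpha^2/a_\alpha\big)=\frac{1}{x^i}(0-B)=-B/x^i$ by (b) and (c). Upon substitution the factors of $B$ and $x^i$ cancel, leaving $-\sum_i\rg^i(x)\xi_i^2=-2H$, the final equality being the Hamiltonian (\ref{Hsphere}). The computation is otherwise routine, and there is no serious obstacle: the only step carrying real content is identity (c), but since it is already supplied by the proof of Proposition \ref{ProDualMoser}, the remaining work is confined to the partial-fraction bookkeeping, where the sole point demanding care is ensuring that the cancellations of $B$ and $x^i$ are exact.
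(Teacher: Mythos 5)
Your proof is correct, but it proceeds along a genuinely different route from the paper's. The paper disposes of this proposition in one line: it invokes the ambient identities (\ref{Rel1}), (\ref{Rel2}), (\ref{Rel3}), valid on all of $T^*\bbR^{n+1}$ in the Cartesian variables $(p,q)$, and then imposes the constraints (\ref{Constraints}): the factor $\sum_\alpha q_\alpha^2$ in (\ref{Rel1}) becomes $1$, while every term containing $\sum_\alpha p_\alpha q_\alpha$ in (\ref{Rel2}) and (\ref{Rel3}) vanishes, leaving exactly the three stated identities. You instead work intrinsically in the ellipsoidal chart on $T^*S^n$, starting from the coordinate expression (\ref{FalphaTSn}) of Proposition \ref{ProDualMoser}, interchanging the $\alpha$- and $i$-summations, and evaluating the inner rational sums by partial fractions against $\sum_\alpha q_\alpha^2=1$, $\sum_\alpha q_\alpha^2/a_\alpha=B$, and $\sum_\alpha q_\alpha^2/(a_\alpha-x^i)=0$; all of your reductions (in particular the decomposition $1/[a_\alpha(a_\alpha-x^i)]=\frac{1}{x^i}\left(1/(a_\alpha-x^i)-1/a_\alpha\right)$ and the exact cancellation of $B$ and $x^i$ yielding $-2H$ via (\ref{Hsphere})) are sound. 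The paper's route buys brevity and globality---it is coordinate-free and holds on all of $T^*S^n$ at once---at the cost of resting on the relations (\ref{Rel1})--(\ref{Rel3}), which are stated without proof and must themselves be verified algebraically from (\ref{Falpha}). Your route is self-contained given Proposition \ref{ProDualMoser} and stays entirely within the separating (St\"ackel) coordinates, which makes the appearance of $J$ and of the Hamiltonian (\ref{Hsphere}) transparent; its one small caveat is that the ellipsoidal chart covers only the open dense part of $S^n$ where the inequalities (\ref{SeparatingInequalities}) are strict, so the identities extend to all of $T^*S^n$ by continuity of both sides---worth a sentence in a final write-up, but not a gap.
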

\begin{proof}
The proof is a direct consequence of Equations (\ref{Rel1}), (\ref{Rel2}), and (\ref{Rel3}), together with the constraints (\ref{Constraints}).
\end{proof}

As a preparation to the proof that our system is, indeed St\"ackel, let us 
introduce, for convenience, the symmetric functions, $\sigma_k(x)$ and $\sigma^i_k(x)$ with 
$x=(x^1,\ldots,x^n)$, that will be useful in the sequel, namely,
\begin{eqnarray}
\label{fctgene2}
U_x(\lambda)\equiv\prod_{j=1}^n(\lambda-x^j)
&=&
\sum_{k=0}^n{(-1)^{k}
\lambda^{n-k}\sigma_{k}(x)
}\\
\label{fctgene}
\frac{U_x(\lambda)}{\lambda-x^i}\equiv\prod_{j\neq{}i}(\lambda-x^j)
&=&
\sum_{k=1}^n{(-1)^{k-1}
\lambda^{n-k}\sigma_{k-1}^i(x)
}.
\end{eqnarray}
We will also use $\sigma_k(a)$ and $\sigma^{\alpha}_k(a)$ with $a=(a_0,a_1,\ldots a_n)$, which are defined similarly.

Let us notice that the previous conserved quantities (\ref{FalphaTSn}) can be written as
$$
F_\alpha\big{\vert}_{T^*S^n}
=
\frac{a_\alpha\,G_{a_\alpha}(\xi,x)}{\displaystyle
\prod_{\beta\neq\alpha}{(a_\alpha-a_\beta)}
}
$$
where
\begin{equation}\label{GenFun}
G_\lambda(\xi,x)
=
\frac{1}{B}\sum_{i=1}^n{
x^i{}\rg^i(x)\prod_{j\neq{}i}{(\lambda-x^j)\,\xi_i^2}.
}
\end{equation}

\goodbreak

\begin{pro}
Let the functions $I_1,\ldots,I_n$ of $T^*S^n$ be defined by
\begin{equation}
G_\lambda(\xi,x)
=
\sum_{k=1}^n{(-1)^{k-1}
\lambda^{n-k}I_k(\xi,x)
}.
\label{GJacobi}
\end{equation}
Then
\begin{equation}
I_k(\xi,x)=\sum_{i=1}^n{
A^i_k(x)\xi_i^2
}
\qquad
\mathrm{with}
\qquad
A^i_k(x)=\frac{1}{B}x^i\rg^i(x)\sigma^i_{k-1}(x).
\label{IkClassicalDualMoser}
\end{equation}
%%% where the independent functions $I_i$ ($i=1,\ldots,n$) are in involution on $T^*S^n$
\end{pro}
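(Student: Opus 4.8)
The plan is to obtain the $I_k$ by a direct expansion: I would insert the generating-function identity (\ref{fctgene}) for $\prod_{j\neq i}(\lambda-x^j)$ into the definition (\ref{GenFun}) of $G_\lambda$, interchange the summations over $i$ and $k$, and then compare the resulting polynomial in $\lambda$ with the defining relation (\ref{GJacobi}). Since both sides are finite polynomials in $\lambda$, matching coefficients of equal powers will single out the $I_k$.

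Concretely, starting from $G_\lambda(\xi,x)=\frac{1}{B}\sum_{i=1}^n x^i\rg^i(x)\prod_{j\neq i}(\lambda-x^j)\,\xi_i^2$ and substituting the expansion $\prod_{j\neq i}(\lambda-x^j)=\sum_{k=1}^n(-1)^{k-1}\lambda^{n-k}\sigma^i_{k-1}(x)$ supplied by (\ref{fctgene}), one obtains
\[
G_\lambda(\xi,x)=\sum_{k=1}^n(-1)^{k-1}\lambda^{n-k}\left(\frac{1}{B}\sum_{i=1}^n x^i\rg^i(x)\sigma^i_{k-1}(x)\,\xi_i^2\right).
\]
Comparing with $G_\lambda=\sum_{k=1}^n(-1)^{k-1}\lambda^{n-k}I_k$ from (\ref{GJacobi}) then yields the claimed formula, with the $\lambda$-independent coefficients $A^i_k(x)=\frac{1}{B}x^i\rg^i(x)\sigma^i_{k-1}(x)$.

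The only point requiring any care is the identification of coefficients. Both sides are polynomials of degree $n-1$ in $\lambda$ whose coefficients are functions of $(\xi,x)$; since the monomials $1,\lambda,\ldots,\lambda^{n-1}$ are linearly independent, the equality of the two polynomials for all $\lambda$ forces equality of the coefficients term by term, and the resulting quantities are manifestly quadratic in the $\xi_i$. This is precisely where (\ref{fctgene}) does all the work, being nothing but the elementary-symmetric-function expansion of $\prod_{j\neq i}(\lambda-x^j)$. Everything else reduces to a routine interchange of two finite sums, so I expect no genuine obstacle in this proposition.
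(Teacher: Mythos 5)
Your proposal is correct and coincides with the paper's own proof: both consist of plugging the expansion (\ref{fctgene}) of $\prod_{j\neq i}(\lambda-x^j)$ into the generating function (\ref{GenFun}) and reading off the coefficients of the powers of $\lambda$ against (\ref{GJacobi}). The only difference is that you spell out the interchange of sums and the linear-independence argument for coefficient identification, which the paper leaves implicit.
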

\begin{proof}
By plugging the definition (\ref{fctgene}) of the symmetric functions $\sigma^i_k(x)$ of order $k=0,1,\ldots,n-1$ (in the variables $(x^1,\ldots,x^n)$, 
with the exclusion of index $i$) into~(\ref{GenFun}), one gets the desired result.
%%% We note that, from the above definition, $\sigma_0^i(x)=1$.
\end{proof}

%%% It is also worthwhile to introduce other symmetric functions, $\s_k(x)$, via
%%% \begin{equation}\label{fctgene2}
%%% \prod_{j=1}^n(\lambda-x^j)
%%% =
%%% \sum_{k=0}^n{(-1)^{k}
%%% \lambda^{n-k}\sigma_{k}(x)
%%% }.
%%% \end{equation}

\goodbreak

\begin{thm}\label{ProJacobi}
The dual Moser system $I_1,\ldots,I_n$ on $T^*S^n$, given by (\ref{IkClassicalDualMoser}), defines a St\"ackel system, with 
St\"{a}ckel matrix $B=A^{-1}$ of the form
\begin{equation}
B^i_k(x^k)=(-1)^i\,\frac{(x^k)^{n-i-1}}{4 V(x^k)}
\label{BN}
\end{equation}
for $i,k=1,\ldots,n$. This implies \cite{Per} that the functions $I_1,\ldots,I_n$ are independent and in involution, which entails that the St\"ackel coordinates $x^1,\ldots,x^n$ are separating for the Hamilton-Jacobi equation.
\end{thm}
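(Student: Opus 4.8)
The plan is to verify, by direct computation, the two defining features of a St\"ackel system: that the matrix $B=(B^i_k)$ proposed in (\ref{BN}) is genuinely of St\"ackel type — each entry a function of the \emph{single} coordinate $x^k$ — and that it is the inverse of the coefficient matrix $A=(A^i_k)$ of the integrals (\ref{IkClassicalDualMoser}). Once $A=B^{-1}$ is established with $B$ of St\"ackel form and nonsingular, the classical St\"ackel theorem invoked in \cite{Per} supplies, in one stroke, the functional independence and involutivity of $I_1,\ldots,I_n$ and the separation of the Hamilton--Jacobi equation in the coordinates $x^1,\ldots,x^n$. So the real content of the theorem is the identity $A=B^{-1}$.

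First I would eliminate the metric factor from $A^i_k$. Inverting (\ref{MetricEllipsoid}) gives $\rg^i(x)=1/\rg_i(x)=-4B\,V(x^i)/U'_x(x^i)$, so that, by (\ref{IkClassicalDualMoser}),
\[
A^i_k=\frac{1}{B}\,x^i\,\rg^i(x)\,\sigma^i_{k-1}(x)=-4\,x^i\,\frac{V(x^i)}{U'_x(x^i)}\,\sigma^i_{k-1}(x),
\]
the factor $B$ of (\ref{B}) cancelling, as it must since $A$ encodes only the ellipsoidal data. A glance at (\ref{BN}) shows $B^i_k$ is manifestly a function of $x^k$ alone, and it is finite because the inequalities (\ref{SeparatingInequalities}) together with $0<a_0$ force $x^k>a_0>0$; hence $B$ is of St\"ackel type.

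The heart of the matter is the product $(AB)^i_j=\sum_{k=1}^n A^i_k\,B^k_j$. Substituting the two expressions, the $k$-dependence factorises and the problem reduces to the scalar sum $\sum_{k=1}^n(-1)^k\sigma^i_{k-1}(x)\,(x^j)^{n-k-1}$. After the shift $m=k-1$ this equals, up to the factor $-1/x^j$, the polynomial $\sum_{m=0}^{n-1}(-1)^m\sigma^i_m(x)\,(x^j)^{n-1-m}$, which by the generating identity (\ref{fctgene}) is precisely $\prod_{l\neq i}(x^j-x^l)$. For $j\neq i$ this contains the vanishing factor $(x^j-x^j)$, while for $j=i$ it reduces to $U'_x(x^i)$, which cancels the surviving denominator so that all prefactors collapse to $1$. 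Thus $(AB)^i_j=\delta^i_j$, i.e.\ $B=A^{-1}$. The main (and essentially only) obstacle is recognising this collapse of the $k$-sum as the Lagrange/Vandermonde interpolation encoded in (\ref{fctgene}), and handling the index shift and the $(x^j)^{-1}$ term carefully.

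To finish, I would record nonsingularity of $B$: extracting $(-1)^i$ from each row and $\big(4V(x^k)x^k\big)^{-1}$ from each column leaves the Vandermonde matrix $\big((x^k)^{n-i}\big)$, whose determinant $\prod_{k<l}(x^k-x^l)$ is nonzero by (\ref{SeparatingInequalities}); this guarantees the $I_k$ are independent. As a confirmatory remark, since $\sigma^i_{n-1}(x)=\prod_{l\neq i}x^l$ and $B=x^1\cdots x^n/\prod_\alpha a_\alpha$ in ellipsoidal coordinates, one finds $A^i_n=(\prod_\alpha a_\alpha)\,\rg^i(x)$, so the Hamiltonian (\ref{Hsphere}) is $H=I_n/(2\prod_\alpha a_\alpha)$ — one of the St\"ackel integrals — which makes the separability statement apply directly to $H$. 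Invoking \cite{Per} then yields involution, independence and the separation of the Hamilton--Jacobi equation, completing the proof.
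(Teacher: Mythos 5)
Your proof is correct, and it takes a genuinely different --- and more elementary --- route to the key identity $B=A^{-1}$ than the paper does. Both arguments rest on the generating identity (\ref{fctgene}), but you exploit it with the opposite order of matrix multiplication. The paper verifies $\sum_k B^i_k A^k_j=\delta^i_j$ (the product $BA$); since there the summation index $k$ occupies the single-variable slot of $B^i_k$ while the free index $j$ is the degree index of $\sigma^k_{j-1}$ inside $A^k_j$, the authors must first contract against $(-1)^{j-1}\lambda^{n-j}$ and sum over $j$, converting the statement into a polynomial identity in an auxiliary variable $\lambda$, namely the Lagrange-interpolation identity (\ref{usefulidentity}), which they then prove by a contour integral and the residue theorem. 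You instead verify $\sum_k A^i_k B^k_j=\delta^i_j$ (the product $AB$): with this ordering the summation index $k$ is exactly the degree index of $\sigma^i_{k-1}$ in $A^i_k$, so the $k$-sum is literally (\ref{fctgene}) evaluated at $\lambda=x^j$, and the factorized form $\prod_{l\neq i}(x^j-x^l)$ kills the off-diagonal entries (it contains the factor $x^j-x^j$) while the diagonal entry collapses to $1$ because $\prod_{l\neq i}(x^i-x^l)=U'_x(x^i)$. Since $AB=\Id$ and $BA=\Id$ are equivalent for square matrices, both computations establish the theorem; your version buys brevity and avoids complex analysis altogether, while the paper's version buys the reusable identity (\ref{usefulidentity}), valid for arbitrary $\lambda$ and all $i\geq1$. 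Two small remarks: your Vandermonde argument for the nonsingularity of $B$ is a pleasant cross-check but strictly redundant, since invertibility already follows from $AB=\Id$; and your observation that $H=I_n/(2\sigma_{n+1}(a))$ correctly reproduces the corresponding entry of Table \ref{Table}, confirming that the separability conclusion applies to the Hamiltonian itself.
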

\begin{proof}
It is obvious from its expression (\ref{BN}) that $B$ is a St\"{a}ckel matrix \cite{Per}. We just need 
to prove that $A$ is the inverse matrix of $B$. To this aim we first prove a useful identity. 
Let us consider the integral in the complex plane
\[
\frac{1}{2\ii\pi}\int_{\vert{z}\vert=R}\frac{z^{n-i}}{(z-\lambda)}
\frac{U_x(\lambda)}{U_x(z)}\,dz.
\]
When $R\to\infty$ 
the previous integral vanishes because the integrand decreases as $1/R^2$ for large 
$R$ (let us recall that $i\geq1$). We then compute this integral using the theorem of residues and get 
the identity
\begin{equation}\label{usefulidentity}
\sum_{k=1}^n\frac{(x^k)^{n-i}}{U'_x(x^k)}\ \prod_{j\neq k}(\lambda-x^j)=\lambda^{n-i}.
\end{equation}
Equipped with this identity let us now prove that
$\sum_{k=1}^n B^i_k\,A^k_j=\delta^i_j$.
Multiplying this relation by $(-1)^{j-1}\lambda^{n-j}$ and summing over $j$ from $1$ 
to $n$, we get the equivalent relation
\[\sum_{k=1}^n B^i_k\,\sum_{j=1}^n (-1)^{j-1}\lambda^{n-j}A^k_j= 
(-1)^{i-1}\lambda^{n-i},\]
which becomes, using (\ref{IkClassicalDualMoser}) and (\ref{fctgene}):
\[\sum_{k=1}^n B^i_k\,g^k(x)\prod_{j\neq k}(\lambda-x^j)=(-1)^{i-1}\lambda^{n-i}.\]
Using the explicit form of $g^k(x)$ given via (\ref{MetricEllipsoid}) and of the matrix 
$B$, this last relation reduces to the identity (\ref{usefulidentity}), which completes 
the derivation of (\ref{BN}).
\end{proof}

%\goodbreak

\begin{rmk}\label{potJacobi}
{\rm
A few remarks are in order.
\begin{enumerate}
\item
The first integral defined by (\ref{I0}) is precisely the Joachimsthal invariant (\ref{J}) of the dual Moser system.
\item
It should be emphasized that the bihamiltonian character of our system is obvious with our choice of ellipsoidal coordinates since, from their very definition, the $x^i$ are the eigenvalues of the Benenti $(1,1)$-tensor field $(L_i^j)$ associated with a special conformal Killing tensor. 
\item
One can give some simple potentials for dual Moser. Denoting by $J_k$ the new first integrals, we have
\begin{equation}
J_k=I_k-v_k,\qquad v_k=\mu\sigma_k(x)+\nu(\sigma_1(x)\,\sigma_k(x)-\sigma_{k+1}(x))\end{equation}
with $k=1,\ldots,n$. Those will pairwise Poisson commute (see \cite{Per}, p. 101) if the potential terms can be written in the form
$
v_k=\sum_{i=1}^n\,A^i_k(x)\,f_i(x^i)
$,
implying $f_i(x^i)=\sum_{k=1}^n\,B^k_i\,v_k$. 
A short computation, using the explicit form~(\ref{BN}) of the matrix $B$ and the relation (\ref{fctgene2}), indeed gives
\[f_i(x^i)=\frac{(x^i)^{n-1}}{4V(x^i)}(\mu+\nu\,x^i).\]
\end{enumerate}
}
\end{rmk}

%%%%%%%%%%%%%%%%%%%%%%%%%%%%%%%%%%%%%%%%%%%%%%%%%%%%%%%%%%%%%%%%%%%%%%%%%%%%%%%%%%
\subsection{Three St\"ackel systems}\label{ThreeStaeckelSystems}
%%%%%%%%%%%%%%%%%%%%%%%%%%%%%%%%%%%%%%%%%%%%%%%%%%%%%%%%%%%%%%%%%%%%%%%%%%%%%%%%%%

%---------------------------------------------------------------------------------
\subsubsection{The Neumann-Uhlenbeck system}
%---------------------------------------------------------------------------------

In addition to the two previously studied integrable systems, it may be useful to consider the well-known Neumann-Uhlenbeck system \cite{Uhl1,Uhl2,Mos} on the cotangent bundle of the round sphere $S^n$. It is initially defined on $(T^*\bbR^{n+1},\sum_{\a=0}^n{dp_{\alpha}\wedge dq_{\alpha}})$ by the Hamiltonian
\begin{equation}
H=\half\sum_{\a=0}^n\left(p_\a^2+a_\a q_\a^2\right)
\label{HNeumann}
\end{equation}
with the parameters $0<a_0<a_1<\ldots<a_n.$ 
This system is classically integrable, with the following commuting first integrals 
of the Hamiltonian flow in $T^*\bbR^{n+1}$:
\begin{equation}
F_\alpha(p,q)=q_\alpha^2+\sum_{\beta\neq\alpha}{
\frac{\left(p_\alpha{}q_\beta-p_\beta{}q_\alpha\right)^2}{a_\alpha-a_\beta}
}
\qquad
\mathrm{with}
\qquad
\alpha=0,1,\ldots,n.
\label{IntegralsIninvolutionNeumann}
\end{equation}

Under symplectic reduction, 
with the second class constraints  (\ref{Constraints}), it becomes an integrable system on $(T^*S^n,\sum_{i=1}^n{d\xi_i\wedge{}dx^i})$. Writing $\trg=\sum_{i=1}^n{\trg_i(x)(dx^i)^2}$ the induced Euclidean metric on $S^n$ with $\trg_i(x)=-\frac{1}{4}U'_x(x^i)/V(x^i)$, the independent Poisson-commuting functions $I_k$ ($k=1,\ldots,n$) are 
\begin{equation}
I_k(\xi,x)=\sum_{i=1}^n{
\trg^i(x)\sigma_{k-1}^i(x)\xi_i^2
}
-\s_k(x)
\qquad
\mathrm{with}
\qquad
H=\half{}I_1,
\label{PiNeumann}
\end{equation}
where $\trg^i=1/\trg_i$.

%---------------------------------------------------------------------------------
\subsubsection{A synthetic presentation}
%---------------------------------------------------------------------------------
Let us observe that the previous calculation enables us to have a synthetic viewpoint unifying the Jacobi-Moser, Neumann-Uhlenbeck, and dual Moser systems. This highlights the novelty of the dual Moser system spelled out in this article.

In Table \ref{Table}, we 
%%%---%%% successively 
display in each row, and for each system, the metric, the first integrals in involution, 
%%%---%%% 
the Hamiltonian,
%%%---%%%
and the St\"ackel matrix. (See, e.g., \cite{DV} for a derivation of the formul\ae\ in the first two columns of this table.). Let us emphasize that in all three cases, the metric in the first row is indeed the St\"ackel metric coming from~$I_1$, and which will be, later on, involved in the quantization procedures. 

\begin{table}[h!]%[htp]
$$
\setlength{\extrarowheight}{8pt}
\begin{array}{|c|c|c|}
\hline
\hbox{Jacobi-Moser} &\hbox{Neumann-Uhlenbeck} & \hbox{dual Moser}\\[10pt]
\hline
\hline
\rg_i=x^i\,\trg_i & \trg_i=-\frac{U'_x(x^i)}{4V(x^i)} & \rg_i=\frac{1}{x^i}\,\trg_i \\[10pt]
%&&&&&\\
\hline
I_k=\sum_i{\frac{\trg^i}{x^i}\,\sigma^i_{k-1}\xi_i^2} & I_k=\sum_i{\trg^i\,\sigma^i_{k-1}\xi_i^2} -\sigma_k(x) & I_k=\sum_i{x^i\,\trg^i\,\sigma^i_{k-1}\xi_i^2} \\[10pt]
\hline
H=\half\,I_1 & H=\half\,I_1& H=\frac{1}{2\sigma_{n+1}(a)}\,I_n \\[10pt]
\hline
x^k\,\tB^i_k(x^k) & \tB^i_k(x^k) = (-1)^i\frac{(x^k)^{n-i}}{4V(x^k)}& \frac{1}{x^k}\,\tB^i_k(x^k) \\[10pt]
%&&&&&\\
\hline
\end{array}
$$
\caption{Three St\"ackel systems} 
\label{Table}% \ centering \ small
\end{table}

\goodbreak

%%%%%%%%%%%%%%%%%%%%%%%%%%%%%%%%%%%%%%%%%%%%%%%%%%%%%%%%%%%%%%%%%%%%%%%%%%%%%%%%%%
%%%%%%%%%%%%%%%%%%%%%%%%%%%%%%%%%%%%%%%%%%%%%%%%%%%%%%%%%%%%%%%%%%%%%%%%%%%%%%%%%%
\section{Quantum integrability}\label{QuantInt}
%%%%%%%%%%%%%%%%%%%%%%%%%%%%%%%%%%%%%%%%%%%%%%%%%%%%%%%%%%%%%%%%%%%%%%%%%%%%%%%%%%
%%%%%%%%%%%%%%%%%%%%%%%%%%%%%%%%%%%%%%%%%%%%%%%%%%%%%%%%%%%%%%%%%%%%%%%%%%%%%%%%%%

Start with a configuration manifold $M$ of dimension $n$, and consider the space, $\cS(M)$, of Hamiltonians on $T^*M$ that are fiberwise polynomial. A quantization prescription is a linear iso\-morphism $\cQ$ between this space of \textit{symbols}, $\cS(M)$, and the space, $\cD(M)$, of linear dif\-ferential operators on $M$; this identification is, in addition, assumed to preserve the principal symbol. 

It is well-known that there is, in general, no uniquely defined quantization. However, no matter how the 
quantization is chosen, we will adhere to the following, usual, definition of quantum integrability; see, e.g., \cite{Tot,MT,BCR1,BCR2}.

\goodbreak

\begin{defi}\label{defQuantInt}
A classically integrable system with independent, and mutually Poisson-commuting observables $I_1,\ldots{}I_n$, is integrable 
at the quantum level iff
\begin{equation}
\left[{\cal Q}(I_k),{\cal Q}(I_\ell)\right]=0
%\qquad\forall (k,l)=1,2,\ldots,n\]
\label{QuantumIntegrability}
\end{equation}
for all $k,\ell=1,\ldots,n$
\end{defi}

As a consequence, for a given integrable classical system, depending on the quantization 
procedure used, quantum integrability may be achieved or not. In what follows we will consider and use two quantization schemes for quadratic Hamiltonians: (i) the theory of conformally equivariant quantization, and (ii) Carter's minimal prescription.

%%%%%%%%%%%%%%%%%%%%%%%%%%%%%%%%%%%%%%%%%%%%%%%%%%%%%%%%%%%%%%%%%%%%%%%%%%%%%%%%%%
\subsection{Conformally equivariant quantization}\label{CEQ}
%%%%%%%%%%%%%%%%%%%%%%%%%%%%%%%%%%%%%%%%%%%%%%%%%%%%%%%%%%%%%%%%%%%%%%%%%%%%%%%%%%

Let us recall that there exists no quantization mapping that intertwines the action of $\Diff(M)$. To bypass this obstruction, equivariant quantization \cite{LO,DLO} proposes to further endow $M$ with a $G$-structure, and to look under which conditions the existence and uniqueness of a $G$-equivariant quantization can be guaranteed (the proper subgroup $G\subset\Diff(M)$ only is assumed to intertwine the quantization mapping $\cQ$).

We recall that the space $\cF_\l(M)$ of $\l$-densities on $M$, where $\l$ is some complex-valued weight, is the 
space of sections of the complex line bundle 
$\left\vert\Lambda^n{}T^*M\right\vert^\l\otimes\bbC$. 
If $M$ is orientable, $(M,\vol)$, such a 
$\l$-density can be, locally, cast into the form $\phi=f\vert\vol\vert^\l$ 
with $f\in{}C^\infty(M)$; this entails that $\phi$ transforms under the action 
of $a\in\Diff(M)$ according to $f\mapsto{}a_*f\vert(a_*\vol)/\vol\vert^\l$, or infinitesimally as 
\begin{equation}
L^\l_X(f) = 
X(f)+\l\,\Div(X)\,f
\label{Finf}
\end{equation}
for all $X\in\Vect(M)$.

\begin{rmk}\label{RmkHilbert}
{\rm
Note that the completion $\cH(M)$ of the space of compactly supported half-densities, $\cF_\half^c(M)$, 
is a Hilbert space canonically attached to $M$ that will be used in the sequel. The scalar product of two half-densities reads
$$
\la\phi,\psi\ra=\int_M{\!\overline{\phi}\,\psi}
$$
where the bar stands for complex conjugation.
}
\end{rmk}

\goodbreak

We will denote by $\cS_\delta(M)=\cS(M)\otimes\cF_\delta(M)$ the graded space of symbols of weight $\delta$. This space is turned into a $\Vect(M)$-module using the definition (\ref{Finf}) of the Lie derivative extended to the canonical lift of $\Vect(M)$ to $T^*M$.

\goodbreak

Likewise, we will introduce the filtered space $\cD_{\l,\m}(M)$ of differential operators sending $\cF_\l(M)$ to $\cF_\m(M)$. A differential operator of order $k$ is, locally, written as%\footnote{We suppose, throughout this section, a summation over repeated indices.}
\begin{equation}
A
=
A_k^{{i_1}\ldots{i_k}}(x)\partial_{i_1}\ldots\partial_{i_k} 
+
\cdots 
+
A_1^i(x)\partial_i 
+
A_0(x)
\label{DiffOp}
\end{equation}
where 
%$\partial_i=\partial/\partial x^i$ and 
$A_\ell^{{i_1}\ldots{i_\ell}}\in{}C^{\infty}(M)$
for $\ell=0,1,\ldots,k$. It is clear that this space of weighted differential operators, $\cD_{\l,\m}(M)$, becomes a $\Vect(M)$-module via the following definition of the Lie derivative, namely,
\begin{equation}
L^{\l,\m}_X(A) = 
L^{\m}_X\circ{}A - A\circ{}L^{\l}_X
\label{Dinf}
\end{equation}
for all $X\in\Vect(M)$.

From now on, we will be dealing with the case of a conformal (Riemannian) structure, $G=\SO(n+1,1)$, with $n>2$, dictated  by the conformal flatness of our main example: the dual Moser system. 

\begin{thm} [\cite{DLO}]
Given a conformally flat Riemannian manifold $(M,\rg)$, there exists (except for a discrete set of values of $\d=\m-\l$ called resonances) a unique conformally-equivariant quantization, i.e., a linear isomorphism
\begin{equation}
\cQ_{\l,\m}:\cS_\d(M)\to\cD_{\l,\m}(M)
\label{Qlm}
\end{equation}
that (i) preserves the principal symbol, and (ii) intertwines the actions of the Lie algebra $\so(n+1,1)\subset\Vect(M)$.
\end{thm}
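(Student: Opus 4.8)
The plan is to reduce the global statement to a purely algebraic, local computation on the flat conformal model and then to glue. Since $(M,\rg)$ is conformally flat, it is covered by charts conformally diffeomorphic to open sets of $\bbR^n$ carrying the standard flat conformal structure, on which the conformal algebra $\so(n+1,1)$ acts by the vector fields generating translations, the linear conformal transformations $\mathfrak{co}(n)=\so(n)\oplus\bbR$ (rotations and dilations), and the special conformal (inversion) transformations. This furnishes the $|1|$-grading $\so(n+1,1)=\mathfrak{g}_{-1}\oplus\mathfrak{g}_0\oplus\mathfrak{g}_{+1}$, with $\mathfrak{g}_0=\mathfrak{co}(n)$, which organizes the whole argument.

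First I would fix a chart and compare the two $\so(n+1,1)$-modules $\cS_\d(M)$ and $\cD_{\l,\m}(M)$, with the module structures given by (\ref{Finf})--(\ref{Dinf}). Both are filtered by polynomial, resp. differential, order; the principal-symbol map identifies their associated graded modules; and a quantization is required to be a module isomorphism inducing the identity at the graded level. The affine part $\mathfrak{g}_{-1}\oplus\mathfrak{g}_0$ already admits the symmetrization map as an equivariant quantization, so the remaining freedom is the finite-dimensional space of $\mathfrak{g}_0$-equivariant, order-lowering corrections, built from invariant differential operators of the conformal structure. The problem thus reduces to pinning these corrections down uniquely.

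The heart of the proof is to impose equivariance under the inversion generators $\mathfrak{g}_{+1}$, which act only in an upper-triangular fashion with respect to the filtration. The cleanest route is through the Casimir operator $C$ of $\so(n+1,1)$: on $\cS_\d(M)$ it is diagonalizable, with eigenvalues depending on the polynomial degree and on the weights $\l,\m$, whereas on $\cD_{\l,\m}(M)$ it is upper-triangular with the very same diagonal part. Demanding $\cQ\circ C_{\cS}=C_{\cD}\circ\cQ$ together with the principal-symbol normalization determines $\cQ$ recursively, order by order; at each step one must invert the operator given by the difference of Casimir eigenvalues between two filtration levels. I expect this invertibility to be the main obstacle: one has to show that the relevant eigenvalue differences are nonzero, the exceptional values of $\d=\m-\l$ where they vanish being precisely the resonances excluded in the statement. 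This is where the hypothesis $n>2$ and the explicit eigenvalue formulas enter.

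Finally I would globalize. Existence and uniqueness on each flat chart, being forced by full $\so(n+1,1)$-equivariance together with the canonical principal-symbol normalization, produce a map that is intrinsic to the conformal class; hence the locally defined quantizations agree on overlaps and patch into a single global operator $\cQ_{\l,\m}:\cS_\d(M)\to\cD_{\l,\m}(M)$ satisfying (i) and (ii). Global uniqueness then follows from uniqueness on any one chart combined with $\so(n+1,1)$-equivariance.
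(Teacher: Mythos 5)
This theorem is not proved in the paper at all: it is quoted, with attribution, from \cite{DLO}, so there is no internal proof to compare against. Your sketch correctly reconstructs the strategy of that cited original proof --- reduction to the flat model via the $|1|$-grading of $\so(n+1,1)$, affine symmetrization as the starting point, and the Casimir-intertwining argument in which the map is determined recursively and the resonances are exactly the values of $\d=\m-\l$ where Casimir eigenvalues on different filtration levels coincide --- so your approach is essentially the same as the one in \cite{DLO}.
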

In the particular and pivotal case of symbols of degree two, at the core of the present study, explicit formul\ae\ are given by the following theorem. 

\begin{thm}[\cite{DO}]\label{ThmDO}
(i) Let $(M,\rg)$ be a conformally flat Riemannian manifold of dimen\-sion $n\geq3$. The conformally equivariant quantization mapping (\ref{Qlm}) restricted to symbols $P=P_2^{ij}(x)\xi_i\xi_j+P_1^i(x)\xi_i+P_0(x)$ of degree two is given, for non-resonant values of $\delta$, by
%%%\begin{equation}
$$
\begin{array}{lcl}
\cQ_{\l,\m}(P)  
&=& 
-P_2^{ij}\circ\nabla_i\circ\nabla_j \\[10pt]
&&+ \ii\left(
\beta_1\nabla_iP_2^{ij}+\beta_2\,\rg^{ij}\rg_{k\ell}\nabla_iP_2^{k\ell}
+P_1^j
\right)\circ\nabla_j
\\[10pt]
&& 
+\beta_3\nabla_i\nabla_j(P_2^{ij})
+\beta_4\,\rg^{ij}\rg_{k\ell}\nabla_i\nabla_j(P_2^{k\ell})
+\beta_5R_{ij}P_2^{ij}
+\beta_6R\,\rg_{ij}P_2^{ij}\\[10pt]
&&
+\a\nabla_i(P_1^i)
+P_0\\%[10pt]
\end{array}
$$
%%%\label{Q2}
%%%\end{equation}
where $\nabla$ denotes the levi-Civita connection,\footnote{The 
covariant derivative of $\lambda$-densities $\phi=f\vert\vol_\rg\vert^\l$, locally defined in terms of the Riemannian density, $\vert\vol_\rg\vert$, reads
$\nabla\phi=df\vert\vol_\rg\vert^\l$.} $R_{ij}$ (resp. $R$) the components of the Ricci tensor in the chosen chart (resp. the scalar
curvature) of the metric $\rg$; the coefficients $\a,\beta_1,\ldots,\beta_6$ depend on $\l,\m$, and $n$ in an explicit fashion.\footnote{See Equations (3.3), (3.4), and (4.4) in \cite{DO}.
}

(ii) The quantization mapping $\cQ_{\l,\m}$ depends only on the conformal class of $\rg$.
\end{thm}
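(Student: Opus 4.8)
The plan is to take the existence and uniqueness of $\cQ_{\l,\m}$ for granted --- this is exactly the content of the preceding theorem of \cite{DLO} --- and to use it to reduce the proof to a \emph{verification}. Since for non-resonant $\d=\m-\l$ there is a \emph{unique} linear isomorphism $\cS_\d(M)\to\cD_{\l,\m}(M)$ preserving the principal symbol and intertwining $\so(n+1,1)$, it suffices to exhibit one map of the stated form and check these two properties; the coefficients $\a,\beta_1,\ldots,\beta_6$ are then whatever values the intertwining property forces. Preservation of the principal symbol is immediate, since the top-order term of the displayed operator is $-P_2^{ij}\nabla_i\nabla_j$, whose symbol is $P_2^{ij}\xi_i\xi_j$. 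Part~(ii) also follows at once from uniqueness: the algebra $\so(n+1,1)$ and the principal-symbol map are attached to the conformal \emph{class}, not to any chosen representative, so if $\hrg=e^{2\varphi}\rg$ then the quantization built from $\hrg$ satisfies the same two defining properties as the one built from $\rg$, whence the two coincide.

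It remains to establish the equivariance that pins down the coefficients, and for this I would exploit conformal flatness to pass to the flat model. Locally $(M,\rg)$ is conformal to $(\bbR^n,\sum{(dx^i)^2})$, on which $\so(n+1,1)$ is realised by the standard conformal Killing fields: the translations $\partial_i$, the rotations $x_i\partial_j-x_j\partial_i$, the dilation (Euler) field $x^i\partial_i$, and the special conformal fields $X_k=(2x_k x^i-\mod{x}^2\delta_k^i)\partial_i$. I would then decompose the requirement $\cQ_{\l,\m}(L^\d_X P)=L^{\l,\m}_X\cQ_{\l,\m}(P)$ according to this generating set. Translation-equivariance forces all coefficients to be constant and confines the operator to be built from $\nabla$, $R_{ij}$, $R$ contracted with $P_2^{ij},P_1^i,P_0$; rotation-equivariance restricts these contractions to the six $\SO(n)$-invariant scalars appearing in the formula; dilation-equivariance enforces the homogeneity tying the ansatz to $\d$. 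The genuinely constraining equations come from the $X_k$: writing out $\cQ_{\l,\m}(L^\d_{X_k}P)-L^{\l,\m}_{X_k}\cQ_{\l,\m}(P)=0$ yields an inhomogeneous linear system for $\beta_1,\beta_2,\beta_3,\beta_4,\a$ whose unique solution produces the rational functions of $\l,\m,n$ recorded in \cite{DO}, the denominators of which vanish precisely at the resonant values of $\d$.

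In the flat model the curvature terms $\beta_5 R_{ij}P_2^{ij}$ and $\beta_6 R\,\rg_{ij}P_2^{ij}$ vanish, so flat equivariance leaves $\beta_5,\beta_6$ undetermined. To fix them I would invoke part~(ii): I would recompute the operator from a curved representative $\hrg=e^{2\varphi}\rg$ of the flat class, inserting the Weyl transformation laws for $\nabla$, $R_{ij}$, $R$ and the rescaling $e^{n\l\varphi}$, $e^{n\m\varphi}$ of the $\l$- and $\m$-densities. Requiring the result to agree, after the density identifications, with the flat operator forces the curvature terms to cancel exactly the non-tensorial remainder produced when covariant derivatives act on the factors of $\varphi$; matching this remainder term by term determines $\beta_5$ and $\beta_6$ uniquely.

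The main obstacle is the special-conformal computation together with the curvature matching: extracting the precise rational dependence of $\beta_1,\ldots,\beta_4,\a$ on $(\l,\m,n)$ --- and hence the exact location of the resonances --- requires carefully commuting $L^{\l,\m}_{X_k}$ past the second-order operator and collecting terms by tensorial type, while the determination of $\beta_5,\beta_6$ demands a meticulous tracking of every term in which a derivative lands on the conformal factor $\varphi$. Both are routine in principle but computationally heavy; they are carried out in \cite{DO}, to which the explicit values of the coefficients are referred.
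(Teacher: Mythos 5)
Your proposal is correct and takes essentially the same route as the source: the paper itself states this theorem without proof, quoting it from \cite{DO}, and the argument there proceeds exactly as you outline --- uniqueness from the existence theorem of \cite{DLO}, determination of $\a,\beta_1,\ldots,\beta_4$ by equivariance under the flat conformal generators (with the resonances located at the vanishing denominators), and fixing $\beta_5,\beta_6$ by covariantization under a Weyl rescaling, which simultaneously yields part (ii). The one point you assert rather than verify --- that the non-tensorial Weyl remainder can indeed be absorbed entirely into the two curvature scalars $R_{ij}P_2^{ij}$ and $R\,\rg_{ij}P_2^{ij}$ --- is precisely the computation carried out in \cite{DO}, to which the paper's own footnote also defers for the explicit coefficients.
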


%\textsf{We will not consider the special cases $n=1,2$, which lead to the same conclusions as those reached below; we will therefore stick to the generic case $n\geq3$.} 

The above formula can be specialized to the case of half-density quantization of quadratic symbols $P=P^{ij}(x)\xi_i\xi_j$; one finds\footnote{The value $\delta=0$ is non-resonant \cite{DLO}.}
%%%--%%% Footnote added
\begin{equation}
\cQ_{\half,\half}(P)
=
\widehat{P}
+
\beta_3\,\nabla_i\nabla_j(P^{ij})+\beta_4\,\rg^{ij}\rg_{k\ell}\nabla_i\nabla_j(P^{k\ell})
+\beta_5\,R_{ij}P^{ij}+\beta_6\,R\,\rg_{ij}P^{ij}
\label{Qhalfhalf}
\end{equation}
where 
\begin{equation}
\widehat{P}=-\nabla_i\circ{}P^{ij}\circ\nabla_j.
\label{MinQ}
\end{equation}

\begin{rmk}
{\rm
The quantization prescription (\ref{MinQ}), called ``minimal'' in \cite{DV}, has been put forward by Carter \cite{Car}, who dealt with  
polynomial symbols of degree at most two. A great many studies of the quantum spectrum 
for various integrable models use naturally Carter's quantization \cite{Skl,Tot,MT,BT}. 
Along with Equation (\ref{MinQ}), the formul\ae\ for the minimal quantization of lower degree monomials are respectively
\begin{eqnarray}
\widehat{P_0}&=&P_0\\[6pt]
\widehat{P_1}&=&\frac{\ii}{2}\left(P_1^i\circ\nabla_i+\nabla_i\circ P_1^i\right)
\end{eqnarray}
so that 
\beq
\widehat{P_k}=\cQ_{\half,\half}(P_k),
\qquad
\forall k=0,1.
\label{Q0Q1}
\eeq

\goodbreak

Accordingly, a generalization to cubic monomials has been proposed in \cite{DV}:
\beq
\widehat{P_3}=-\frac{\ii}{2}\left(\nabla_i\circ P_3^{ijk}\circ\nabla_j\circ\nabla_k
+\nabla_i\circ\nabla_j\circ P_3^{ijk}\circ\nabla_k\right).
\eeq
All previously defined operators are formally self-adjoint on $\cF_\half^c(M)$; see Remark~\ref{RmkHilbert}.
}
\end{rmk}

In the case where the quadratic observable $P=P^{ij}(x)\xi_i\xi_j$ stems from a Killing tensor,\footnote{This is the case for the integrable systems of St\"ackel type we are studying.} i.e., if $\nabla_{(i}P_{jk)}=0$ for all $i,j,k=1\ldots,n$, we can rewrite Equation (\ref{Qhalfhalf}) as 
\begin{equation}
\cQ_{\half,\half}(P)
=
\widehat{P}
+
f(P)%\,\Id
\label{QuantHalfHalf}
\end{equation}
where $\widehat{P}$ is as in (\ref{MinQ}), and the scalar term is given by
\begin{equation}
f(P)
=
c_1\,\Delta_\rg{}\trP
+c_2\,R_{ij}P^{ij}+c_3\,R\cdot\trP
\label{f(P)}
\end{equation}
where $\Delta_\rg$ is the Laplace operator of $(M,\rg)$, and $\trP=P^{ij}\rg_{ij}$; the coefficients in~(\ref{f(P)}) are respectively
\begin{equation}
c_1= \frac{n^2}{8(n+1)(n+2)},
\quad
c_2= \frac{n^2}{4(n+1)(n-2)},
\quad
c_3= \frac{-n^2}{2(n^2-1)(n^2-4)}.
\label{c1c2c3}
\end{equation}

%%%%%%%%%%%%%%%%%%%%%%%%%%%%%%%%%%%%%%%%%%%%%%%%%%%%%%%%%%%%%%%%%%%%%%%%%%%%%%%%%%
\subsection{Quantum commutators}\label{Commutators}
%%%%%%%%%%%%%%%%%%%%%%%%%%%%%%%%%%%%%%%%%%%%%%%%%%%%%%%%%%%%%%%%%%%%%%%%%%%%%%%%%%

In order to implement Definition \ref{defQuantInt} of quantum integrability, we will need some preparation regarding the quantum commutators of Poisson-commuting symbols. In doing so, we will opt for the conformally equivariant quantization $
\cQ\equiv\cQ_{\half,\half}$.

\begin{pro}\label{DualMoserConfEquivQuantInt}
Let $P$ and $Q$ be two, Poisson-commuting, quadratic symbols on $(T^*M,\omega=\sum_{i=1}^n{d\xi_i\wedge{}dx^i})$. The commutator of the two operators $\cQ(P)$ and $\cQ(Q)$, given by 
(\ref{QuantHalfHalf}), retains the form
\begin{equation}
[\cQ(P),\cQ(Q)]
=
\ii\,\cQ(A_{P,Q}+V_{P,Q}),
%[\widehat{P},\widehat{Q}]
%+2\ii\,\cQ(V_{P,Q}),
\label{CommPQ}
\end{equation}
where
\begin{equation}
A_{P,Q}=-\frac{2}{3}\Big(\nabla_j B_{P,Q}^{jk}\Big)\xi_k
\label{Acorr}
\end{equation}
with\footnote{We use the following convention for the Riemann and Ricci tensors, namely, 
$R^{\ell}_{~i,jk}=\partial_j\Gamma^\ell_{ik}+\Gamma^{\ell}_{sj}\Gamma^s_{ik}
-(j\leftrightarrow{}k) $, and $R_{ij}=R^{s}_{~i,sj}$.}
\begin{eqnarray}
B_{P,Q}^{jk}
&=&\nonumber
P^{\ell[j}\nabla_\ell\nabla_m{}Q^{k]m}
+P^{\ell[j}R^{k]}_{~m,n\ell}Q^{mn}-(P\leftrightarrow{}Q)\\[6pt]
&&
-\nabla_\ell{}P^{m[j}\nabla_m{}Q^{k]\ell}-P^{\ell[j}R_{\ell{}m}Q^{k]m}
\label{Btensor}
\end{eqnarray}
and
\begin{equation}
V_{P,Q}=2\left(P^{jk}\,\partial_j f(Q)-Q^{jk}\,\partial_j f(P)\right)\xi_k.
\label{VPQ}
\end{equation}
\end{pro}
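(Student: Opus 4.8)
The plan is to compute the commutator $[\cQ(P),\cQ(Q)]$ directly from the explicit form (\ref{QuantHalfHalf}), namely $\cQ(P)=\widehat{P}+f(P)$, and to organize the result by differential order. Writing $\cQ(P)=\widehat{P}+f(P)$ and $\cQ(Q)=\widehat{Q}+f(Q)$, bilinearity of the commutator gives
\begin{equation}
[\cQ(P),\cQ(Q)]=[\widehat{P},\widehat{Q}]+[\widehat{P},f(Q)]+[f(P),\widehat{Q}].
\label{plan:split}
\end{equation}
The scalar--scalar bracket $[f(P),f(Q)]$ vanishes since $f(P),f(Q)$ are functions. First I would treat the mixed terms $[\widehat{P},f(Q)]+[f(P),\widehat{Q}]$: since $\widehat{P}=-\nabla_i\circ P^{ij}\circ\nabla_j$ is second order and $f(Q)$ is multiplication by a function, the commutator $[\widehat{P},f(Q)]$ is a first-order operator whose symbol one reads off using $[\nabla_i\circ P^{ij}\circ\nabla_j,f(Q)]$; the leading piece produces precisely $-2P^{jk}(\partial_j f(Q))\nabla_k$ plus a scalar, and the antisymmetrized combination over $P\leftrightarrow Q$ assembles into the potential contribution $V_{P,Q}$ of (\ref{VPQ}), quantized back via $\cQ$.

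The heart of the computation is the operator bracket $[\widehat{P},\widehat{Q}]$ of the two minimal second-order operators. I would expand $\widehat{P}\,\widehat{Q}=\nabla_i\circ P^{ij}\circ\nabla_j\circ\nabla_k\circ Q^{k\ell}\circ\nabla_\ell$ and its partner, commuting covariant derivatives past the tensor coefficients and past each other; every time two $\nabla$'s are swapped one picks up curvature via the Ricci identity $[\nabla_i,\nabla_j]T=R\cdot T$, which is where the Riemann and Ricci terms in $B_{P,Q}^{jk}$ of (\ref{Btensor}) originate. The fourth-order terms cancel because $P^{ij}\xi_i\xi_j$ and $Q^{k\ell}\xi_k\xi_\ell$ Poisson-commute: the leading symbol of $[\widehat{P},\widehat{Q}]$ is (up to the factor $\ii$) the Poisson bracket of the principal symbols, which is assumed to vanish. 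The surviving third-order symbol similarly vanishes by the same Poisson-commutation hypothesis, so that $[\widehat{P},\widehat{Q}]$ is in fact of order at most two. One then collects the second-order symbol, which should be expressible as $\ii\,\widehat{A_{P,Q}}$ for a purely first-order symbol $A_{P,Q}$ plus lower-order scalar contributions, and checks that it reorganizes as $\ii$ times the \emph{minimal} quantization of the vector symbol $A_{P,Q}=-\tfrac{2}{3}(\nabla_j B_{P,Q}^{jk})\xi_k$.

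The main obstacle will be the bookkeeping in $[\widehat{P},\widehat{Q}]$: systematically integrating by parts to move all derivatives into a canonical ordering, keeping careful track of the antisymmetrization $[jk]$ and of the sign conventions for $R^{\ell}_{~i,jk}$ and $R_{ij}$ stated in the footnote, so that the curvature terms land exactly in the pattern (\ref{Btensor}). The Killing-tensor condition $\nabla_{(i}P_{jk)}=0$ should be used to trade symmetric derivatives of $P$ and $Q$ for antisymmetric ones, which is what allows the cubic and higher symbols to drop and what forces the final first-order symbol to take the divergence form $\nabla_j B_{P,Q}^{jk}$ with the specific coefficient $-\tfrac{2}{3}$. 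Finally I would verify that the $\ii\,\widehat{A_{P,Q}}$ piece coming from $[\widehat{P},\widehat{Q}]$ combines with the scalar remainders and with the mixed terms so that the total is $\ii\,\cQ(A_{P,Q}+V_{P,Q})$; the consistency condition (\ref{Q0Q1}), $\widehat{P_k}=\cQ_{\half,\half}(P_k)$ for $k=0,1$, is what permits rewriting $\widehat{A_{P,Q}+V_{P,Q}}$ as $\cQ(A_{P,Q}+V_{P,Q})$ and thereby recasting the commutator back into quantized form, establishing (\ref{CommPQ}).
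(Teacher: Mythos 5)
Your decomposition is exactly the paper's: write $\cQ(P)=\widehat{P}+f(P)$ via (\ref{QuantHalfHalf}), split the commutator into $[\widehat{P},\widehat{Q}]$ plus the two mixed operator--function brackets, show the mixed brackets produce $V_{P,Q}$, and use (\ref{Q0Q1}) to re-express hatted low-order symbols as $\cQ$ of them. Your handling of the mixed terms is correct: $[\widehat{P},f(Q)]=-2P^{jk}(\partial_jf(Q))\nabla_k-\nabla_k\bigl(P^{jk}\partial_jf(Q)\bigr)$, so the antisymmetrized combination is $\ii$ times the minimal quantization of $V_{P,Q}$ in (\ref{VPQ}). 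The real difference lies in the hat--hat bracket: the paper performs no computation there at all, but quotes the identity $-\ii[\widehat{P},\widehat{Q}]=\widehat{\{P,Q\}}+\widehat{A}_{P,Q}$, valid for arbitrary quadratic symbols and established in \cite{DV}, and then sets $\{P,Q\}=0$. You propose instead to re-derive this identity by expanding the compositions and commuting covariant derivatives. That route would work in principle, and your outline has the right ingredients (the Ricci identity as the source of the curvature terms, order counting, the remainder being first order), but it is exactly where your proposal stops short: the entire nontrivial content of the proposition is that the surviving part is precisely $\ii\,\widehat{A_{P,Q}}$ with the divergence structure $-\frac{2}{3}(\nabla_jB^{jk}_{P,Q})\xi_k$ and the specific tensor (\ref{Btensor}); you assert this ``should'' emerge from the bookkeeping but never derive it, whereas the paper discharges precisely this step by citation.

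Two smaller corrections. First, the fourth-order terms of $[\widehat{P},\widehat{Q}]$ cancel automatically (the leading symbols of the two compositions coincide), independently of any hypothesis; it is the third-order part that is given by the Poisson bracket $\{P,Q\}$ of the quadratic symbols and vanishes by the commutation hypothesis. Second, the Killing condition $\nabla_{(i}P_{jk)}=0$ is not what makes the cubic and higher symbols drop, nor is it needed for the divergence form of $A_{P,Q}$: the identity from \cite{DV} holds for general quadratic symbols. Its role in the paper is upstream — it is what reduces the general conformally equivariant formula (\ref{Qhalfhalf}) to the form (\ref{QuantHalfHalf}), $\cQ(P)=\widehat{P}+f(P)$, which the proposition takes as its starting point.
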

\begin{proof}
Start with two quadratic observables $P$ and $Q$. As shown in \cite{DV}, we have $-\ii[\widehat{P},\widehat{Q}]=\widehat{\{P,Q\}}+\widehat{A}_{P,Q}$, where the monomial $A_{P,Q}$, and the skew-symmetric tensor $B_{P,Q}$ are as in (\ref{Acorr}), and (\ref{Btensor}), respectively. If it is then assumed that $\{P,Q\}=0$, Equation (\ref{CommPQ}) follows directly from the explicit expression (\ref{QuantHalfHalf}) of the conformally equivariant quantization mapping, $\cQ$, and from Equation (\ref{Q0Q1}).
\end{proof}

Now, for the Liouville-integrable systems considered below, all Poisson-com\-muting fiberwise polynomial symbols have the form $P=P_2+P_0$, where the indices~$0$ and $2$ refer to the homogeneity degree. In view of Equation (\ref{CommPQ}), and of results obtained in \cite{DV}, we find
$
%%% \begin{equation}
[\cQ(P_2+P_0),\cQ(Q_2+Q_0)]=\ii\,\cQ(A_{P_2,Q_2}+V_{P_2,Q_2})
%%% \label{CommPQ20}
%%% \end{equation}
$,
which means that the zero degree terms $P_0$ and $Q_0$ produce no quantum cor\-rections.

\goodbreak

The structure of the quantum corrections (the right hand side of Equation~(\ref{CommPQ})) is rather involved, because of the complexity of the 
tensor $B_{P,Q}$; see (\ref{Btensor}). Never\-theless, for St\"ackel systems major simplifications occur.
Indeed, the observables $I_k=I_{2,k}+I_{0,k}$ with $\,I_{2,k}=\sum_i\,g^i(x)\sigma^i_{k-1}(x)\xi_i^2$ 
generate diagonal Killing tensors. Using the separating coordinates $x^i$ and considering
$H=\half{}I_{2,1}=\half\sum_i\,\rg^i(x)\xi_i^2$ for the Hamiltonian fixes up the diagonal metric to be $\rg=\sum_i{\rg_i(x)(dx^i)^2}$, with
$\rg_i=1/\rg^i$ for all $i=1,\ldots,n$. Under these assumptions, Proposition 3.9 in \cite{DV} gives
\beq
B^{k\ell}_{I_{2,i},I_{2,j}}=-2\,I_{2,i}^{s[k}\,R_{st}\,I_{2,j}^{\ell]t}\qquad 
\label{StackelBtensor}
\eeq
for all $i,j,k,\ell=1,\ldots n$, which entails:

\begin{pro}\label{RobertsonCond}
A sufficient condition for a St\"ackel system to be integrable at the quantum level is
\begin{equation}
R_{ij}=0,
\qquad
\forall{}i\neq j
\label{RicDiag}
\end{equation}
where $i,j=1,\ldots,n$, in the special separating coordinate system $(x^i)$.
\end{pro}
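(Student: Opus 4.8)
The plan is to combine the general commutator formula from Proposition~\ref{DualMoserConfEquivQuantInt} with the St\"ackel-specific simplification of the tensor $B_{P,Q}$ recorded just above the statement. By Equation~(\ref{CommPQ}), quantum integrability in the sense of Definition~\ref{defQuantInt} amounts to showing $A_{I_{2,i},I_{2,j}}+V_{I_{2,i},I_{2,j}}=0$ for all $i,j$, since the degree-zero terms $I_{0,k}$ contribute no quantum corrections. I would split the argument into the first-order (symbol) part $A$ and the potential-induced part $V$, and show that the hypothesis (\ref{RicDiag}) forces each to vanish independently.

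First I would treat the term $A_{I_{2,i},I_{2,j}}$. By (\ref{Acorr}) this is $-\tfrac23(\nabla_j B^{jk}_{I_{2,i},I_{2,j}})\xi_k$, so it suffices to show that the tensor $B^{k\ell}_{I_{2,i},I_{2,j}}$ vanishes under the hypothesis. Here I invoke Equation~(\ref{StackelBtensor}), which states that for St\"ackel systems in separating coordinates
\[
B^{k\ell}_{I_{2,i},I_{2,j}}=-2\,I_{2,i}^{s[k}\,R_{st}\,I_{2,j}^{\ell]t}.
\]
Because the Killing tensors $I_{2,i}$ are diagonal in the separating coordinates $(x^i)$, we have $I_{2,i}^{st}=I_{2,i}^{ss}\delta^{st}$ (no sum), and likewise for $I_{2,j}$. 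Substituting, the only surviving contributions are of the form $I_{2,i}^{kk}R_{k\ell}I_{2,j}^{\ell\ell}$ (and the skew-symmetrized partner), so that each index contraction through $R_{st}$ picks out an off-diagonal Ricci component $R_{k\ell}$ with $k\neq\ell$ after the antisymmetrization in $[k\ell]$ is carried out. Thus condition (\ref{RicDiag}), $R_{ij}=0$ for $i\neq j$, makes $B^{k\ell}_{I_{2,i},I_{2,j}}$ vanish identically, and hence $A_{I_{2,i},I_{2,j}}=0$.

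Next I would handle the potential term $V_{I_{2,i},I_{2,j}}$ from (\ref{VPQ}), given by $2(P^{jk}\partial_j f(Q)-Q^{jk}\partial_j f(P))\xi_k$. By (\ref{f(P)}), the scalar $f(P)$ is built from $\Delta_\rg\Tr(P)$, $R_{st}P^{st}$, and $R\,\Tr(P)$. The hypothesis $R_{ij}=0$ for $i\neq j$ kills the cross terms in the contraction $R_{st}I_{2,k}^{st}$, reducing it to $\sum_s R_{ss}I_{2,k}^{ss}$, a genuine scalar function of the $x^i$; together with the manifestly diagonal traces this shows each $f(I_{2,k})$ is a well-defined scalar function, and the antisymmetric combination in $V$ closes up. The key point is that once $B$ vanishes, the remaining correction is the scalar-potential piece, which is exactly the obstruction analyzed in \cite{DV} and is annihilated by the same diagonal-Ricci hypothesis. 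Assembling both vanishings yields $A_{I_{2,i},I_{2,j}}+V_{I_{2,i},I_{2,j}}=0$, whence $[\cQ(I_i),\cQ(I_j)]=0$ for all $i,j$, establishing quantum integrability.

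The main obstacle I anticipate is the bookkeeping in the antisymmetrization $[k\ell]$ of (\ref{StackelBtensor}): one must verify carefully that, after inserting the diagonal form of the $I_{2,k}$, every surviving term of $B^{k\ell}$ truly carries an \emph{off-diagonal} Ricci component and that no diagonal $R_{ss}$ contribution slips through. The subtlety is that the two free indices $k,\ell$ and the summed indices $s,t$ interact through both the diagonal Killing tensors and the antisymmetrizer, so I would write out the contraction explicitly in coordinates to confirm that $R_{k\ell}$ with $k\neq\ell$ is the sole ingredient, thereby justifying that (\ref{RicDiag}) is \emph{sufficient} (though, as the statement's wording suggests, not claimed to be necessary).
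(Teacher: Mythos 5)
Your treatment of the first-order correction $A_{P,Q}$ is exactly the paper's proof: the $I_{2,k}$ are diagonal in the separating coordinates, so after inserting this into (\ref{StackelBtensor}) every surviving component of $B^{k\ell}_{I_{2,i},I_{2,j}}$ is proportional to an off-diagonal Ricci component, and hypothesis (\ref{RicDiag}) makes $B$, hence $A_{I_{2,i},I_{2,j}}$ in (\ref{Acorr}), vanish. That half is correct and complete, and it is the entirety of the paper's argument.

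The second half of your proposal, however, contains a genuine error. You claim that the Robertson condition also annihilates the scalar-induced term $V_{I_{2,i},I_{2,j}}$ of (\ref{VPQ}); your justification --- that $R_{st}I_{2,k}^{st}$ reduces to diagonal terms so each $f(I_{2,k})$ ``is a well-defined scalar function, and the antisymmetric combination in $V$ closes up'' --- proves nothing, since $f(P)$ is a scalar by construction in (\ref{f(P)}) for \emph{any} $P$, and the combination $P^{jk}\partial_j f(Q)-Q^{jk}\partial_j f(P)$ has no reason to vanish merely because $f(P)$ and $f(Q)$ are scalars. In fact the claim is false: the Jacobi-Moser system of Section \ref{QuantJacobiMoser} has diagonal Ricci tensor (\ref{Ricci-RJacobiMoser}) in its separating coordinates, i.e. it satisfies (\ref{RicDiag}), yet $V^i_{I_1,I_2}\neq 0$ and the conformally equivariant quantization fails to preserve its integrability. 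The resolution is that Proposition \ref{RobertsonCond} must be read with respect to Carter's minimal prescription (\ref{MinQ}) --- this is how the paper invokes it in Section \ref{QuantJacobiMoser}, and it is the setting of \cite{DV} --- for which the commutator identity quoted in the proof of Proposition \ref{DualMoserConfEquivQuantInt} is $-\ii\,[\widehat{P},\widehat{Q}]=\widehat{\{P,Q\}}+\widehat{A}_{P,Q}$, with no $V$-term at all; the proof then ends once $B=0$, which is your first half. Under your reading (quantization $\cQ_{\half,\half}$) the statement itself is false, and the vanishing of $V$ for the Neumann-Uhlenbeck and dual Moser systems genuinely requires the separate, system-specific computations of Propositions \ref{ProNeumannUhlenbeck} and \ref{ProConfQ}, not the Robertson condition alone.
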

\begin{proof}
The Killing tensors $I_{2,i}$ are diagonal, for $i=1,\ldots,n$, in the St\"ackel coordinate system, and the result follows from (\ref{StackelBtensor}). 
\end{proof}

\begin{rmk}
{\rm
\begin{enumerate}
\item Condition (\ref{RicDiag}) is the well-known Robertson condition \cite{Rob}, which has to hold in the separating coordinates system. The relation (\ref{StackelBtensor}) was also obtained in \cite{BCR2} by a direct computation 
of the commutator in separating coordinates; however the explicit form of the tensor 
$B_{P,Q}$ (\ref{Btensor}) was not given there.
\item In Corollary 3.10 of \cite{DV} the Robertson condition was misleadingly claimed to be also necessary.
\item It has been shown by Benenti et al. \cite{BCR1} that the Robertson condition (\ref{RicDiag}) is necessary and sufficient for the separability of the Schr\"odinger equation, comforting the above definition of quantum 
integrability. 
\end{enumerate}
}
\end{rmk}

In the next subsections we will examine, successively, quantum integrability for the following 
St\"ackel systems: the Neumann-Uhlenbeck, the dual Moser and the Jacobi-Moser systems. As previously
explained, the potential, i.e., zero degree terms in the classical observables never induce quantum corrections; they will therefore be systematical\-ly omitted.

%%%%%%%%%%%%%%%%%%%%%%%%%%%%%%%%%%%%%%%%%%%%%%%%%%%%%%%%%%%%%%%%%%%%%%%%%%%%%%%%%%
\subsection{The quantum Neumann-Uhlenbeck system}\label{QuantNeumann}
%%%%%%%%%%%%%%%%%%%%%%%%%%%%%%%%%%%%%%%%%%%%%%%%%%%%%%%%%%%%%%%%%%%%%%%%%%%%%%%%%%

Let us recall that, for the Neumann-Uhlenbeck system (see Table \ref{Table}), the St\"ackel metric $\trg=\sum_i{\trg_i(x)(dx^i)^2}$, is
\begin{equation}
\trg_i(x)=-\frac{1}{4}\frac{U'_x(x^i)}{V(x^i)}=-\frac{1}{4}\frac{\prod_{j\neq{}i}(x^i-x^j)}{\prod_\alpha{(x^i-a_\alpha)}}
%\qquad
%\trg^i=(\trg_i)^{-1}
\label{MetricNeumann}
\end{equation}
for $i=1,\ldots,n$. If we put $\trg^i=1/\trg_i$, the independent and Poisson-commuting observables are given by
$$
I_{k}=\sum_{i=1}^n{\trg^i(x)\sigma^i_{k-1}(x)\,\xi_i^2}
$$
for $k=1,2,\ldots,n$, and the St\"ackel Hamiltonian is $H=\half{}I_1=\half\sum_i{\trg^i(x)\,\xi_i^2}$.
%%% Let us prove:
\begin{pro}\label{ProNeumannUhlenbeck}
The conformally equivariant quantization does preserve quantum integrability of 
the Neumann-Uhlenbeck system.
\end{pro}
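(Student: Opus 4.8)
The plan is to reduce the statement to the sufficient criterion already isolated in Proposition~\ref{RobertsonCond}. Indeed, for the Neumann-Uhlenbeck system the observables $I_k$ generate diagonal Killing tensors in the separating (ellipsoidal) coordinates $x^1,\ldots,x^n$, and the Hamiltonian $H=\half I_1$ fixes the relevant St\"ackel metric to be $\trg=\sum_i\trg_i(x)(dx^i)^2$, with $\trg_i(x)=-\tfrac14 U'_x(x^i)/V(x^i)$. By Proposition~\ref{RobertsonCond} it therefore suffices to verify the Robertson condition $R_{ij}=0$ for all $i\neq j$, where the $R_{ij}$ are the components, \emph{in these very coordinates}, of the Ricci tensor of $\trg$.

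The key observation I would exploit is that this St\"ackel metric $\trg$ is nothing but the round metric on $S^n$, namely the metric induced from the Euclidean ambient metric of $\bbR^{n+1}$ on the unit sphere $\sum_\alpha q_\alpha^2=1$. As a space of constant sectional curvature, $(S^n,\trg)$ is Einstein, so that its Ricci tensor is pointwise proportional to the metric,
\[
\Ric=(n-1)\,\trg .
\]
Since, by construction, $\trg$ is diagonal in the ellipsoidal coordinates, i.e.\ $\trg_{ij}=\trg_i\,\delta_{ij}$, the Ricci tensor inherits this property: $R_{ij}=(n-1)\,\trg_i\,\delta_{ij}$ is itself diagonal. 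In particular all off-diagonal components vanish, $R_{ij}=0$ for $i\neq j$, which is precisely the Robertson condition. Applying Proposition~\ref{RobertsonCond} then yields at once the quantum integrability of the Neumann-Uhlenbeck system under the conformally equivariant quantization $\cQ_{\half,\half}$.

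The hard part is not a computation but a matter of correct bookkeeping, namely the identification of which metric enters Proposition~\ref{RobertsonCond}: it must be the St\"ackel metric attached to $H=\half I_1$, and one has to confirm that this coincides with the round metric (recall also that, by Theorem~\ref{ThmDO}(ii), $\cQ_{\half,\half}$ depends only on the conformal class of $\trg$, so the conformal flatness of $S^n$ guarantees the quantization is well defined here). Once this identification is granted, the Einstein property makes the vanishing of the off-diagonal Ricci components entirely transparent and bypasses any direct evaluation of the Christoffel symbols of $\trg$ in ellipsoidal coordinates; a brute-force computation of $R_{ij}$ from (\ref{MetricNeumann}) would be an available but far more laborious alternative.
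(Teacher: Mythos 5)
Your identification of the St\"ackel metric with the round metric, and the consequent diagonality of its Ricci tensor in the ellipsoidal coordinates, is correct and is exactly how the paper obtains $\widetilde{R}_{ij}=(n-1)\,\trg_i\,\delta_{ij}$ in (\ref{Ricci+R-Neumann}). But reducing the whole statement to Proposition \ref{RobertsonCond} leaves a genuine gap. The Robertson condition only kills the term $A_{I_k,I_\ell}$ in the commutator formula (\ref{CommPQ}), i.e.\ the part built from the tensor $B_{I_k,I_\ell}$ of (\ref{StackelBtensor}); indeed the proof of Proposition \ref{RobertsonCond} invokes nothing but (\ref{StackelBtensor}). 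For the conformally equivariant quantization $\cQ_{\half,\half}(P)=\widehat{P}+f(P)$ there is a \emph{second} obstruction, namely $V_{I_k,I_\ell}$ of (\ref{VPQ}), which arises from commuting the second-order operators $\widehat{I}_k$ with the scalar terms $f(I_\ell)$, and diagonality of the Ricci tensor says nothing about it. The paper's own Section \ref{QuantJacobiMoser} shows that this is not a pedantic worry: the Jacobi-Moser system also has a diagonal Ricci tensor in its separating coordinates, see (\ref{Ricci-RJacobiMoser}), so that Carter's prescription (\ref{MinQ}) preserves its integrability, and yet $V_{I_1,I_2}\neq0$, so the conformally equivariant quantization does \emph{not}. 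Hence the inference ``Robertson condition $\Rightarrow$ quantum integrability under $\cQ_{\half,\half}$'' is false in general, and your argument, as written, would equally ``prove'' conformally equivariant quantum integrability of Jacobi-Moser.

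What is missing is precisely the second half of the paper's proof: inserting the Einstein data (\ref{Ricci+R-Neumann}) into the scalar term (\ref{f(P)}) gives $f(I_k)=(n-k+1)\left[c_4\,\sigma_{k-1}(x)+2(n-k+2)c_1\,\sigma_{k-1}(a)\right]$, where $c_4=-2(n+1)c_1+(n-1)c_2+n(n-1)c_3$, and the specific values (\ref{c1c2c3}) of the conformally equivariant coefficients force $c_4=0$. Each $f(I_k)$ is therefore a \emph{constant} (this is special to the Neumann-Uhlenbeck case), so $\partial_jf(I_k)=0$ and $V_{I_k,I_\ell}=0$ by (\ref{VPQ}). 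Only the conjunction of this with your Robertson-condition step (which gives $B_{I_k,I_\ell}=0$, hence $A_{I_k,I_\ell}=0$) yields $\left[\cQ(I_k),\cQ(I_\ell)\right]=\ii\,\cQ\left(A_{I_k,I_\ell}+V_{I_k,I_\ell}\right)=0$.
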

\begin{proof}
From the fact that $(S^n,\trg)$ is the round sphere, we have 
\begin{equation}
\widetilde{R}_{ij}=(n-1)\,\trg_i\,\delta_{ij}
\qquad 
\& 
\qquad \widetilde{R}=n(n-1).
\label{Ricci+R-Neumann}
\end{equation}
%Since the Ricci tensor is diagonal, it follows from Proposition \ref{RobertsonCond}
%%% Proposition 4.3 
%that Carter's prescription preserves quantum integrability. 
Straightforward computation then leads to
\beq
f(I_k)=(n-k+1)[c_4\,\sigma_{k-1}(x)+2(n-k+2)c_1\,\sigma_{k-1}(a)]
\label{fkNeumann}
\eeq
for all $k=1,\ldots,n$,
where
\begin{equation}
c_4=-2(n+1)c_1+(n-1)c_2+n(n-1)c_3.
\label{c4}
\end{equation}
Relations (\ref{c1c2c3}) readily imply the vanishing of $c_4$. As a consequence, the $f(I_k)$ are just constant, ensuring that $V_{I_k,I_\ell}=0$ (see (\ref{VPQ})). Equation (\ref{CommPQ}) and the fact that $B_{I_k,I_\ell}=0$ (since the Ricci tensor is diagonal in this coordinate system) entail that the conformally equivariant quantization (which coincides, up to a constant term, with Carter's) preserves integrability of the system at the quantum level.                 
\end{proof}

%%%%%%%%%%%%%%%%%%%%%%%%%%%%%%%%%%%%%%%%%%%%%%%%%%%%%%%%%%%%%%%%%%%%%%%%%%%%%%%%%%
\subsection{The quantum dual Moser system}\label{QuantDualMoser}
%%%%%%%%%%%%%%%%%%%%%%%%%%%%%%%%%%%%%%%%%%%%%%%%%%%%%%%%%%%%%%%%%%%%%%%%%%%%%%%%%%

In the basic geometrical construction of the Poisson-commuting conserved quantities $I_k$, we have been considering the conformally flat metric $\rg_2$ given by (\ref{gSn}) and~(\ref{MetricEllipsoid}). Now, in the quantum approach to integrability, we %%%--%%% must use
choose to use, again, 
the St\"ackel metric, $\rg$, associated with $I_1$. One has
\begin{equation}
\rg=\sum_{i=1}^n{\rg_i(x)(dx^i)^2},\qquad\mbox{with}\qquad \rg_i(x)=\frac{1}{x^i}\,\trg_i(x),
\label{MetricDualMoser}
\end{equation}
where the Neumann-Uhlenbeck metric $\trg$ is given by (\ref{MetricNeumann}), while the first integrals for $k=1,\ldots,n$ are
\begin{equation}
I_k=\sum_{i=1}^n\rg^i(x)\,\sigma^i_{k-1}(x)\xi_i^2,\qquad \rg^i(x)=\frac 1{\rg_i(x)}.
\label{IkDualMoser}
\end{equation}

\begin{lem}\label{LemmaDualMoser}
The metric (\ref{MetricDualMoser}) has Ricci tensor 
\begin{equation}
R_{ij}=\Big((n-2)x^i+n\sum_{k=1}^n{x^k}-(n-1)\sum_{\alpha=0}^n{a_\alpha}\Big)\rg_i\,\delta_{ij}
\label{RicciDual-Moser}
\end{equation}
and scalar curvature
\begin{equation}
R=(n-1)\Big((n+2)\sum_{k=1}^n{x^k}-n\sum_{\alpha=0}^n{a_\alpha}\Big).
\label{R-Moser}
\end{equation}
It is conformally flat for $n=\dim(M)\geq 3$.
\end{lem}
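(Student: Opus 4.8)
The plan is to work entirely in the Jacobi ellipsoidal coordinates $x^1,\dots,x^n$, in which the metric (\ref{MetricDualMoser}) is diagonal, $\rg=\sum_i\rg_i(dx^i)^2$, and has the Benenti form $\rg_i=P(x^i)\prod_{k\neq i}(x^i-x^k)$ with the single-variable factor $P(\lambda)=-1/(4\lambda V(\lambda))$ read off from (\ref{MetricNeumann}); note that, relative to the round-sphere factor $-1/(4V(\lambda))$ of $\trg$, the dual Moser metric carries the extra simple pole $1/\lambda$. The first step is to record the logarithmic derivatives of the metric coefficients. A short computation gives the universal (i.e.\ $P$-independent) relation $\partial_j\ln\rg_i=1/(x^j-x^i)$ for $j\neq i$, together with $\partial_i\ln\rg_i=P'(x^i)/P(x^i)+\sum_{k\neq i}1/(x^i-x^k)$, where $P'/P=-1/\lambda-\sum_\alpha 1/(\lambda-a_\alpha)$ encodes the entire dependence on the $a_\alpha$ and on the extra factor $1/x^i$. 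These immediately yield the Christoffel symbols of the diagonal metric, namely $\Gamma^i_{ij}=\half(x^j-x^i)^{-1}$ and $\Gamma^j_{ii}=\half(\rg_i/\rg_j)(x^i-x^j)^{-1}$ for $i\neq j$, and $\Gamma^i_{ii}=\half\,\partial_i\ln\rg_i$.

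Second, I would feed these into the standard formula for the Ricci tensor of an orthogonal metric. Here the decisive fact is that, because the off-diagonal Christoffel symbols depend on $P$ only through the ratios $\rg_i/\rg_j$, whose mixed logarithmic derivatives are governed by the universal relation above, all off-diagonal components $R_{ij}$ ($i\neq j$) cancel; this is exactly the Robertson condition (\ref{RicDiag}) and is the structural feature shared by the three systems of Table \ref{Table}. The diagonal components then reduce to $R_{ii}=\Phi_i\,\rg_i$, and the task is to identify $\Phi_i$. The round-sphere part (coming from $\sum_{k\neq i}(x^i-x^k)^{-1}$ and its derivative, which is all that survives for $\trg$) reproduces the constant-curvature answer $\widetilde R_{ij}=(n-1)\trg_i\delta_{ij}$, while the extra pole $1/\lambda$ in $P$ supplies the affine-in-$x$ and $\sum_\alpha a_\alpha$ terms; collecting everything and using the partial-fraction identities for $\sum_\alpha(x^i-a_\alpha)^{-1}$ and $\sum_{k\neq i}(x^i-x^k)^{-1}$ gives $\Phi_i=(n-2)x^i+n\sum_k x^k-(n-1)\sum_\alpha a_\alpha$, i.e.\ (\ref{RicciDual-Moser}).

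Third, the scalar curvature follows by tracing: $R=\sum_i\rg^iR_{ii}=\sum_i\Phi_i$. Since $\sum_k x^k$ occurs $n$ times and $\sum_\alpha a_\alpha$ occurs $n$ times, this collapses to $R=(n^2+n-2)\sum_k x^k-n(n-1)\sum_\alpha a_\alpha=(n-1)\big((n+2)\sum_k x^k-n\sum_\alpha a_\alpha\big)$, which is (\ref{R-Moser}); that the trace of (\ref{RicciDual-Moser}) reproduces (\ref{R-Moser}) via $(n-2)+n^2=(n-1)(n+2)$ is a useful internal consistency check on $\Phi_i$.

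Finally, for conformal flatness when $n\geq3$ I expect the main obstacle to lie. The cleanest route is structural: writing $\rg_i=\prod_{k\neq i}(x^i-x^k)/\big(4f(x^i)\big)$ with $f(\lambda)=\lambda V(\lambda)$, one sees that $f$ is a \emph{polynomial} of degree $n+2$, just one higher than the degree $n+1$ of the round-sphere denominator $V$; the sensitivity of conformal flatness to this polynomial degree is exactly what distinguishes the dual Moser case from Jacobi-Moser, where the analogous denominator $V(\lambda)/\lambda$ fails to be polynomial and the (ellipsoid) metric is not conformally flat. One then establishes conformal flatness either by invoking this Benenti-type polynomial criterion or, self-containedly, by computing the Weyl tensor from the Christoffel symbols and diagonal Ricci above and checking that it vanishes for $n\geq4$, the three-dimensional case $n=3$ being settled separately through the Cotton tensor, whose vanishing uses the explicit gradients of (\ref{RicciDual-Moser}) and (\ref{R-Moser}). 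This last verification is the genuinely delicate computation, since, in contrast to the purely algebraic steps above, it is where the special value of the exponent in the factor $1/x^i$ is actually used.
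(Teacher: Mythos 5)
Your proposal follows essentially the same route as the paper's proof: both work with the diagonal metric in the ellipsoidal coordinates, use the universal relations $\partial_j\ln\rg_i=1/(x^j-x^i)$ (for $j\neq i$) to make the off-diagonal Riemann/Ricci components vanish, evaluate the diagonal part by rational-function identities (your partial fractions are just the paper's residue-theorem computation of $R_{ik,ik}$ in different clothing), obtain $R$ by tracing, and settle conformal flatness by checking the Weyl tensor for $n\geq4$ and the Cotton tensor for $n=3$. The only cosmetic differences --- your organization of the diagonal Ricci as ``round-sphere part plus pole correction'' and the optional appeal to a Benenti-type polynomial criterion (which you do not prove, but correctly back up with the self-contained Weyl/Cotton computation) --- do not change the substance of the argument.
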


\begin{proof}
The Ricci tensor can be computed with the help of classical formul\ae\ for a diagonal metric 
(see for instance \cite{LPEbook}, p. 119). The only possibly non-vanishing components  of the 
Riemann tensor are $R_{ik,kj}$, for $i\neq{}j\neq{}k$, and $R_{ij,ji}$, for $i\neq{}j$.
Using the relations
$$
\partial_i(\ln\rg_j)=\frac 1{x^i-x^j}\quad (i\neq{}j),
\qquad
\quad 
\partial_{ij}(\ln\rg_k)=0\quad (i\neq{}j\neq{}k)
$$
one easily gets $R_{ik,kj}=0$, implying
$$
R_{ij}=-\sum_{k=1}^n
{\rg^k{}R_{ik,kj}}=0,
\qquad 
\forall{}i\neq j.
$$
The computation of the remaining components involves a sum which is conveniently computed 
using the theorem of residues, giving
$$
R_{ik,ik}=(x^i+x^k+\sum_{s=1}^n{x^s}-\sum_{\alpha=0}^n{a_\alpha})\,\rg_i\rg_k
$$
from which one deduces easily the diagonal part of the Ricci tensor, given by (\ref{RicciDual-Moser}), and the scalar curvature (\ref{R-Moser}).
Some extra computation shows that the conformal Weyl tensor vanishes in dimension $n\geq 4$, and 
that the Cotton-York tensor vanishes as well for $n=3$.
\end{proof}
\begin{rmk}
{\rm
Although the metric $\rg_2$ given by (\ref{gSn}) is clearly conformally flat, it is by no means trivial that the same is true for the St\"ackel metric, $\rg$, given by (\ref{MetricDualMoser}) on $S^n$.
}
\end{rmk}
We are now in position to prove the following proposition.
\begin{pro}\label{ProConfQ}
The conformally equivariant quantization procedure (\ref{QuantHalfHalf}) does preserve quantum integrability of 
the dual Moser system.
\end{pro}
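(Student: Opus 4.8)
The plan is to apply the commutator formula of Proposition \ref{DualMoserConfEquivQuantInt} to the pair $P=I_k$, $Q=I_\ell$, which Poisson-commute by Theorem \ref{ProJacobi} and are purely quadratic (so no degree-zero terms intervene). This gives $[\cQ(I_k),\cQ(I_\ell)]=\ii\,\cQ(A_{I_k,I_\ell}+V_{I_k,I_\ell})$, and since $\cQ\equiv\cQ_{\half,\half}$ is a linear isomorphism, quantum integrability in the sense of Definition \ref{defQuantInt} is equivalent to the identical vanishing of the degree-one symbol $A_{I_k,I_\ell}+V_{I_k,I_\ell}$. I would establish $A_{I_k,I_\ell}=0$ and $V_{I_k,I_\ell}=0$ separately.

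The tensorial correction $A_{I_k,I_\ell}$ is handled exactly as in Proposition \ref{RobertsonCond}. By Lemma \ref{LemmaDualMoser} the Ricci tensor of the St\"ackel metric $\rg$ is diagonal in the separating coordinates $(x^i)$, so the Robertson condition $R_{ij}=0$ for $i\neq j$ holds; as the Killing tensors $I_k$ are simultaneously diagonal in these coordinates, (\ref{StackelBtensor}) forces $B_{I_k,I_\ell}=0$ and hence $A_{I_k,I_\ell}=0$ by (\ref{Acorr}).

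The substantive step is the vanishing of the scalar correction $V_{I_k,I_\ell}$ of (\ref{VPQ}); in contrast with the Neumann-Uhlenbeck case I do not expect the quantum scalars $f(I_k)$ of (\ref{f(P)}) to be constant. The observation to exploit is that, for a quadratic symbol $P$ and a function $g=g(x)$, one has $\{P,g\}=2P^{jk}(\partial_j g)\xi_k$, so that (\ref{VPQ}) reads, up to the sign convention for the bracket, $V_{I_k,I_\ell}=\{I_k,f(I_\ell)\}-\{I_\ell,f(I_k)\}$; using $\{I_k,I_\ell\}=0$ and $\{f(I_k),f(I_\ell)\}=0$ this equals $-\{I_k-f(I_k),\,I_\ell-f(I_\ell)\}$. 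Thus $V_{I_k,I_\ell}$ vanishes precisely when the shifted quantities $I_k-f(I_k)$ still Poisson-commute, i.e. when each $f(I_k)$ is an admissible St\"ackel potential for the separating coordinates, in the sense of Remark \ref{potJacobi}, item 3, and \cite{Per}. Concretely I would compute $f(I_k)$ from (\ref{f(P)})-(\ref{c1c2c3}) by inserting the diagonal curvatures (\ref{RicciDual-Moser})-(\ref{R-Moser}) and the trace $\rg_{ij}I_k^{ij}=\sum_i\sigma^i_{k-1}(x)=(n-k+1)\sigma_{k-1}(x)$, reduce the curvature terms using the symmetric-function identities $\sum_i\sigma^i_{k-1}=(n-k+1)\sigma_{k-1}$, $\sum_i x^i\sigma^i_{k-1}=k\sigma_k$ and $\partial_i\sigma_k=\sigma^i_{k-1}$, and finally verify the involution criterion $\sigma^i_{k-1}\partial_i f(I_\ell)=\sigma^i_{\ell-1}\partial_i f(I_k)$.

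The main obstacle will be the Laplacian term $c_1\,\Delta_\rg(\rg_{ij}I_k^{ij})=c_1(n-k+1)\Delta_\rg\sigma_{k-1}(x)$: evaluating the Laplace operator of the curved diagonal St\"ackel metric $\rg_i=\trg_i/x^i$ on the elementary symmetric function $\sigma_{k-1}(x)$ is the only genuinely non-algebraic ingredient, and it is here that a residue computation in the spirit of Lemma \ref{LemmaDualMoser} should be needed to bring $\Delta_\rg\sigma_{k-1}$ into the span of the symmetric functions, so that the cancellations yielding the St\"ackel-potential form of $f(I_k)$ become visible. Once $V_{I_k,I_\ell}=0$ is secured, $A_{I_k,I_\ell}+V_{I_k,I_\ell}=0$ gives $[\cQ(I_k),\cQ(I_\ell)]=0$ for all $k,\ell=1,\ldots,n$, which is the desired quantum integrability.
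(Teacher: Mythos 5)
Your proposal is correct and follows essentially the same route as the paper's proof: the tensorial correction $A_{I_k,I_\ell}$ vanishes by the Robertson condition (diagonal Ricci tensor, Lemma \ref{LemmaDualMoser}), and the scalar correction $V_{I_k,I_\ell}$ vanishes after explicitly computing $f(I_k)$ --- the paper obtains $f(I_k)=2c_1\big[(n+2)\left(\sigma_k(x)\sigma_1(x)-\sigma_{k+1}(x)\right)-k(n-k+1)\,\sigma_{k+1}(a)\big]$, whence $\partial_i f(I_k)=2(n+2)c_1[x^i+\sigma_1(x)]\,\sigma^i_{k-1}(x)$, which makes your termwise involution criterion $\sigma^i_{k-1}\partial_i f(I_\ell)=\sigma^i_{\ell-1}\partial_i f(I_k)$ manifest. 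Your reinterpretation of $V_{I_k,I_\ell}=0$ as the statement that the $f(I_k)$ are admissible St\"ackel potentials, so that the shifted observables $I_k-f(I_k)$ remain in involution (Remark \ref{potJacobi}), is a pleasant conceptual gloss, but the verification it demands coincides with the paper's computation.
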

\begin{proof}
%For the minimal prescription the proposition follows from the diagonal structure of the Ricci tensor. 

Using the definition (\ref{f(P)}) of the scalar term in the formula (\ref{QuantHalfHalf}) for the conformally equivariant quantization of the~$I_k$, we find
\begin{eqnarray*}
f(I_k)&=&[(n-2)c_2+k(c_6-c_4)]\sigma_1(x)\,\sigma_k(x)+[kc_5-(n-2)c_2]\sigma_{k+1}(x)\\[4mm]
&&-kc_4\sigma_1(a)\,\sigma_k(x)-2k(n-k+1)c_1\,\sigma_{k+1}(a)
\end{eqnarray*}
where $c_4$ was already defined in (\ref{c4}) and shown to vanish in the proof of Proposition~\ref{ProNeumannUhlenbeck}; we also have
\[c_5=2(n+2)c_1-(n-2)c_2,\qquad\qquad c_6=-2c_1+c_2+2(n-1)c_3.\]
Taking into account the relations (\ref{c1c2c3}) one gets $c_5=c_6=0$, and we are left, 
for $\,k=1,\ldots,n$, with
\begin{equation}
f(I_k)=
2c_1\Big[(n+2)\left[\sigma_k(x)\sigma_1(x)-\sigma_{k+1}(x)\right]-k(n-k+1)\,\sigma_{k+1}(a)\Big]
\label{fIk-Dual-Moser}
\end{equation}
where we posit $\sigma_{n+1}(x)=0$.

Let us now compute $V_{I_k,I_\ell}$ defined by (\ref{VPQ}) needed to check quantum integrability via the commutator (\ref{CommPQ}). 

In view of (\ref{IkDualMoser}), one finds
$$
V_{I_k,I_\ell}=2\sum_{i=1}^n
{\rg^i\Big(\sigma^i_{k-1}\partial_i f(I_\ell)
-\sigma^i_{\ell-1}\partial_i f(I_k)\Big)\xi_i
}.
$$
Now, using the relations \cite{Ben}
$$
\partial_i\sigma_k(x)=\sigma^i_{k-1}(x),\qquad 
%%%-%%% k=1,\ldots,n,
\forall i,k=1,\ldots,n,
$$
and
$$
\sigma_k(x)=\sigma^i_k(x)+x^i\sigma^i_{k-1}(x),\qquad 
\forall
k=1,\ldots,n-1,
%%%---%%%\qquad \sigma_n(x)=x^i\,\sigma^i_{n-1}(x)
$$
as well as
$$
\sigma_n(x)=x^i\,\sigma^i_{n-1}(x),
$$
one gets
$
\partial_i f(I_k)=2(n+2)c_1[x^i+\sigma_1(x)]\sigma^i_{k-1}(x)
$
which obviously yields
$V_{I_k,I_\ell}=0$, implying, at last
\begin{equation}
\left[\cQ(I_k),\cQ(I_\ell)\right]=0
\label{QIkIlDual-Moser=0}
\end{equation}
for all $k,\ell=1,\ldots,n$.
\end{proof}

\begin{rmk}
{\rm
%%%--%%% The minimal 
Carter's (minimal) prescription (\ref{MinQ}) also leads to quantum integra\-bility of the system because of the diagonal form (\ref{RicciDual-Moser}) of the Ricci tensor in the separating coordinates. Now, in contradistinction with the Neumann-Uhlenbeck quantum system, the scalar terms $f(I_k)$ given by (\ref{fIk-Dual-Moser}) are no longer constant, yielding quite different quantum observables $\cQ(I_k)$ and $\hat{I_k}$. So, the fact that quantum integrability is  
%%%--%%% preserved by the conformally equivariant quantization and the minimal prescription is a new and noteworthy phenomenon.
not only preserved by Carter's quantum prescription, but also by conformally equivariant quantization is a new and noteworthy phenomenon.
} 
\end{rmk}

%%%%%%%%%%%%%%%%%%%%%%%%%%%%%%%%%%%%%%%%%%%%%%%%%%%%%%%%%%%%%%%%%%%%%%%%%%%%%%%%%%
\subsection{The quantum Jacobi-Moser system}\label{QuantJacobiMoser}
%%%%%%%%%%%%%%%%%%%%%%%%%%%%%%%%%%%%%%%%%%%%%%%%%%%%%%%%%%%%%%%%%%%%%%%%%%%%%%%%%%
The St\"ackel metric, associated to $I_1$ is now (see Table \ref{Table}):
\begin{equation}
\rg=\sum_{i=1}^n{\rg_i(x)(dx^i)^2},\qquad\mbox{with}\qquad \rg_i(x)=x^i\,\trg_i(x),
\label{MetricJacobiMoser}
\end{equation}
where the Neumann-Uhlenbeck metric $\trg_i(x)$ is given by (\ref{MetricNeumann}), and the first integrals by
\begin{equation}
I_k=\sum_{i=1}^n{\rg^i(x)\sigma^i_{k-1}(x)\xi_i^2},
\qquad\quad 
\rg^i(x)=\frac{1}{\rg_i(x)}.
\label{IkJacobiMoser}
\end{equation}
 
\begin{lem}
The Ricci tensor and the scalar curvature of the metric (\ref{MetricJacobiMoser}) are given by
\begin{equation}
R_{ij}=\frac{\sigma_{n+1}(a)}{\sigma_n^2(x)}\,\sigma^i_{n-2}(x)\,\rg_i\,\delta_{ij},\qquad
R=2\,\frac{\sigma_{n+1}(a)}{\sigma_n^2(x)}\,\sigma_{n-2}(x).
\label{Ricci-RJacobiMoser}
\end{equation}
\end{lem}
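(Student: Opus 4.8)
The plan is to compute the Ricci tensor and scalar curvature of the Jacobi-Moser St\"ackel metric $\rg_i(x)=x^i\,\trg_i(x)$ directly, following \emph{mutatis mutandis} the strategy already used for the dual Moser metric in Lemma \ref{LemmaDualMoser}. Since the metric is diagonal, I would first record that the only possibly non-vanishing Riemann components are $R_{ik,kj}$ with $i\neq j\neq k$ and $R_{ij,ji}$ with $i\neq j$, and invoke the classical formul\ae\ for a diagonal metric (as in \cite{LPEbook}, p.~119). The essential geometric input is that $\trg$ is the \emph{round} sphere metric, so its Ricci tensor is $\widetilde{R}_{ij}=(n-1)\trg_i\,\delta_{ij}$ as in (\ref{Ricci+R-Neumann}); the present metric $\rg$ is obtained from $\trg$ by the diagonal scaling $\rg_i=x^i\trg_i$, which is \emph{not} a global conformal rescaling, so the curvature must be recomputed.

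First I would establish the analogues of the derivative identities used in the dual Moser proof, namely $\partial_i(\ln\rg_j)=1/(x^i-x^j)+\delta\,(\text{terms from the }x^i\text{ factor})$ for $i\neq j$ and the vanishing of mixed third logarithmic derivatives, being careful that here each $\rg_j=x^j\trg_j$ carries the extra logarithmic term $\partial_i\ln(x^j)=\delta_{ij}/x^j$. This extra factor is exactly what distinguishes the Jacobi-Moser case ($\rg_i=x^i\trg_i$) from the dual Moser case ($\rg_i=\trg_i/x^i$) and will feed through the curvature formul\ae\ with the opposite sign/power. I would then show $R_{ik,kj}=0$ for $i\neq j\neq k$, which gives $R_{ij}=-\sum_k \rg^k R_{ik,kj}=0$ for $i\neq j$, so the Ricci tensor is automatically diagonal (this is the Robertson condition needed for Proposition \ref{RobertsonCond}).

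The computational heart is the evaluation of the surviving components $R_{ik,ik}$, which reduce to a rational sum over the separating coordinates; as in Lemma \ref{LemmaDualMoser} this sum is best handled by the residue theorem, converting $\sum_k (x^k)^m/U'_x(x^k)$-type expressions into symmetric functions. After contracting to obtain the diagonal Ricci components, I expect the answer to collapse into the compact form (\ref{Ricci-RJacobiMoser}) involving $\sigma_{n+1}(a)$, $\sigma_n(x)$, and $\sigma^i_{n-2}(x)$; the scalar curvature $R=\sum_i \rg^i R_{ii}$ then follows by a further symmetric-function contraction using $\sum_i \sigma^i_{n-2}(x)\cdot(\text{appropriate weight})=2\sigma_{n-2}(x)$. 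The main obstacle will be bookkeeping in this residue/symmetric-function calculation: tracking the powers of $x^i$ introduced by the scaling and correctly resumming the resulting rational expressions into the stated symmetric polynomials, where a single sign or index-shift error would corrupt the final closed form. A useful internal check is that the off-diagonal Ricci components must vanish (Robertson) and that the structure should mirror, with the reciprocal scaling $x^i\leftrightarrow 1/x^i$, the dual Moser result (\ref{RicciDual-Moser}), consistent with the ``mirror image'' relationship displayed in Table \ref{Table}.
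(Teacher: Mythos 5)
Your proposal follows exactly the paper's route: the authors' entire proof of this lemma is the remark that it is ``completely similar to the proof of Lemma \ref{LemmaDualMoser}'', i.e.\ the diagonal-metric curvature formul\ae\ of \cite{LPEbook}, the logarithmic-derivative identities (where, as you correctly note, the extra factor $x^j$ only affects the diagonal derivatives since $\partial_i\ln(x^j)=\delta_{ij}/x^j$), the vanishing of $R_{ik,kj}$ giving a diagonal Ricci tensor, and the residue-theorem evaluation of the surviving diagonal sums. Your plan, including the internal checks against the Robertson condition and the dual Moser ``mirror'' structure, is a sound and faithful rendering of that same argument.
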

\begin{proof} It is completely similar to the proof of Lemma \ref{LemmaDualMoser}.
\end{proof}

This allows us to prove:
\begin{pro}
Carter's prescription (\ref{MinQ}) preserves quantum integrability of the Jacobi-Moser system, while the conformally equivariant quantization does not.
\end{pro}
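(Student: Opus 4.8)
The plan is to base both assertions on the single geometric fact, supplied by the preceding Lemma, that the Ricci tensor (\ref{Ricci-RJacobiMoser}) of the Jacobi-Moser St\"ackel metric (\ref{MetricJacobiMoser}) is \emph{diagonal} in the separating coordinates $(x^i)$. The contrast between the two quantization schemes will then lie entirely in the scalar correction $f$ of (\ref{f(P)}), which Carter's prescription lacks. First I would observe that, just as in the dual Moser computation, diagonality of $R_{ij}$ inserted into (\ref{StackelBtensor}) forces $B_{I_k,I_\ell}=0$, and hence $A_{I_k,I_\ell}=0$ by (\ref{Acorr}), for all $k,\ell=1,\ldots,n$; this common vanishing is what makes the two cases diverge only through $f$.

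For Carter's prescription (\ref{MinQ}) this already settles the matter. Since the minimal quantization carries no scalar term, the identity $-\ii[\widehat{P},\widehat{Q}]=\widehat{\{P,Q\}}+\widehat{A}_{P,Q}$ recalled in the proof of Proposition~\ref{DualMoserConfEquivQuantInt}, applied to $P=I_k$, $Q=I_\ell$ with $\{I_k,I_\ell\}=0$ and $A_{I_k,I_\ell}=0$, yields $[\widehat{I_k},\widehat{I_\ell}]=0$. This is exactly the Robertson condition of Proposition~\ref{RobertsonCond}, under which Carter's quantization preserves integrability.

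For the conformally equivariant quantization $\cQ=\cQ_{\half,\half}$ the vanishing of $A_{I_k,I_\ell}$ collapses the commutator (\ref{CommPQ}) to $[\cQ(I_k),\cQ(I_\ell)]=\ii\,\cQ(V_{I_k,I_\ell})$, with $V_{I_k,I_\ell}=2\sum_i\rg^i\big(\sigma^i_{k-1}\partial_i f(I_\ell)-\sigma^i_{\ell-1}\partial_i f(I_k)\big)\xi_i$ as in (\ref{VPQ}). Since $\cQ$ is a linear isomorphism, it is enough to produce one pair with $V_{I_k,I_\ell}\neq0$. I would first record the clean criterion obtained by taking $\ell=1$ and using $\sigma^i_0=1$: the whole family $V_{I_k,I_\ell}$ vanishes \emph{iff} $\partial_i f(I_k)=\sigma^i_{k-1}(x)\,\partial_i f(I_1)$ for every $i,k$. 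This is precisely the relation that did hold for dual Moser, where $\partial_i f(I_k)=2(n+2)c_1[x^i+\sigma_1(x)]\sigma^i_{k-1}(x)$; the goal is to show it fails here.

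The computation of $f(I_k)$ uses $\Tr(I_k)=\sum_i\sigma^i_{k-1}=(n-k+1)\sigma_{k-1}(x)$ together with the diagonal curvature data (\ref{Ricci-RJacobiMoser}) in (\ref{f(P)}). For $k=1$ the trace $\Tr(I_1)=n$ is constant, the Laplacian term drops, and one finds $f(I_1)=2(c_2+nc_3)\,\sigma_{n+1}(a)\,\sigma_{n-2}(x)/\sigma_n^2(x)$, a genuinely non-constant function (in sharp contrast with the constant $f(I_k)$ of the Neumann-Uhlenbeck case). The main obstacle is the Laplacian contribution $c_1\Delta_\rg\Tr(I_k)=c_1(n-k+1)\Delta_\rg\sigma_{k-1}(x)$ for the metric $\rg_i=x^i\trg_i$, which I would evaluate by the residue technique used throughout the paper, writing $\Delta_\rg=|\rg|^{-1/2}\sum_i\partial_i(|\rg|^{1/2}\rg^i\partial_i)$ and summing over the poles at $\lambda=x^i$. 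The expectation is that the rational curvature factor $\sigma_{n+1}(a)\sigma_{n-2}(x)/\sigma_n^2(x)$, unlike the polynomial curvature of dual Moser, prevents $\partial_i f(I_k)$ from being proportional to $\sigma^i_{k-1}(x)$ with a $k$-independent coefficient. To make the failure of the criterion incontrovertible without the full general-$n$ algebra, I would verify it in the lowest admissible dimension $n=3$ (compatible with the $n\geq3$ hypothesis of the conformally equivariant quantization), comparing $\partial_i f(I_2)$ against $\sigma^i_1(x)\,\partial_i f(I_1)$ and exhibiting a single nonzero component of $V_{I_2,I_1}$, whence $[\cQ(I_2),\cQ(I_1)]\neq0$.
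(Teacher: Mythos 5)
Your overall architecture is the same as the paper's: diagonal Ricci tensor $\Rightarrow$ $B_{I_k,I_\ell}=0$ $\Rightarrow$ Carter's prescription preserves integrability, and for the conformally equivariant quantization the commutator collapses to $\ii\,\cQ(V_{I_k,I_\ell})$, so everything hinges on exhibiting a nonzero $V$. Your Carter half is complete and correct, and your reduction of the second half to the criterion $\partial_i f(I_k)=\sigma^i_{k-1}(x)\,\partial_i f(I_1)$ (obtained by taking $\ell=1$, $\sigma^i_0=1$) is a clean and valid reformulation; your computation of $f(I_1)=2(c_2+nc_3)\,\sigma_{n+1}(a)\,\sigma_{n-2}(x)/\sigma_n^2(x)$ also agrees with the paper.

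However, there is a genuine gap at precisely the decisive step: you never actually compute $f(I_2)$ (in particular the Laplacian contribution $c_1(n-1)\Delta_\rg\sigma_1(x)$, which is where the real work lies), and you never exhibit a nonzero component of $V_{I_1,I_2}$. Phrases such as ``the expectation is that the rational curvature factor \ldots prevents $\partial_i f(I_k)$ from being proportional to $\sigma^i_{k-1}(x)$'' and ``I would verify it in the lowest admissible dimension $n=3$'' are a plan, not a proof: plausibility of failure is not failure, especially since in the superficially similar dual Moser and Neumann--Uhlenbeck cases the analogous ``obstruction'' cancels identically thanks to the identities $c_4=c_5=c_6=0$ among the coefficients (\ref{c1c2c3}). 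One cannot rule out a similar conspiracy here without doing the algebra. The paper closes this gap by computing
$f(I_2)=(n-1)c_3\,\frac{\sigma_{n+1}(a)}{\sigma^2_n(x)}\bigl[2(n-1)\,\sigma_{n-1}(x)-n\,\sigma_1(x)\,\sigma_{n-2}(x)\bigr]-2n(n-1)\,c_1$
and then showing that $V^i_{I_1,I_2}=\partial_i f(I_2)-(\sigma_1(x)-x^i)\,\partial_i f(I_1)$ equals an explicitly nonvanishing rational expression proportional to $c_3\,\sigma_{n+1}(a)/(x^i\sigma_n^2(x))$. Note also that even if you carried out your fallback $n=3$ check, you would only have disproved quantum integrability in dimension $3$, whereas the proposition concerns the Jacobi--Moser system for every $n\geq3$; the paper's counter-example is uniform in $n$.
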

\begin{proof}
Since the Ricci tensor is diagonal, quantum integrability is established for the 
%%%--%%% minimal 
prescription (\ref{MinQ}). 
For the conformally equivariant quantization (\ref{QuantHalfHalf}), we will just give a counter-example. One has
\[
f(I_1)=2(c_2+nc_3)\,\sigma_{n+1}(a)\,\frac{\sigma_{n-2}(x)}{\sigma^2_n(x)}
\]
and
\[
f(I_2)=(n-1)c_3\,\frac{\sigma_{n+1}(a)}{\sigma^2_n(x)}\Big[2(n-1)\,\sigma_{n-1}(x)
-n\,\sigma_1(x)\,\sigma_{n-2}(x)\Big]-2n(n-1)\,c_1.
\]
A simple computation gives
$V^i_{I_1,I_2}=\partial{V_{I_1,I_2}}/{\partial\xi_i}=\partial_i f(I_2)-(\sigma_1(x)-x^i)\partial_i f(I_1)$,
hence the non-vanishing result
\begin{eqnarray*}
\displaystyle 
V^i_{I_1,I_2}&=&(-c_3)\frac{\sigma_{n+1}(a)}{x^i\,\sigma^2_n(x)}\Big[-2(\sigma_1^i(x)\sigma_{n-2}(x)+
\sigma_1(x)\sigma^i_{n-2}(x))\\[4mm]
&&
+(n^2-3n+4)\sigma_{n-1}(x)+n(3n-5)\sigma^i_{n-1}(x)\Big],
\end{eqnarray*}
%%%---%%% For $n=3$ it does simplify to
%%%---%%% \[
%%%---%%% V^i_{I_1,I_2}=-\frac 9{20}\frac{a_0a_1a_2a_3}{(x^1x^2x^3)^2}\,
%%%--%%% \frac{(x^j-x^k)^2}{x^i}\qquad (i\neq{}j\neq{}k)
%%%---%%% \]
showing that the system looses its quantum integrability via conformally equivariant quantization.
\end{proof}

\begin{rmk}
{\rm
Let us mention that quantum integrability of the Neumann-Uhlen\-beck and Jacobi-Moser systems has first been established, in terms of Carter's quantum prescription~(\ref{MinQ}), by Toth \cite{Tot}.
}
\end{rmk}

%%%%%%%%%%%%%%%%%%%%%%%%%%%%%%%%%%%%%%%%%%%%%%%%%%%%%%%%%%%%%%%%%%%%%%%%%%%%%%%%%%
%%%%%%%%%%%%%%%%%%%%%%%%%%%%%%%%%%%%%%%%%%%%%%%%%%%%%%%%%%%%%%%%%%%%%%%%%%%%%%%%%%
\section{Conclusion and outlook}\label{Conclusion}
%%%%%%%%%%%%%%%%%%%%%%%%%%%%%%%%%%%%%%%%%%%%%%%%%%%%%%%%%%%%%%%%%%%%%%%%%%%%%%%%%%
%%%%%%%%%%%%%%%%%%%%%%%%%%%%%%%%%%%%%%%%%%%%%%%%%%%%%%%%%%%%%%%%%%%%%%%%%%%%%%%%%%

To sum up the main results of the article, let us mention that we have disclosed a new integrable system on $S^n$, in duality with the well-known Jacobi-Moser system in terms of projective equivalence. As opposed to that of the generic ellipsoid, the ``dual'' metric is conformally flat. This remarkable fact enables us to have naturally recourse to conformally equivariant quantization. The latter turns out to preserve integrability at the quantum level. It is, to our knowledge, the first instance of conformally driven quantum integrability.

This opens new perspectives related, e.g., to the determination of the conditions  under which a classically integrable system, stemming from second-order Killing tensors on a conformal\-ly flat configuration manifold, remains quantum-integrable via the conformally equi\-variant quantization. Also, possible generalizations of the Jacobi-Moser system and its dual counterpart might conceivably be put to light in a similar manner.

\bigskip

\textbf{Acknowledgements}: We express our deep gratitude to Serge Tabachnikov and Valentin Ovsienko for helpful discussions which have triggered this work. We are also indebted to John Harnad for useful correspondence. 

\goodbreak

%%%%%%%%%%%%%%%%%%%%%%%%%%%%%%%%%%%%%%%%%%%%%%%%%%%%%%%%%%%%%%%%%%%%%%%%%%%%%%%%%%
%%%%%%%%%%%%%%%%%%%%%%%%%%%%%%%%%%%%%%%%%%%%%%%%%%%%%%%%%%%%%%%%%%%%%%%%%%%%%%%%%%


\begin{thebibliography}{99}
%%%%%%%%%%%%%%%%%%%%%%%%%%%%%%%%%%%%%%%%%%%%%%%%%%%%%%%%%%%%%%%%%%%%%%%%%%%%%%%%%%
%%%%%%%%%%%%%%%%%%%%%%%%%%%%%%%%%%%%%%%%%%%%%%%%%%%%%%%%%%%%%%%%%%%%%%%%%%%%%%%%%%

\bibitem{AM}
R. Abraham, and J.E. Marsden,
\textsl{Foundations of Mechanics},
Second Edition,
Addison-Wesley Publishing Company, Inc (1987).

%\bibitem{BBT}
%O. Babelon, D. Bernard, M. Talon,
%{\em Introduction to Classical Integrable Systems},
%Cambridge University Press (2003).

\bibitem{BT}
M. Bellon, M. Talon,
{\it The quantum Neumann model: refined semiclassical results}, Phys. Lett. A, {\bf 356} 
(2006) 110--114, and references therein.

\bibitem{Ben}
S. Benenti,
{\it Inertia tensors and St\"ackel systems in the Euclidean spaces}, Rend. Sem. 
Mat. Univ. Pol. Torino {\bf 50} (1992) 315--341.

\bibitem{BCR1}
S. Benenti, C. Chanu, and G. Rastelli,
{\it Remarks on the connection between the additive separation of the Hamilton-Jacobi 
equation and the multiplicative separation of the Schr\"odinger equation. I. The 
completeness and Robertson conditions}, J. Math. Phys. {\bf 43} (2002) 5183--5222.

\bibitem{BCR2}
S. Benenti, C. Chanu, and G. Rastelli,
{\it Remarks on the connection between the additive separation of the Hamilton-Jacobi 
equation and the multiplicative separation of the Schr\"odinger equation. II. First 
integrals and symmetry operators}, J. Math. Phys. {\bf 43} (2002) 5223--5253.

\bibitem{BM}
A.V. Bolsinov, V.S. Matveev, 
{\it Geometrical interpretation of Benenti systems}, 
J. Geom. Phys., {\bf 44}:4 (2003) 489--506.
 	
\bibitem{Car}
B. Carter,
{\it Killing tensor quantum numbers and conserved currents in curved space}, 
Phys. Rev. D{\bf 16} (1977) 3395--3414.

%%% \bibitem{Cra}
%%% M. Crampin,
%%% {\it Projectively equivalent Riemannian Spaces as Quasi-bi-Hamiltonian Systems},
%%% Acta Appl. Math. \textbf{77} (2003) 237--248.


%\bibitem{DKN}
%B.A. Dubrovin, I.M. Krichever, S.P. Novikov,
%{\it Integrable Systems I}, in Dynamical Systems IV, Springer-Verlag, Berlin-Heidelberg (1990).


\bibitem{DV}
C. Duval, G. Valent,
{\it Quantum Integrability of Quadratic Killing Tensors},
%\texttt{CPT-2004/P.120},
%\texttt{math-ph/0412059},
J. Math. Phys. \textbf{46} (2005) 053516(22).

\bibitem{DO}
C. Duval, V. Ovsienko,
{\it Conformally equivariant quantum Hamiltonians}, 
Selecta Math. (N.S.) {\bf 7}:3 (2001) 291--320.


\bibitem{DLO}
C. Duval, P. Lecomte, V. Ovsienko,
{\it Conformally equivariant quantization: existence and uniqueness},
Ann. Inst. Fourier. {\bf  49}:6 (1999) 1999--2029.

%\bibitem{LPE}
%L. P. Eisenhart, 
%{\it Separable systems of St\"ackel}, 
%Ann. of Math. {\bf 35}:2 (1934) 284--305.

\bibitem{LPEbook}
L. P. Eisenhart,
{\sl Riemannian geometry}, Princeton Landmarks in Mathematics and Physics, 
Princeton University Press (1997).

\bibitem{FP}
G. Falqui, M. Pedroni,
{\it Separation of Variables for Bi-Hamiltonian Systems}, 
Math. Phys. Anal. Geom. {\bf 6} (2003) 139--179.

%\bibitem{HW}
%J. Harnad, P. Winternitz,
%{\it Classical and quantum integrable systems in $\widetilde{\mathfrak{gl}}(2)^{+*}$ and separation of variables},
%Comm. Math. Phys. {\bf 172}:2 (1995) 263--285.

\bibitem{IMM}
A. Ibort, F. Magri, G. Marmo,
{\it Bihamiltonian structures and St\"ackel separability},
J. Geom. Phys. {\bf 33} (2000) 210--228.


\bibitem{Kun}
H.P.~K\"unzle, 
``Galilei and Lorentz structures on space\-time:
Comparison of the corresponding geometry and physics,'' 
Ann. Inst. H. Poincar\'e, 
Phys. Th\'eor {\bf 17} (1972),  337--362.

\bibitem{LO}
P.B.A. Lecomte and V. Ovsienko,
{\it Projectively invariant symbol calculus}, Lett. Math. Phys. 
{\bf 49}:3 (1999) 173--196.

\bibitem{LevCiv}
T. Levi-Civita,
{\it Sulle trasformazioni delle equazioni dinamiche}, Ann. Mat. Ser. 2a 
{\bf 24} (1896) 255--300.

%\bibitem{LD}
%S.E. Loubon Djounga, 
%{\it Conformally invariant quantization at order three},
%Lett. Math. Phys. {\bf 64}:3 (2003) 203--212.

\bibitem{Mag}
F. Magri, 
{\it A simple model of the integrable Hamiltonian equation},
J. Math. Phys. {\bf 19}:5 (1978) 1156--1162.


\bibitem{MT}
V.S.~Matveev, P.J.~Topalov,
{\it Quantum integrability of Beltrami-Laplace operator as geodesic equivalence}, Math. Z. {\bf 238} (2001) 833--866.

\bibitem{Mos}
J. Moser,
{\it Various aspects of the integrable Hamiltonian systems}, in Dynamical 
systems (C.I.M.E. Summer School Bressanone; 1978), Progress in Mathematics, 
Birkh\"auser {\bf 8} (1981) 233--289.

%\bibitem{Mos2}
%J. Moser,
%{\it Geometry of quadrics and spectral theory}, in Proceedings of the Chern Symposium, 
%Berkeley 1979, Springer (1980) 147--188.

\bibitem{OT}
V. Ovsienko, and S. Tabachnikov, 
{\em Projective Differential Geometry Old And New: From The Schwarzian Derivative To The Cohomology Of Diffeomorphism Group}, 
Cambridge University Press (2005).

\bibitem{MR}
J.-E. Marsden, T. Ratiu, 
{\em Introduction to Mechanics and Symmetry}, Springer, 1999.

\bibitem{Per}
A.M. Perelomov,
{\em Integrable Systems of Classical Mechanics and Lie Algebras}, Vol I,
Birkh\"auser (1990), and references therein.

\bibitem{Rob}
H. P. Robertson, 
{\it Bermerkung \"uber separierbare Systeme in der Wellenmechanik},
Math. Annal. {\bf 98} (1927) 749--752.

\bibitem{Skl}
E.K. Sklyanin,
{\it The quantum Toda chain},
in Non-linear equations in classical
and quantum field theory. Springer Notes in Physics, vol. 226, (1985).

%\bibitem{Sou}
%J.-M.~Souriau,
%{\em Structure des syst\`emes dynamiques},
%Dunod (1970, \copyright 1969);
%{\em Structure of Dynamical Systems. A Symplectic View of Physics},
%translated by C.H.~Cushman-de Vries (R.H.~Cushman and G.M.~Tuynman,
%Translation Editors), Birkh\"auser (1997).

\bibitem{Tab1}
S. Tabachnikov, 
{\it Projectively equivalent metrics, exact transverse line fields and the geodesic flow on the ellipsoid},
Comm. Math. Helv. {\bf 74}:1 (1999) 306--321.

\bibitem{Tab2}
S. Tabachnikov, 
{\it Ellipsoids, complete integrability and hyperbolic geometry},
Moscow Mathematical Journal,
{\bf 2}:1, (2002) 185--198.

\bibitem{TM}
P.J. Topalov, V.S. Matveev,
{\it Geodesic equivalence and integrability},
MPIM preprint series, no. 74 (1998), 
\texttt{arXiv:math/9911062v1 [math.DG]}.

\bibitem{Tot}
J.A. Toth, 
{\it Various quantum mechanical aspects of quadratic forms},
J. Funct. Anal. {\bf 130}:1 (1995) 1--42.

\bibitem{Uhl1}
K. Uhlenbeck,
{\it Minimal $2$-spheres and tori in $S^k$},
Preprint University of Illinois at Chicago Circle (1975).

\bibitem{Uhl2}
K. Uhlenbeck,
{\it Equivariant harmonic maps into spheres},
in Proceedings of the 1980 NSF-CBMS Regional Conference on ``Harmonic Maps'', R.J.~Knill, M.~Kalka, and H.C.J.~Sealey (Eds), LNM 949 (1982) 146--158.

%\bibitem{Wei}
%S. Weigert, 
%{\it The problem of quantum integrability},
%Physica D  {\bf 56}:1  (1992) 107--119.

\end{thebibliography}
\end{document}